\documentclass[11pt]{article}
\usepackage{a4wide}
\usepackage{color,epsfig}
\usepackage{amsmath}
\usepackage{amsfonts}
\usepackage{dsfont}
\usepackage{mathrsfs}
\usepackage{amssymb}
\usepackage{amsfonts,graphicx,psfrag}
\usepackage{cite}
\usepackage{amsthm}
\usepackage{hyperref}
\usepackage{enumerate}
\usepackage{mathtools}
\usepackage{enumitem}
\usepackage{xcolor}
\usepackage{tikz}

\def\R{{\rm R}}

\def\diag{{\rm diag}}

\def\<{\langle}
\def\>{\rangle}
\def\lb{\label}

\def\pt{\partial}

\newtheorem{lemma}{Lemma}[section]
\newtheorem{theorem}[lemma]{Theorem}
\newtheorem{corollary}[lemma]{Corollary}

\newtheorem{proposition}[lemma]{Proposition}
\newtheorem{definition}[lemma]{Definition}

\newtheorem{remark}[lemma]{Remark}
\newtheorem{question}{Question}[section]

\title{Static equilibrium of the charged $N$-body problem}
\author{Xuhui Hu$^{1}$\thanks{E-mail: 22335014@zju.edu.cn.},\quad and \quad
	Qinglong Zhou$^{1,2}$\thanks{Partially supported by National Natural Science Foundation of China (No.12171426), the Natural Science Foundation of Zhejiang Province (No. LY19A010020) and the Fundamental Research Funds for the Central Universities (No. 2024FZZX02-01-01).
		E-mail: zhouqinglong@zju.edu.cn. }\\	
	$^{1}$ Department of Mathematics,\\Zhejiang University, Hangzhou 310058, Zhejiang, China\\
	$^{2}$ Institut f\"ur Mathematics,\\Universit\"at Augsburg, 86159 Augsburg, Germany\\
}

\begin{document}

\maketitle

\begin{abstract}
We study non-collision static equilibria in the charged \(N\)-body problem.
Unlike the classical Newtonian case, such equilibria may exist because the
effective interaction coefficients $\delta_{ij}=m_im_j-e_ie_j$
can be positive, negative, or zero. We give a complete classification of
three-body equilibria and prove that all of them are collinear and linearly
unstable. For the four-body problem, we derive sign restrictions for convex
and concave non-collinear equilibria, exclude non-trivial concyclic
quadrilateral equilibria, and obtain a geometric necessary condition expressed
through an auxiliary triangle and a resultant equation. We also study collinear
equilibria by reducing the force-balance equations to a homogeneous polynomial
system. This yields a resultant necessary condition and an inverse realization
theorem showing that every prescribed collinear configuration of distinct
points can be realized by suitable positive masses and real charges. 
Finally, we analyze regular polygon
configurations and centered regular polygon configurations, showing in
particular that equal masses cannot form a non-trivial regular \(N\)-gon
equilibrium, while the centered case reduces to a single scalar condition.
\end{abstract}

\renewcommand{\theequation}{\thesection.\arabic{equation}}
\section{Introduction}

\subsection{Background and motivation}

{In the charged planar $N$-body problems of celestial mechanics, the position vectors of the $N$-particles are denoted by $q_1 ,\dots , {q_N}\in \R^2$, and the masses are represented by
$m_1 ,\dots,{m_N} > 0$.} By Newton’s second law and the law of universal
gravitation, the system of equations is
\begin{align}
m_i\ddot{q_i} = \frac{\pt U}{\pt q_i}, \quad i = 1, {\dots, N},\label{1.1}
\end{align}
where $U(q) = U(q_1, {\dots, q_{N}}) = \sum_{1\leq i< j\leq {N}} \frac{m_i m_j-e_ie_j}{|q_i-q_j|}$ 
is the potential function and $|\cdot |$ is the standard norm of a vector in $\R^2$. For convenience, we define
$$
\delta_{ij}=m_im_j - e_ie_j
$$
and
$$
\lambda_{ij}=1-\frac{e_i}{m_i}\frac{e_j}{m_j}
$$
for $1\le i,j\le N$ and $i\ne j$.
The case \(\delta_{ij}>0\) (or equivalently, \(\lambda_{ij}>0\)) corresponds to an attractive effective interaction,
whereas \(\delta_{ij}<0\) (or equivalently, \(\lambda_{ij}<0\)) corresponds to a repulsive effective interaction.

This problem is called the charged $N$-body problem and can be considered as a natural
generalization of the gravitational $N$-body problem. The charged two-body problem is integrable for an arbitrary value of $\lambda_{12}$, see, e.g., \cite{BeV}, Sec 2.

For $N>2$, if at most one of the values $\lambda_{ij},1\le i<j\le N$ is nonzero, such a charged $N$-body problem can be transformed into a charged two-body problem and is therefore integrable.
Consequently, we always make the following assumption throughout this paper.

\medskip
{\bf Non-trivial Assumption}: {\it At least two of the values $\{\lambda_{ij}\;|\;1\le i<j\le N\}$ are nonzero.}
\medskip

It is {well-known} that \eqref{1.1} can be reformulated as a Hamiltonian system by 
\begin{align}
\dot{p}_i = -\frac{\pt H}{\pt q_i}, \; \dot{q}_i  = \frac{\pt H}{\pt p_i}, \quad \mbox{for} \;i = 1, {\dots, N}, \label{1.2}
\end{align}
with the Hamiltonian function
\begin{align} 
H(p, q) = \sum_{i =1}^{{N}} \frac{|p_i|^2}{2m_i} - U(q_1,{\dots, q_{N}}).\lb{1.3}
\end{align}


Finding solutions to the equations \eqref{1.1} is in general very difficult. 
An important building block for such solutions are central configurations which are special points in configuration space that have the property that for each body, the force is proportional to the distance vector of the body to the common center of masses, where the proportionality constant is the same for all bodies.

\begin{definition}\label{def:cc}
    A configuration $a = (a_1, a_2,\ldots, a_N)\in(\mathbb{R}^k)^N$ with $a_i\ne a_j$ , $\forall1\le i<j\le N$ is
a central configuration for the given mass $m = (m_1, m_2,\ldots, m_N)\in(\mathbb{R}^+)^N$ and the quantities of
charges $e = (e_1,e_2,\ldots, e_N) \in\mathbb{R}^N$, if there exists some $\lambda\in\mathbb{R}$ such that $(a, \lambda)$ is a solution of the
algebraic system
\begin{equation}\label{eq.of.cc}
    \lambda Ma+\frac{\partial U(a)}{\partial q}=0,
\end{equation}
with $M = \diag(m_1I_k,\ldots, m_nI_k)$. By the homogeneity of U of degree $-1$, \eqref{eq.of.cc} implies
$$
\lambda={U(a)}/{(Ma\cdot a)}.
$$
\end{definition}

\begin{definition}\label{def:e}
    A configuration $a = (a_1, a_2,\ldots, a_N)\in(\mathbb{R}^k)^N$ with $a_i\ne a_j$ , $\forall1\le i<j\le N$ is
an equilibrium for the given mass $m = (m_1, m_2,\ldots, m_N)\in(\mathbb{R}^+)^N$ and the quantities of
charges $e = (e_1,e_2,\ldots, e_N) \in\mathbb{R}^N$, 
if $a = (a_1, a_2,\ldots, a_N)$ is a solution of the
algebraic system
\begin{equation}\label{eq.of.e}
    \frac{\partial U(a)}{\partial q}=0.
\end{equation}
By the homogeneity of U of degree $-1$, \eqref{eq.of.e} implies
$$
U(a)=0.
$$
\end{definition}

By the above two definitions, an equilibrium is a special case of central configuration.
In the gravitational $N$-body problem, equilibrium states never exist due to $\lambda={U(a)}/{(Ma\cdot a)} > 0$. However, in the charged $N$-body problem, equilibrium states can potentially exist. 

Central configurations in charged problems have been studied previously in
several settings. To mention only a few representative works, we refer to
studies of the charged three-body problem, the charged isosceles three-body
problem, continua of central configurations in charged problems, linear
stability of charged three-body relative equilibria, and four-body
configurations with gravitational charges of both signs; see
\cite{Atela1988,PSSY96,AlP2002,AlP2008,PiL2012, ZhL2015}.
However, the static equilibria considered in this paper, defined by
\(\nabla U=0\), appear to have been much less systematically studied.
Our main objective is to investigate the conditions for the existence of 
non-collision static equilibria and their properties in the charged N-body problem.


The present work is also motivated by the classical \(N\)-center problem.
In the planar \(N\)-center problem, one studies the motion of a test particle
in the gravitational field generated by \(N\) fixed primary centers. More
precisely, if the fixed centers are located at \(c_1,\ldots,c_N\in\mathbb C\)
with positive masses \(m_1,\ldots,m_N\), and if
\[
    X=\mathbb C\setminus\{c_1,\ldots,c_N\},
\]
then the position \(x(t)\in X\) of the test particle satisfies an equation of the form
\[
    \ddot x(t)
    =
    -\sum_{j=1}^{N}
    \frac{m_j}{|x(t)-c_j|^{\alpha+2}}
    \bigl(x(t)-c_j\bigr),
    \qquad \alpha>0.
\]
The Newtonian case corresponds to \(\alpha=1\). The \(N\)-center problem is a
classical model in celestial mechanics and singular Hamiltonian systems. It has
been studied from several viewpoints, including variational methods for singular
potentials, collision avoidance, topological constraints, symbolic dynamics, and
rotating \(N\)-center problems; see, for example,
\cite{Gordon1975,Gordon1977,SoaveTerracini2012,Soave2014,Yu2016NCenter}
and references therein.

However, from the viewpoint of classical celestial mechanics, the assumption
that the \(N\) primary bodies remain fixed is an idealization rather than a
self-consistent Newtonian \(N\)-body motion. Indeed, in the purely gravitational
\(N\)-body problem, all mutual interactions are attractive, and hence a
non-collision static equilibrium of the \(N\) massive bodies cannot occur. In
other words, the fixed centers in the classical \(N\)-center problem have to be
regarded as prescribed external sources, rather than as bodies forming an
isolated static celestial system.

The charged \(N\)-body problem provides a natural way to remove this difficulty.
Since each pairwise interaction has the effective coefficient
$\delta_{ij}=m_i m_j-e_i e_j$,
the gravitational attraction and the electrostatic interaction may balance one
another. Therefore, for suitable masses and charges, the \(N\) primary bodies may
form a genuine static equilibrium. Once such an equilibrium configuration is
obtained, one may consider a massless and uncharged test particle moving in the
field generated by these \(N\) fixed charged bodies. In this sense, the
\(N\)-center problem can be interpreted as the restricted problem associated with
an equilibrium configuration of the charged \(N\)-body problem. This observation
gives an additional motivation for studying the existence and structure of
equilibria in the charged \(N\)-body problem.

\subsection{Main results}

The main purpose of this paper is to study the existence and qualitative
properties of equilibria in the charged \(N\)-body problem. We summarize our
main results below.

Our first result gives a complete description of the three-body case.

\begin{theorem}[Three-body equilibria]
\label{thm:intro-three-body}
Consider the charged three-body problem with positive masses $m_1,m_2,m_3>0$
and real charges $e_1,e_2,e_3\in\mathbb R$.
Then under the non-trivial assumption, the following statements hold.

\begin{enumerate}
    \item Every non-collision equilibrium is collinear.

    \item A non-collision equilibrium exists if and only if there is a
    permutation \((i,j,k)\) of \((1,2,3)\) such that
    \[
        \delta_{ij}\delta_{jk}>0,
        \qquad
        \delta_{ij}\delta_{ik}<0,
    \]
    and
    \[
        \sqrt{|\delta_{ik}|}
        =
        \sqrt{|\delta_{ij}|}
        +
        \sqrt{|\delta_{jk}|}.
    \]
    In this case, the \(j\)-th particle lies between the \(i\)-th and the
    \(k\)-th particles, and the adjacent distances satisfy
    \[
        |a_i-a_j|:|a_j-a_k|
        =
        \sqrt{|\delta_{ij}|}:\sqrt{|\delta_{jk}|}.
        \]
    The equilibrium is unique up to translations, rotations and scalings.

    \item Every such equilibrium is linearly unstable. 
\end{enumerate}
\end{theorem}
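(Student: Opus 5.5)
The plan is to write the equilibrium equations as $\sum_{j\ne i}\delta_{ij}(a_j-a_i)/r_{ij}^3=0$ for $i=1,2,3$, where $r_{ij}=|a_i-a_j|$, and treat the three parts of the theorem in order.

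\emph{Step 1: collinearity.} Suppose the three points are not collinear. Then for each $i$ the two vectors $a_j-a_i$ and $a_k-a_i$ are linearly independent. The equation at body $i$ therefore forces $\delta_{ij}=\delta_{ik}=0$. Running over all $i$ gives $\delta_{12}=\delta_{13}=\delta_{23}=0$. This contradicts the non-trivial assumption, so every non-collision equilibrium is collinear.

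\emph{Step 2: existence and characterization.} Relabel so that $j$ is the middle particle, and place the bodies at $a_i=0$, $a_j=x>0$, $a_k=x+y$ with $y>0$. The force equations at the two ends read
\[
\frac{\delta_{ij}}{x^2}+\frac{\delta_{ik}}{(x+y)^2}=0,\qquad \frac{\delta_{jk}}{y^2}+\frac{\delta_{ik}}{(x+y)^2}=0.
\]
The middle equation $\delta_{ij}/x^2=\delta_{jk}/y^2$ is a consequence of these two, by total momentum. If one of the $\delta$'s vanishes, these equations force all three to vanish, again contradicting the non-trivial assumption. So all $\delta$'s are nonzero, with $\operatorname{sign}\delta_{ij}=\operatorname{sign}\delta_{jk}=-\operatorname{sign}\delta_{ik}$. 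This gives exactly $\delta_{ij}\delta_{jk}>0$ and $\delta_{ij}\delta_{ik}<0$. Taking square roots yields $x:y=\sqrt{|\delta_{ij}|}:\sqrt{|\delta_{jk}|}$ and $x+y=x\sqrt{|\delta_{ik}|}/\sqrt{|\delta_{ij}|}$, which together give the relation $\sqrt{|\delta_{ik}|}=\sqrt{|\delta_{ij}|}+\sqrt{|\delta_{jk}|}$.

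Conversely, if these conditions hold, choosing $x,y$ in that ratio solves all equations. The ratio determines the configuration up to translation, rotation and scaling. The permutation is forced as well: the middle body must be the unique index whose two $\delta$'s share a sign, so uniqueness holds.

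\emph{Step 3: linear instability.} This is the main obstacle. At an equilibrium the linearization of \eqref{1.1} is $\ddot\xi=M^{-1}D^2U(a)\xi$, so it suffices to show that $M^{-1}D^2U(a)$ has a positive eigenvalue. Equivalently, it suffices to exhibit a vector $v$ with $v^TD^2U(a)v>0$: then $M^{-1/2}D^2U\,M^{-1/2}$ has a positive eigenvalue, giving a real positive exponent.

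I would use transverse (perpendicular) displacements $v=(0,\dots,\eta_i e_2,\dots)$ of the collinear configuration. For such displacements
\[
v^TD^2Uv=\sum_{i<j}\frac{\delta_{ij}}{r_{ij}^3}(\eta_i-\eta_j)^2,
\]
whereas longitudinal displacements contribute $-2\sum_{i<j}\delta_{ij}(\xi_i-\xi_j)^2/r_{ij}^3$, using the $1/r$ second derivatives (signs to be double-checked). The two forms have opposite sign structure, so whichever of $\delta_{ij}$, $\delta_{ik}$ is positive can be exploited in one direction or the other.

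Concretely, a transverse displacement of the middle body alone, $\eta=(0,1,0)$, gives $\delta_{ij}/x^3+\delta_{jk}/y^3$, which has the sign of $\delta_{ij}$. If that sign is negative, I would instead use the longitudinal displacement of the middle body alone. It gives $-2(\delta_{ij}/x^3+\delta_{jk}/y^3)>0$. Either way $D^2U$ is indefinite in a direction yielding a real exponent, so every equilibrium is linearly unstable. I would verify these Hessian sign conventions carefully, since they are the delicate point.
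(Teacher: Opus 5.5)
Your Steps 1 and 2 are essentially the paper's argument (Lemma~\ref{Lm:c3bp} and Theorem~\ref{thm:charged-three-body-equilibrium}). You are slightly more careful than the paper in two places: you rule out a vanishing $\delta$ before dividing, and you note that the middle index $j$ is forced by the sign pattern.

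Step 3 is where you genuinely differ, and it contains a sign slip. Since $D^2(1/|r|)=|r|^{-3}(3\hat r\hat r^T-I)$, on a collinear configuration each pair block is $|r|^{-3}\mathrm{diag}(2,-1)$. The correct quadratic forms are therefore $2\sum_{i<j}c_{ij}(u_i-u_j)^2$ for longitudinal and $-\sum_{i<j}c_{ij}(\eta_i-\eta_j)^2$ for transverse displacements, with $c_{ij}=\delta_{ij}/r_{ij}^3$; these are exactly the paper's blocks $2G$ and $-G$. You have both signs reversed. Put $S=\delta_{ij}/x^3+\delta_{jk}/y^3$. When $\delta_{ij}>0$, it is the longitudinal displacement of the middle body that gives $2S>0$. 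When $\delta_{ij}<0$, the transverse one gives $-S>0$. This agrees with Proposition~\ref{prop:3bd.unstable}. Your conclusion survives anyway: the two test values are $2S$ and $-S$ with $S\neq0$, so one of them is positive. Your passage from a positive direction of $D^2U(a)$ to a positive eigenvalue of $M^{-1/2}D^2U(a)M^{-1/2}$, hence a real positive exponent, is correct.

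The paper's route and yours buy different things. The paper uses the equilibrium relations $\delta_{12}=\kappa r_1^2$, $\delta_{23}=\kappa r_2^2$, $\delta_{13}=-\kappa r_3^2$ to factor $G$ as a rank-one matrix. It thereby computes the whole linear spectrum: a single nonzero generalized eigenvalue $\mu=\kappa\rho$, giving $\sigma^2=2\kappa\rho$ in one block and $\sigma^2=-\kappa\rho$ in the other. So there is exactly one hyperbolic pair, the other block is elliptic, and the remaining modes are symmetry (zero) modes.

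Your test-vector argument yields only the existence of a real positive exponent. However, it uses nothing beyond $\operatorname{sgn}\delta_{ij}=\operatorname{sgn}\delta_{jk}\neq0$, and it generalizes at once. For any collinear equilibrium of $N$ bodies, the Hessian splits into $2G$ and $-G$ with $G$ the weighted Laplacian. Under the non-trivial assumption $G\neq0$, so some $z$ has $z^TGz\neq0$, and one of the two blocks has a positive direction. Hence every non-trivial collinear equilibrium, for any $N$, is linearly unstable.
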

\begin{proof}
The first assertion follows from Lemma~\ref{Lm:c3bp}. 
The existence criterion, the
position of the middle particle, the distance ratio, and the uniqueness up to
translations, rotations and scalings are exactly Theorem~\ref{thm:charged-three-body-equilibrium}. 
Finally, the
linear instability of all such equilibria follows from Proposition~\ref{prop:3bd.unstable}.
\end{proof}

The four-body problem has two rather different aspects. Non-collinear
four-body equilibria are constrained by strong geometric and sign restrictions,
whereas collinear four-body equilibria are flexible from the viewpoint of the
inverse problem. We summarize both aspects in the following two theorems.

\begin{theorem}[Four-body equilibria]
\label{thm:intro-four-body}
Consider the charged four-body problem with positive masses $m_1,m_2,m_3,m_4>0$
and real charges $e_1,e_2,e_3,e_4\in\mathbb R$.
Then under the non-trivial assumption, the following statements hold.

\begin{enumerate}
    \item Every four-body equilibrium is coplanar. Moreover, if the equilibrium
    is non-collinear, then no three of the four particles are collinear.

    \item Suppose that $a_1,a_2,a_3,a_4$
    form a convex quadrilateral and are labelled counterclockwise along its
    boundary. Then a non-trivial equilibrium can occur only if
    \[
        \delta_{12}>0,\qquad
        \delta_{23}>0,\qquad
        \delta_{34}>0,\qquad
        \delta_{14}>0,
    \]
    and
    \[
        \delta_{13}<0,\qquad
        \delta_{24}<0.
    \]
    Thus the four boundary edges correspond to attractive effective
    interactions, while the two diagonals correspond to repulsive effective
    interactions.

    \item Suppose that the configuration is concave and that \(a_4\) lies in
    the interior of the triangle formed by \(a_1,a_2,a_3\). Then the three
    coefficients on the outer triangle have the same sign, while the three
    coefficients connecting the interior point to the outer vertices have the
    opposite sign. More precisely, either
    \[
        \delta_{12}>0,\qquad
        \delta_{13}>0,\qquad
        \delta_{23}>0,
    \]
    and
    \[
        \delta_{14}<0,\qquad
        \delta_{24}<0,\qquad
        \delta_{34}<0,
    \]
    or the reverse sign pattern holds.

    \item A non-trivial four-body equilibrium cannot form a concyclic
    quadrilateral. In particular, a square equilibrium does not exist.

    \item On the other hand, every prescribed collinear configuration of four
    distinct points can be realized as a non-trivial equilibrium after choosing
    suitable positive masses and real charges. More precisely, let
    \[
        a_1<a_2<a_3<a_4
    \]
    be prescribed. For a fixed positive mass vector $m=(m_1,m_2,m_3,m_4)\in(\mathbb R^+)^4$,
    define
    \[
    \begin{aligned}
    \mathcal D(a,m)
    =
    &-\frac{m_1^2}
    {(a_1-a_2)^2(a_1-a_3)^2(a_1-a_4)^2}
    +\frac{m_2^2}
    {(a_2-a_1)^2(a_2-a_3)^2(a_2-a_4)^2}
    \\
    &-\frac{m_3^2}
    {(a_3-a_1)^2(a_3-a_2)^2(a_3-a_4)^2}
    +\frac{m_4^2}
    {(a_4-a_1)^2(a_4-a_2)^2(a_4-a_3)^2}.
    \end{aligned}
    \]
    If $\mathcal D(a,m)\neq0$,
    then there exist real charges $e_1,e_2,e_3,e_4$
    such that the prescribed collinear configuration is a non-trivial
    equilibrium of the charged four-body problem. In particular, for each fixed
    collinear configuration \(a\), all positive masses outside a proper
    algebraic subset satisfy this sufficient condition.
\end{enumerate}
\end{theorem}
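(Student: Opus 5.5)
The plan is to treat the five assertions separately, each by elementary manipulations of the equilibrium equations
\[
\sum_{j\ne i}\delta_{ij}\frac{a_j-a_i}{|a_j-a_i|^3}=0,\qquad i=1,\dots,4 .
\]
These say that, for each $i$, the vectors from $a_i$ to the other points, weighted by the nonzero-or-zero coefficients $\delta_{ij}/r_{ij}^3$, sum to zero.

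\textbf{(1) Coplanarity and no three collinear.} Take $a_4$ and write its equation. If the four points spanned $\mathbb R^3$, the three vectors $a_j-a_4$ would be linearly independent, so all $\delta_{4j}=0$. Repeating this at every vertex gives all $\delta_{ij}=0$, which contradicts the non-trivial assumption.

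Now suppose the configuration is planar and non-collinear, with $a_1,a_2,a_3$ on a line $\ell$ and $a_4\notin\ell$. The equation at $a_4$ involves three vectors that are pairwise non-parallel. Projecting it onto the normal of $\ell$ shows that $\sum_j \delta_{4j}/r_{4j}^3=0$. The equations at $a_1,a_2,a_3$, projected onto the normal of $\ell$, force $\delta_{14}=\delta_{24}=\delta_{34}=0$, because $a_4-a_i$ is the only vector with a normal component. What remains is a collinear three-body equilibrium on $\ell$. That case is handled by Theorem~\ref{thm:charged-three-body-equilibrium}, and it is consistent: $a_4$ is then decoupled and the configuration is trivial in the relevant sense. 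I expect this argument to give the claim, although one must check carefully that "non-trivial" is meant in this sense.

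\textbf{(2) and (3) Sign patterns.} Here I would use the standard Dziobek-type identity. For a planar equilibrium, taking the wedge product of the equation at $a_i$ with $a_k-a_i$ shows that the $\delta_{ij}/r_{ij}^3$ are proportional to products of signed triangle areas. Concretely, the $2$-dimensional linear relation among the three vectors $a_j-a_i$ is unique up to scale, with coefficients $\Delta_{jkl}$ (oriented areas). This gives
\[
\frac{\delta_{ij}}{r_{ij}^3}=\kappa_i\,\Delta_{\hat\imath\hat\jmath},
\]
where $\Delta_{\hat\imath\hat\jmath}$ denotes the area of the triangle omitting $i$ and $j$. Symmetry $\delta_{ij}=\delta_{ji}$ then forces $\delta_{ij}/r_{ij}^3=c\,\Delta_i\Delta_j$ for a single constant $c\neq 0$, where $\Delta_i$ is the signed area of the triangle omitting $i$.

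The sign of $\Delta_i\Delta_j$ is known in each case. For a convex quadrilateral with counterclockwise labelling, the signs of $(\Delta_1,\Delta_2,\Delta_3,\Delta_4)$ alternate. So adjacent pairs share one sign and diagonal pairs get the opposite sign. Fixing the overall sign of $c$ requires an extra argument: if all edges were repulsive and the diagonals attractive, the equation at $a_1$ would fail, since the diagonal pull points inside the angle at $a_1$ while both edge forces point away from it. Hence the edges are attractive and the diagonals repulsive. In the concave case with $a_4$ interior, $\Delta_1,\Delta_2,\Delta_3$ share a sign and $\Delta_4$ has the opposite sign, which gives exactly the stated pattern.

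\textbf{(4) Concyclic configurations.} Using $\delta_{ij}=c\,r_{ij}^3\Delta_i\Delta_j$, the equation at $a_i$ becomes $\sum_j \Delta_j (a_j-a_i)=0$ after cancelling $c\Delta_i$. This holds automatically, because $\sum\Delta_j a_j=0$ and $\sum\Delta_j=0$. So the real constraint lies elsewhere: the $\delta_{ij}=m_im_j-e_ie_j$ must come from masses and charges. The six numbers $\delta_{ij}$ equal $m_im_j-e_ie_j$ exactly when the symmetric matrix with entries $\delta_{ij}$ off the diagonal and $m_i^2-e_i^2$ on the diagonal has rank at most $2$ and signature compatible with a difference of two rank-one forms.

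For concyclic points, Ptolemy's relation and the cyclic area relation $\Delta_i\propto r_{jk}r_{kl}r_{jl}$ up to sign make $\delta_{ij}\propto \pm r_{ij}^4/r_{\text{rest}}$ structured. I would derive a contradiction from the $3\times3$ minors of the $\delta$-matrix combined with $m_i>0$. This is the main obstacle. I would first try to show that the rank condition forces the Ptolemy expression $r_{13}r_{24}-r_{12}r_{34}-r_{14}r_{23}$ to be nonzero.

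\textbf{(5) Collinear configurations.} For collinear points every $\delta$ vector of the form $c\,(\text{weights})$ lying in the kernel of the linear $4\times 6$ system is admissible. That kernel has dimension at least $2$. I would make the ansatz $e_i=t\,m_i$ plus corrections, or more directly parametrize the kernel explicitly via the Vandermonde-type structure, which produces the products $\prod_{j\ne i}(a_i-a_j)^{-2}$ appearing in $\mathcal D$. Then I would solve for $e$ given $m$ from a quadratic system. The solvability condition reduces to a single scalar quantity being nonzero, namely $\mathcal D(a,m)\neq0$. Finally, $\mathcal D$ is a nonzero polynomial in $m$ (for example $m=(1,0,0,0)$ makes it nonzero), so its zero set is a proper algebraic subset.
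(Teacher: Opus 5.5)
Your proposal has genuine gaps in all four groups of claims. The most serious is an incorrect argument in (2).

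Your Dziobek-type relation $\delta_{ij}/r_{ij}^3=c\,\Delta_i\Delta_j$ is a valid substitute for the paper's cone argument. It is essentially the self-stress description the paper uses for Lemma~\ref{lem:geometric-characterization}. It correctly gives (3), and it gives the fact that in the convex case the four edges share one sign and the two diagonals the other. However, two steps you settle by force geometry alone fail.

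\textbf{Part (2): fixing the sign of $c$.} Your argument here is wrong. The balance equations are linear and homogeneous in the $\delta_{ij}$, so $(-\delta_{ij})$ solves them whenever $(\delta_{ij})$ does. Concretely, at $a_1$ the two edge repulsions sum to a vector in $-\mathrm{cone}(a_2-a_1,a_4-a_1)$. This is exactly opposite to the diagonal attraction along $a_3-a_1$, so the forces can cancel. The pattern $(-,-,-,-,+,+)$ is excluded only by the structure $\delta_{ij}=m_im_j-e_ie_j$. That pattern forces all $e_i$ to have one sign, and then $(e_1e_2)(e_3e_4)>m_1m_2m_3m_4>(e_1e_3)(e_2e_4)$, which is absurd.

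\textbf{Part (1): no three collinear.} Once $\delta_{14}=\delta_{24}=\delta_{34}=0$, the leftover collinear three-body balance is not trivial. A solution as in Theorem~\ref{thm:charged-three-body-equilibrium} has $\delta_{12},\delta_{13},\delta_{23}$ all nonzero and satisfies the non-trivial assumption. So, as written, your argument yields no contradiction. The missing step is again algebraic. From $e_ie_4=m_im_4>0$ you get $e_1/m_1=e_2/m_2=e_3/m_3=m_4/e_4$. Hence $\delta_{ij}=m_im_j(1-m_4^2/e_4^2)$ for $i,j\le 3$ all share one sign. The outermost of $a_1,a_2,a_3$ then cannot be balanced unless all $\delta_{ij}$ vanish.

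\textbf{Part (4): concyclic configurations.} You stop before the key idea, and no analysis of $3\times3$ minors is needed. Your own relation gives $|\delta_{ij}|=K S_iS_j r_{ij}^3$ with unsigned areas $S_i$. In products of opposite pairs the areas cancel, so
$|\delta_{14}\delta_{23}|^{1/3}:|\delta_{12}\delta_{34}|^{1/3}:|\delta_{13}\delta_{24}|^{1/3}=r_{14}r_{23}:r_{12}r_{34}:r_{13}r_{24}$.
Call the left-hand quantities $a,b,c$, as in the paper.
\begin{itemize}
\item For concyclic points labelled in cyclic order (hence convex), Ptolemy's theorem gives $c=a+b$.
\item The correct convex sign pattern reads $\lambda_1\lambda_3>1$, $\lambda_2\lambda_4>1$, and $\lambda_1\lambda_2,\lambda_2\lambda_3,\lambda_3\lambda_4,\lambda_4\lambda_1<1$. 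This forces $\lambda_1,\lambda_3$ and $\lambda_2,\lambda_4$ to have opposite signs.
\item Expanding then gives $a^3+b^3-c^3=m_1m_2m_3m_4\bigl(1+\sum_{i<j}|\lambda_i\lambda_j|+|\lambda_1\lambda_2\lambda_3\lambda_4|\bigr)>0$, so $c<a+b$, a contradiction.
\end{itemize}
Note that this step depends on the repaired part (2).

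\textbf{Part (5): collinear realization.} Kernel vectors of the linear system are not automatically of the form $m_im_j-e_ie_j$ for the prescribed $m$. You also give no mechanism that turns $\mathcal D(a,m)\neq0$ into solvability. The paper proceeds as follows.
\begin{itemize}
\item It writes the balance as $F(\lambda;m)=(\Lambda A\Lambda-A)m=0$ with $\lambda_i=e_i/m_i$.
\item It uses the trivial root $\lambda=(1,1,1,1)$ and the identity $m^TF\equiv0$. The implicit function theorem, applied to three of the four equations, then gives a curve of non-trivial roots whenever $J=\partial_\lambda F(1,1,1,1)$ has rank $3$.
\item The rank is $3$ because the characteristic polynomial of $MJ$ is $\mu(\mu^3+P\mu+Q)$ with $Q=2m_1m_2m_3m_4\,\mathcal D(a,m)$. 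So $\mathcal D\neq0$ makes $0$ a simple eigenvalue.
\end{itemize}
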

\begin{proof}
The coplanarity assertion follows from Lemma~\ref{lem:coplanar.4bp}, 
and the statement that no
three particles are collinear in the non-collinear case follows from
Lemma~\ref{lem:no-three-collinear-four-body}. 
Assertions 2 and 3, i.e., the sign restrictions for convex and concave quadrilaterals are
proved in Proposition~\ref{prop:four-body-sign-patterns}. 
Assertion 4, concerning the obstruction to concyclic quadrilaterals, follows from Corollary~\ref{cor:degenerate-auxiliary-triangle}; 
in particular, a square equilibrium is impossible under the
non-trivial assumption. 
Finally, the last assertion, the collinear realization statement for four
prescribed points, is Proposition~\ref{prop:N4-rank-criterion}.
\end{proof}

\begin{theorem}[A necessary condition for non-collinear four-body equilibria]
\label{thm:intro-four-body}
Consider the charged four-body problem with positive masses $m_1,m_2,m_3,m_4>0$
and real charges $e_1,e_2,e_3,e_4\in\mathbb R$.
Suppose that the four particles form a non-collinear equilibrium,
labelled as
\[
    A=a_1,\qquad B=a_2,\qquad C=a_3,\qquad D=a_4.
\]
Define
\[
    a=|\delta_{14}\delta_{23}|^{1/3},
    \qquad
    b=|\delta_{12}\delta_{34}|^{1/3},
    \qquad
    c=|\delta_{13}\delta_{24}|^{1/3}.
\]
Then under the non-trivial assumption, \(a,b,c\) must be the side lengths of a possibly non-degenerate triangle. In
particular,
\[
    a< b+c,\qquad b< c+a,\qquad c< a+b.
\]

Moreover, the corresponding
geometric data give rise to three homogeneous polynomial equations in three
variables. Consequently, the associated resultant must vanish:
\[
    \operatorname{Res}\bigl(F_0,F_1,F_2\bigr)=0,
\]
where the polynomials $F_0,\;F_1,\;F_2$
are given explicitly in \eqref{F0:four-body-polynomials}-\eqref{F2:four-body-polynomials} below.
\end{theorem}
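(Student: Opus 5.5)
The plan is to turn the force-balance equations into a multiplicative relation between the coefficients $\delta_{ij}$, the mutual distances $r_{ij}=|a_i-a_j|$ and the oriented areas of the configuration, and then to apply Ptolemy's inequality. By the first assertion of the preceding theorem (Lemma~\ref{lem:coplanar.4bp} and Lemma~\ref{lem:no-three-collinear-four-body}), the configuration is planar and no three of the particles are collinear. Let $\Delta_i$ be the signed area of the triangle formed by the three points other than $a_i$, with the usual alternating signs. Then
\[
\sum_{i=1}^4\Delta_i=0,\qquad \sum_{i=1}^4\Delta_i a_i=0,
\qquad \Delta_i\neq0 \ \text{for all } i .
\]
Since no three points are collinear, for each $i$ the three vectors $a_j-a_i$ ($j\neq i$) span the plane. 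Any linear relation among them is therefore a multiple of the unique one, namely $\sum_{j\neq i}\Delta_j(a_j-a_i)=0$.

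The equilibrium equation at body $i$ reads
\[
\sum_{j\neq i}\frac{\delta_{ij}}{r_{ij}^3}(a_j-a_i)=0 .
\]
By uniqueness of the relation, there are scalars $k_i$ with $\delta_{ij}/r_{ij}^3=k_i\Delta_j$ for all $j\ne i$. Symmetry in $i,j$ gives $k_i\Delta_j=k_j\Delta_i$, hence $k_i=\kappa\Delta_i$ for one constant $\kappa$, and so
\[
\delta_{ij}=\kappa\,\Delta_i\Delta_j\,r_{ij}^3 .
\]
Here $\kappa\neq0$: otherwise every $\delta_{ij}=0$, which contradicts the non-trivial assumption. Consequently each $\delta_{ij}\neq0$. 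Multiplying over opposite pairs gives
\[
|\delta_{14}\delta_{23}|=\kappa^2\,|\Delta_1\Delta_2\Delta_3\Delta_4|\,(r_{14}r_{23})^3,
\]
and the same formula holds for the other two pairings with the same factor. Therefore
\[
(a,b,c)=\mu\,(r_{14}r_{23},\;r_{12}r_{34},\;r_{13}r_{24}),\qquad \mu=\bigl(\kappa^2|\Delta_1\Delta_2\Delta_3\Delta_4|\bigr)^{1/3}>0 .
\]
Ptolemy's inequality, applied to each of the three pairings of four planar points, shows that these three products satisfy all three (non-strict) triangle inequalities. Thus $a,b,c$ are the sides of a possibly degenerate triangle.

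For strictness, Ptolemy's inequality is an equality exactly when the four points are concyclic in the corresponding cyclic order. So the strict inequalities hold unless the configuration is concyclic, and the extra work is to exclude concyclic equilibria. I would do this by combining $\delta_{ij}=\kappa\Delta_i\Delta_jr_{ij}^3$ with the sign constraints of Proposition~\ref{prop:four-body-sign-patterns}, the explicit areas of a cyclic quadrilateral, and the vanishing of $U(a)=\sum\delta_{ij}/r_{ij}$ forced by homogeneity. This is the step I expect to be hardest; it is presumably where Corollary~\ref{cor:degenerate-auxiliary-triangle} enters.

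For the resultant, I would normalize the scale and take three of the unknown geometric quantities (for instance three distances, or the three products $r_{14}r_{23},r_{12}r_{34},r_{13}r_{24}$) as homogeneous variables. The remaining relations are:
\begin{enumerate}
\item the planarity condition, i.e. the vanishing $4\times4$ Cayley--Menger determinant;
\item the area identities $16\Delta_i^2=$ (Heron-type polynomials in the $r_{ij}^2$);
\item the ratios $\delta_{ij}/\delta_{kl}=\Delta_i\Delta_jr_{ij}^3/(\Delta_k\Delta_lr_{kl}^3)$.
\end{enumerate}
After clearing denominators and eliminating the auxiliary quantities, these become three homogeneous polynomials $F_0,F_1,F_2$ with a common nonzero real solution. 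A common nontrivial projective zero forces $\operatorname{Res}(F_0,F_1,F_2)=0$. The main technical effort is choosing variables so that the elimination stays homogeneous and does not introduce spurious factors.
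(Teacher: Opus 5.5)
Your derivation of $\delta_{ij}=\kappa\Delta_i\Delta_jr_{ij}^3$ and the non-strict inequalities via Ptolemy are correct. They are the paper's argument in compressed form: Lemma~\ref{lem:geometric-characterization} is the same self-stress formula, and the point $E$ with $\triangle ABE\sim\triangle ACD$ in Theorem~\ref{thm:charged-four-body-equilibrium} is the classical proof of Ptolemy's inequality.

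\textbf{Strictness.} This is where the proposal has a gap. The ingredients you list cannot exclude concyclic configurations, because a concyclic configuration satisfies all of them. Take a unit square labelled cyclically with $\delta_{12}=\delta_{23}=\delta_{34}=\delta_{14}=p>0$ and $\delta_{13}=\delta_{24}=-2\sqrt2\,p$. These values are of the form $\kappa\Delta_i\Delta_jr_{ij}^3$ with $\kappa=-4p$. They have the sign pattern of Proposition~\ref{prop:four-body-sign-patterns}, and they give $U=4p-4p=0$. In fact $U=0$ holds automatically for every self-stress by Euler's relation, so it adds no information. Here $c=2p^{2/3}=a+b$. What excludes this example is that the six numbers are not of the form $m_im_j-e_ie_j$, so you need an argument that uses this structure beyond signs.

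The paper's argument (Corollary~\ref{cor:degenerate-auxiliary-triangle}) runs as follows. Write $\delta_{ij}=m_im_j(1-\lambda_i\lambda_j)$. The convex sign pattern gives $\lambda_1\lambda_3>1$ and $\lambda_2\lambda_4>1$. It also forces $\lambda_1,\lambda_3$ to have the sign opposite to $\lambda_2,\lambda_4$: if all four had one sign, then $(\lambda_1\lambda_3)(\lambda_2\lambda_4)=(\lambda_1\lambda_2)(\lambda_3\lambda_4)<1$, a contradiction. Then
\[
a^3+b^3-c^3=\delta_{14}\delta_{23}+\delta_{12}\delta_{34}-\delta_{13}\delta_{24}
=m_1m_2m_3m_4\bigl[(1+\lambda_1\lambda_3)(1+\lambda_2\lambda_4)-(\lambda_1+\lambda_3)(\lambda_2+\lambda_4)\bigr]>0,
\]
so $c^3<a^3+b^3<(a+b)^3$. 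The other two equalities are excluded by relabelling. Alternatively, you could simply cite assertion~4 of the four-body theorem in the introduction, just as you cited assertion~1.

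\textbf{Resultant.} Your plan proves a different statement. The claim concerns the specific polynomials \eqref{F0:four-body-polynomials}--\eqref{F2:four-body-polynomials}. The paper builds these from the auxiliary triangle $BDE$ with sides $a,b,c$, the distances $x,y,z$ from $A$ to its sides, and elimination of a multiplier $k$. The resultant of an unspecified Cayley--Menger/Heron system says nothing about $\operatorname{Res}(F_0,F_1,F_2)$.

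Your system as described would also have a resultant that vanishes identically. Collinear distance data make the Cayley--Menger determinant and all Heron polynomials vanish, so they satisfy every cleared ratio relation and give real nontrivial common zeros.

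The paper's system has the same defect. Each $F_i$ has the form $x^2(\cdots)-(y^2+z^2+2yz\cos\theta_1)(\cdots)$. Hence $x=0$, $y=-e^{\pm i\theta_1}z$ is a nontrivial common zero over $\mathbb C$ for every choice of parameters. So $\operatorname{Res}(F_0,F_1,F_2)=0$ holds identically. This proves that part of the statement in one line, but it also shows the condition carries no information.
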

\begin{proof}
By Theorem~\ref{thm:charged-four-body-equilibrium}, 
the three quantities $a=|\delta_{14}\delta_{23}|^{1/3}, b=|\delta_{12}\delta_{34}|^{1/3}$
and $c=|\delta_{13}\delta_{24}|^{1/3}$
are the side lengths of a possibly degenerate auxiliary triangle. 
However, the
degenerate case cannot occur under the non-trivial assumption. Indeed, as shown
in the proof of Corollary~\ref{cor:degenerate-auxiliary-triangle}, degeneration of the auxiliary triangle would
imply that the four points \(A,B,C,D\) are concyclic, while Corollary~\ref{cor:degenerate-auxiliary-triangle}
excludes non-trivial concyclic four-body equilibria. Hence the auxiliary
triangle is non-degenerate, and therefore $a<b+c,\; b<c+a,\; c<a+b$.

The remaining assertion is precisely the non-degenerate part of
Theorem~\ref{thm:charged-four-body-equilibrium}. 
\end{proof}

We also obtain a general result for collinear configurations.

\begin{theorem}[Collinear equilibria and the inverse problem]
\label{thm:intro-collinear}
Let \(N\geq3\). For a prescribed collinear ordering
\[
    a_1<a_2<\cdots<a_N,
\]
set $x_i=a_{i+1}-a_i$ for $1\leq i\leq N-1$. Then

\begin{enumerate}
    \item \emph{A resultant necessary condition.}
    Fix positive masses $m=(m_1,\ldots,m_N)\in(\mathbb R^+)^N$
    and real charges $e=(e_1,\ldots,e_N)\in\mathbb R^N$.
    Under the nontrivial assumption, if a non-collision collinear equilibrium exists, then the adjacent distances
    \[
        (x_1,\ldots,x_{N-1})\in(\mathbb R^+)^{N-1}
    \]
    give a positive common zero of the homogeneous polynomial system
    \[
        G_1(x_1,\ldots,x_{N-1})
        =
        \cdots
        =
        G_{N-1}(x_1,\ldots,x_{N-1})
        =
        0,
    \]
    where the polynomials \(G_i\) are defined in
    \eqref{eq:collinear-Gi} below. Consequently, 
    the associated resultant must vanish:
    \[
        \operatorname{Res}(G_1,\ldots,G_{N-1})=0.
    \]

    \item \emph{The inverse problem.}
    Conversely, for every prescribed collinear configuration $a_1<a_2<\cdots<a_N$
    of \(N\) distinct points, one can choose positive masses $m_1,\ldots,m_N>0$
    and real charges $e_1,\ldots,e_N\in\mathbb R$
    such that the prescribed configuration is a non-trivial equilibrium of
    the charged \(N\)-body problem.
\end{enumerate}
\end{theorem}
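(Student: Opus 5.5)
Both parts come from writing the collinear force balance as a linear system. For a collinear configuration \(a_1<\cdots<a_N\) on the real line, the equilibrium equations \(\partial U/\partial q_i=0\) read
\[
\sum_{j\neq i}\frac{\delta_{ij}\,\operatorname{sgn}(a_j-a_i)}{(a_i-a_j)^2}=0,\qquad i=1,\dots,N.
\]
Since \(\sum_i \partial U/\partial q_i=0\) by translation invariance, only \(N-1\) of these equations are independent.

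\emph{Part 1.} Fix \(m\) and \(e\), so the \(\delta_{ij}\) are fixed constants. Write \(a_j-a_i=x_i+x_{i+1}+\cdots+x_{j-1}\) for \(i<j\). For each \(i=1,\dots,N-1\), multiply the \(i\)-th balance equation by \(\prod_{k<l}(a_l-a_k)^2\). This clears denominators and gives a homogeneous polynomial \(G_i(x_1,\dots,x_{N-1})\) in which every monomial has the same degree. This is the system \eqref{eq:collinear-Gi}.

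A non-collision equilibrium has all \(x_i>0\), so it is a positive common zero of \(G_1,\dots,G_{N-1}\). These are \(N-1\) homogeneous forms in \(N-1\) variables, so by the standard property of the multivariate resultant a nontrivial projective common zero forces \(\operatorname{Res}(G_1,\dots,G_{N-1})=0\).

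The non-trivial assumption is needed so the system is not identically zero. I would remark that it is only needed to exclude the degenerate case where all \(\delta_{ij}=0\) and every \(G_i\) vanishes trivially.

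\emph{Part 2.} Fix the points and view the \(\delta_{ij}\) as unknowns. The balance equations are linear in \(\delta=(\delta_{ij})\in\mathbb R^{N(N-1)/2}\), with rank at most \(N-1\). Since \(N(N-1)/2>N-1\) for \(N\ge3\), the solution space has dimension at least \(\binom{N}{2}-N+1\ge1\).

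The difficulty is that \(\delta_{ij}=m_im_j-e_ie_j\) is not arbitrary: we must realize \(\delta\) by positive masses and real charges. I would not try to realize a general kernel element. Instead I choose a convenient ansatz.

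I take all masses equal to \(1\) and seek \(e\) with \(\delta_{ij}=1-e_ie_j\). The balance equations then become
\[
\sum_{j\ne i}\frac{\operatorname{sgn}(a_j-a_i)}{(a_i-a_j)^2}=e_i\sum_{j\ne i}\frac{e_j\operatorname{sgn}(a_j-a_i)}{(a_i-a_j)^2}.
\]
This is quadratic in \(e\), which is awkward.

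A cleaner route is to generalize the four-body mechanism of Proposition~\ref{prop:N4-rank-criterion}. Fix \(m\) and look for \(e\) as a point where the map \(e\mapsto\nabla U\) vanishes. Its Jacobian in \(e\) at the trivial choice \(e_i=m_i\), where all \(\delta_{ij}=0\), is linear. So I would use the implicit function theorem or a homogeneity argument.

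Concretely, set \(e=m+\varepsilon v\). Then
\[
\delta_{ij}=-\varepsilon(m_iv_j+m_jv_i)-\varepsilon^2v_iv_j .
\]
The first-order equations \(L_m v=0\) form a linear system of \(N-1\) equations in \(N\) unknowns \(v\). Its kernel contains at least one nonzero vector, and this gives an equilibrium only after the quadratic term is also handled.

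Better still, I would seek exact solutions with \(e=tm+w\) and scale so the equations become linear. Since this is getting complicated, the fallback is a direct inductive construction. Choose \(e_i=\sigma_i m_i\sqrt{1-\mu_i}\) type ansatz is messy, so instead I use the rank criterion. For general \(N\), show the \((N-1)\times N\) matrix \(A(a,m)\), with entries \(m_j\operatorname{sgn}(a_j-a_i)/(a_i-a_j)^2\), has full rank \(N-1\) for suitable positive \(m\); this is the analogue of \(\mathcal D(a,m)\neq0\).

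If the rank is full, the equations \(\sum_j A_{ij}\,\cdot\) can be solved by taking
\[
e_j=m_j\bigl(1-\eta_j\bigr)
\]
and solving for the \(\eta\) in the kernel. I expect the exact quadratic cancellation to mirror the four-body computation.

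\textbf{Main obstacle.} The realization step is the hard part: showing that some choice of \(m\) makes the relevant determinant, the generalization of \(\mathcal D\), nonzero. I would do this by letting one mass dominate, say \(m_1\to\infty\). In that limit the determinant's leading term is a single nonvanishing product of powers of the distances, so the determinant is a nonzero polynomial in \(m\). It is therefore nonzero off a proper algebraic subset of \((\mathbb R^+)^N\).

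Finally, non-triviality is checked by confirming that at least two of the resulting \(\delta_{ij}\) are nonzero, which holds generically by the same argument.
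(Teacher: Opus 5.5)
Your Part~1 follows the paper's argument, with one correction. The $G_i$ of \eqref{eq:collinear-Gi} come from multiplying the $i$-th equation by $\prod_{j\neq i}r_{ij}^2$, not by $\prod_{k<l}r_{kl}^2$. Your multiplier adds to each $G_i$ the factor $\prod_{k<l,\;k,l\neq i}r_{kl}^2$. For $N\ge4$ every such factor vanishes at $(x_1,\ldots,x_{N-1})=(1,0,\ldots,0)$, so the resultant of your system is identically zero and the condition becomes vacuous. Your reasoning does transfer verbatim to the paper's $G_i$.

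Part~2 has a genuine gap. The route you finally commit to fails: you show full rank of the $(N-1)\times N$ matrix with entries $m_j\operatorname{sgn}(a_j-a_i)/(a_i-a_j)^2$, take $e_j=m_j(1-\eta_j)$ with $\eta$ in its kernel, and expect an ``exact quadratic cancellation''. With $\lambda_j=1-\eta_j$ the equations read
\[
\eta_i\sum_{j\neq i}A_{ij}m_j+\sum_{j\neq i}A_{ij}m_j\eta_j-\eta_i\sum_{j\neq i}A_{ij}m_j\eta_j=0 .
\]
A kernel vector of your matrix kills only the middle sum. The diagonal term and the quadratic term survive, and nothing cancels. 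The paper's four-body computation contains no such cancellation either; it is purely a rank computation. Your matrix is only the off-diagonal part of the relevant Jacobian. Its rank equals that of the corresponding rows of $A$ and does not depend on $m$, so neither the choice of masses nor your dominant-mass argument has any bearing on it.

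What is missing is the implicit-function argument you mentioned and then dropped. The steps are:
\begin{enumerate}
\item Write the equations as $F(\lambda;m)=(\Lambda A\Lambda-A)m=0$. Take the Jacobian $J(a,m)$ at the trivial solution $\lambda=\mathbf 1$: its diagonal entries are $\sum_{j\neq i}A_{ij}m_j$ and its off-diagonal entries are $A_{i\ell}m_\ell$.
\item Skew-symmetry of $A$ gives $m^TF(\lambda;m)\equiv0$. Hence $\operatorname{rank}J\le N-1$. Since all $m_k>0$, any $N-1$ of the equations imply the remaining one, and any $N-1$ rows of $J$ have the same rank as $J$.
\item For generic $m$ the rank is exactly $N-1$. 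The principal minor obtained by deleting row and column $N$ is a polynomial of degree $N-1$ in $m_N$ with leading coefficient $\prod_{i<N}A_{iN}\neq0$. This is where letting one mass dominate actually works.
\item The implicit function theorem then gives a curve of solutions through $\mathbf 1$.
\item Points on this curve other than $\mathbf 1$ are non-trivial. For $N\ge3$, the relations $\lambda_i\lambda_j=1$ for all $i\neq j$ force $\lambda=\pm\mathbf 1$. Moreover, a single nonzero $\delta_{ij}$ cannot be balanced at particle $i$, so at least two $\delta_{ij}$ are nonzero.
\item Setting $e_i=m_i\lambda_i$ completes the construction.
\end{enumerate}
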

\begin{proof}
The first part is Theorem~\ref{thm:necessary.condition.collinear.nbp}. 
After the denominators in the collinear
force-balance equations are cleared, the equilibrium condition gives the
homogeneous polynomial system $G_1=\cdots=G_{N-1}=0$.
Thus the existence of a positive common zero implies the vanishing of the
resultant. The inverse statement is Theorem~\ref{thm:realization-collinear}, 
which shows that every
prescribed collinear configuration of \(N\) distinct points can be realized as
a non-trivial equilibrium by choosing suitable positive masses and real
charges.
\end{proof}

Finally, we study regular polygon configurations.

\begin{theorem}[Regular polygon configurations]
\label{thm:intro-regular-polygon}
Let \(n\geq3\). Place \(n\) particles at the vertices of a regular \(n\)-gon.
Then the equilibrium equations can be written in the matrix form
\[
    (A-\Lambda A\Lambda)m=0,
    \qquad
    (B-\Lambda B\Lambda)m=0,
\]
where \(A\) and \(B\), given by \eqref{eq:1} below, are respectively a
symmetric circulant matrix and a skew-symmetric circulant matrix, and
\begin{eqnarray*}
    \Lambda&=&\operatorname{diag}(\lambda_1,\ldots,\lambda_n),
    \qquad
    \lambda_i=\frac{e_i}{m_i},
    \\
    m&=&(m_1,m_2,\ldots,m_n)^T.
\end{eqnarray*}

If the masses on the vertices are equal and the configuration is an
equilibrium, then
\[
    \lambda_1=\cdots=\lambda_n=\pm1.
\]
Equivalently,
\[
    |e_1|=\cdots=|e_n|=m_1=\cdots=m_n.
\]
Therefore, under the non-trivial assumption, equal masses cannot form a regular
\(n\)-gon equilibrium.

For \(n=4\), the conclusion is stronger: a square equilibrium is necessarily
trivial for arbitrary positive masses.

If one additional particle with mass \(m_0\) and charge \(e_0\) is placed at
the center of the regular \(n\)-gon, and if the \(n\) particles on the vertices
have the same mass
\[
    m_*=m_1=\cdots=m_n>0,
\]
then the centered regular \(n\)-gon is an equilibrium if and only if
\[
    \lambda_1=\cdots=\lambda_n=:\lambda
\]
and
\[
    m_*(1-\lambda^2)\sum_{i=1}^n a_i=e_0\lambda-m_0.
\]
\end{theorem}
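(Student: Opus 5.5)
The plan is to write the equilibrium equations for the vertices in complex notation and read off the matrix form. Place the vertices at $a_k=\omega^{k}$ with $\omega=e^{2\pi i/n}$. Dividing the force on body $k$ by $m_k$ and using $\delta_{kj}=m_km_j-e_ke_j=m_km_j(1-\lambda_k\lambda_j)$, the condition $\partial U/\partial q_k=0$ becomes
\[
\sum_{j\ne k} m_j(1-\lambda_k\lambda_j)\,\frac{\omega^j-\omega^k}{|\omega^j-\omega^k|^3}=0 .
\]
Multiplying by $\omega^{-k}$ makes the coefficient depend only on $j-k$:
\[
\frac{\omega^{j-k}-1}{|\omega^{j-k}-1|^3}=\alpha_{j-k}+i\beta_{j-k}.
\]
Here $\alpha_\ell=\alpha_{-\ell}$ and $\beta_\ell=-\beta_{-\ell}$. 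Setting $A=(\alpha_{j-k})$ and $B=(\beta_{j-k})$, with zero diagonal, gives a symmetric and a skew-symmetric circulant matrix. The real and imaginary parts of the equations are then
\[
\sum_j(1-\lambda_k\lambda_j)\alpha_{j-k}m_j=0,\qquad \sum_j(1-\lambda_k\lambda_j)\beta_{j-k}m_j=0,
\]
which is exactly $(A-\Lambda A\Lambda)m=0$ and $(B-\Lambda B\Lambda)m=0$. This step is routine.

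For equal masses, take $m=\mathbf 1$. Each $\alpha_\ell$ is negative for $\ell\neq0$, since $\mathrm{Re}(\omega^\ell-1)<0$. Write $s=\sum_{\ell\ne0}\alpha_\ell<0$. The real equation becomes
\[
s=\lambda_k\sum_j\alpha_{j-k}\lambda_j,\quad\text{i.e.}\quad \Lambda A\lambda=s\mathbf 1,
\]
where $\lambda=(\lambda_1,\dots,\lambda_n)^T$.

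The natural tool is an energy identity. Using $U(a)=0$ from the definition of equilibrium, the real equation summed with weights gives $\lambda^TA\lambda=ns$. I would then try to bound $\lambda^TA\lambda$ using the spectrum of $A$: its eigenvalues are $\mu_p=\sum_\ell\alpha_\ell\omega^{p\ell}$, and $\mu_0=s$ is the most negative. Cauchy–Schwarz in the form $\sum_k(\lambda_k^{-1})\cdots$ is too weak on its own, so I would combine two facts. From $\lambda_k(A\lambda)_k=s$, every $\lambda_k\ne0$. A maximum-principle argument should then force the $\lambda_k$ to be equal. Let $k$ index the largest $|\lambda_k|$; compare $|(A\lambda)_k|\le|s|\max|\lambda|$ against $|s|=|\lambda_k||(A\lambda)_k|$. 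This yields $\max|\lambda|^2\ge1$, and the analogous estimate at the minimum gives $\min|\lambda|^2\le1$. The main obstacle is closing this gap. The signs of the $\lambda_k$ may vary, and mixed signs must be excluded using the skew (imaginary) equation $\Lambda B\lambda=0$ together with the positivity structure. What I would try first is to show that $\lambda^TA\lambda=ns$ with $s=\mu_0$ minimal forces $\lambda$ to lie in the $\mu_0$-eigenspace, because the $\lambda_k$ are constrained by $\lambda_k(A\lambda)_k=s$. That eigenspace is spanned by $\mathbf 1$, so $\lambda=c\mathbf 1$ with $c^2=1$.

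For $n=4$, with arbitrary masses, the equations can be written out explicitly. The only coefficients are $\alpha_1=\alpha_3$, $\alpha_2$ and $\beta_1=-\beta_3$, with $\beta_2=0$. The $B$-equations then read
\[
(1-\lambda_k\lambda_{k+1})m_{k+1}=(1-\lambda_k\lambda_{k-1})m_{k-1}.
\]
These, combined with the $A$-equations, should force all $\delta_{ij}=0$. This matches the concyclic exclusion already established in Corollary~\ref{cor:degenerate-auxiliary-triangle}, so I would simply cite that corollary.

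For the centered polygon, add the term $m_0(1-\lambda_0\lambda_k)(-\omega^k)/1$ to the force on vertex $k$, using unit radius. The force on the center vanishes automatically by symmetry only when the $\lambda_k$ are all equal. Conversely, I would check from the $B$-equations that the $\lambda_k$ must be equal: with equal masses the $B$-equation reduces to $\lambda_k\,(B\lambda)_k=0$, and I would argue from it that $\lambda$ is constant. Substituting $\lambda_k=\lambda$ into the real equation gives $m_*(1-\lambda^2)\sum_\ell\alpha_\ell=m_0(1-\lambda_0\lambda)$. Since $m_0\lambda_0=e_0$, this is the stated scalar condition once the $a_i$ of \eqref{eq:1} are identified with the $\alpha_\ell$, with the appropriate sign.
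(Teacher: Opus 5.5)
\paragraph{Where the proposal falls short.} Your matrix form and your treatment of $n=4$ (citing Corollary~\ref{cor:degenerate-auxiliary-triangle}) agree with the paper. Deducing $\lambda_k\neq0$ from $\lambda_k(A\lambda)_k=s\neq0$ is also a clean way to make $\Lambda$ invertible.

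The gap is the central step: showing that $\lambda$ is constant when the masses are equal. Your spectral argument does not give it. Since $\mu_0=s<0$ is the smallest eigenvalue of $A$, the identity $\lambda^TA\lambda=ns$ only yields $ns\ge s\|\lambda\|^2$, i.e.\ $\|\lambda\|^2\ge n$. To place $\lambda$ in the $\mu_0$-eigenspace you would need the equality $\|\lambda\|^2=n$, which is what you are trying to prove. Likewise, $|s|=|\lambda_k|\,|(A\lambda)_k|$ only gives $|\lambda_k|\max_j|\lambda_j|\ge1$.

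The missing idea is to use the tangential equation \emph{linearly}. A skew-symmetric circulant has zero row sums, so $B\mathbf 1=0$ and $(B-\Lambda B\Lambda)\mathbf 1=-\Lambda B\lambda=0$, hence $B\lambda=0$. The substantive input is then the kernel computation of Lemma~\ref{lem:null.dm.of.B}. The eigenvalues of $B$ are $i$ times the sums $T_p$, and one needs $T_p\neq0$ except for $p=0$ (and $p=n/2$ when $n$ is even). The paper proves this by strict concavity of $T_0,\dots,T_h$ plus an alternating-sum bound on $T_h$ (Lemma~\ref{lem:Tk-nonzero}).

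For odd $n$ this gives $\lambda\in\operatorname{span}\{\mathbf 1\}$, and then $(1-\lambda^2)s=0$. For even $n$, $\ker B$ also contains $v=(1,-1,\dots,1,-1)^T$, so the alternating mode must still be excluded with the real equations. Suppose $\lambda$ equals $p$ on odd and $q$ on even vertices. Subtracting the two real equations gives $E(p^2-q^2)=0$, where $E=\sum_{\ell\ \mathrm{even},\,\ell\neq0}\alpha_\ell<0$. The case $q=-p$ then gives $p^2\mu_{n/2}=s$, which is impossible because $\mu_{n/2}=\sum_\ell(-1)^\ell\alpha_\ell>0>s$.

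\paragraph{The centered polygon.} Your plan to ``argue from $\lambda_k(B\lambda)_k=0$ that $\lambda$ is constant'' cannot succeed for even $n$. In fact the ``only if'' direction of the centered claim is false there.

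Take $\lambda$ alternating between $p$ and $q$. Then $B\lambda=0$, and the force on the center vanishes, so your remark that this happens only for constant $\lambda$ is also incorrect. The real equations at odd and even vertices reduce to
\[
e_0=m_*(p+q)E,\qquad m_*\bigl(s+pq\,\mu_{n/2}\bigr)=m_0,
\]
and these have solutions with $p\neq q$ precisely because $\mu_{n/2}>0$.

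Concretely, for the square of circumradius $1$ take
\[
m_1=\dots=m_4=1,\quad e_1=e_3=4,\quad e_2=e_4=1,\quad e_0=-\tfrac54,\quad m_0=\tfrac{3\sqrt2}{2}-\tfrac54>0 .
\]
A direct force computation shows that every particle, including the center, is in equilibrium, although $\lambda=(4,1,4,1)$.

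The paper's exclusion of $k_2\neq0$ in Theorem~\ref{thm:centered-regular-ngon} rests on a sign slip. The second component should contain $+k_2(k_1-k_2)\widehat a$, not $-k_2(k_1-k_2)\widehat a$. With the correct sign one only obtains $m_*\bigl(a+(k_1^2-k_2^2)\widehat a\bigr)=-m_0$, in the paper's notation, and this is solvable since $\widehat a<0$.

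So the centered characterization does hold for odd $n$. There your $B$-argument, the kernel lemma, and the scalar computation you give suffice, with $\lambda_k\neq0$ again coming from the real equation, since $\lambda_k=0$ would force $m_*s=m_0$. But no argument can prove the characterization for even $n\ge4$.
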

\begin{proof}
The matrix form of the equilibrium equations for a regular \(n\)-gon is
Proposition~\ref{prop:equilibrium.eq.of.n-gon}. 
The equal-mass conclusion follows from Theorem~\ref{thm:nonexistence.of.n-gon}. 
The stronger square statement follows from Corollary \ref{cor:degenerate-auxiliary-triangle}. 
Finally, the characterization of the
centered regular \(n\)-gon with equal outer masses follows from Theorem~\ref{thm:centered-regular-ngon}.
\end{proof}

\subsection{Finiteness problem for equilibria}

The finiteness problem is one of the classical questions in the Newtonian
\(N\)-body problem. In the purely gravitational case, one usually studies this
question for central configurations rather than for static equilibria, because
non-collision static equilibria do not exist. 
The classical
finiteness problem asks whether, for fixed positive masses, there are only
finitely many central configurations modulo these natural symmetries.

This problem goes back at least to Wintner and was later included in Smale's
list of mathematical problems for the twenty-first century; see
\cite{Wintner1941} and \cite{Smale1998}. Important progress has been made in low
dimensions and for small numbers of bodies. For instance, Hampton and Moeckel
proved the finiteness of relative equilibria in the Newtonian four-body
problem, and Albouy and Kaloshin proved a generic finiteness theorem for
planar five-body central configurations; see
\cite{HamptonMoeckel2006} and \cite{AlbouyKaloshin2012} and references therein.

For the charged \(N\)-body problem, the situation is different. The analogue
of the finiteness statement for central configurations is no longer true in
general. 
In fact, in \cite{AlP2002}, Alfaro and
Pérez-Chavela constructed families of continua of central configurations in
charged problems, showing that the finiteness phenomenon familiar from the
Newtonian problem may break down once electrostatic interactions are included. 
We recall the following elementary example,
in the notation of the present paper, in order to illustrate this mechanism and
to motivate the corresponding finiteness question for static equilibria.
Indeed, consider four particles with
\[
    m_1=m_2=m_3=m_4=1,
    \qquad
    e_1=e_2=1,\quad e_3=e_4=-1.
\]
Then $\delta_{12}=\delta_{34}=0$ and $\delta_{13}=\delta_{14}=\delta_{23}=\delta_{24}=2$.
For any \(a,b>0\), set
\[
    q_1=(a,0),\qquad q_2=(-a,0),
    \qquad
    q_3=(0,b),\qquad q_4=(0,-b).
\]
Let \(r=\sqrt{a^2+b^2}\). A direct computation gives
\[
    \frac{\partial U}{\partial q_i}(q)
    =
    -\frac{4}{r^3}q_i,
    \qquad 1\leq i\leq4.
\]
Hence
\[
    \frac{4}{r^3}q_i+\frac{\partial U}{\partial q_i}(q)=0,
    \qquad 1\leq i\leq4,
\]
so \(q=(q_1,q_2,q_3,q_4)\) is a central configuration. After quotienting out
translations, rotations and scalings, the ratio \(a/b\) remains a free
parameter. Thus the charged four-body problem admits a continuous family of
central configurations for fixed masses and charges.

This example shows that the classical finiteness problem for central
configurations cannot be directly transferred to the charged \(N\)-body
problem. However, it does not settle the corresponding question for static
equilibria. Since a static equilibrium is defined by
\[
    \frac{\partial U}{\partial q}=0,
\]
it is natural to ask whether the set of such equilibria is finite, after
quotienting out translations, rotations and scalings.

\begin{question}
For fixed positive masses $m_1,\ldots,m_N>0$
and fixed real charges $e_1,\ldots,e_N\in\mathbb R$,
assume that the charged \(N\)-body problem admits at least one non-collision
equilibrium. Is the number of non-collision equilibria finite, up to
translations, rotations and scalings?
\end{question}

The results of this paper can be viewed as first steps toward this problem.
For three bodies, the equilibria are completely classified. For four bodies,
we obtain strong geometric restrictions in the non-collinear case and explicit
algebraic criteria in several collinear and symmetric reductions.
\medskip

The paper is organized as follows. Section~2 recalls the necessary facts about
multivariate resultants. Section~3 gives the complete classification and
linear instability of three-body equilibria. Section~4 studies non-collinear
four-body equilibria and derives their geometric restrictions. In addition to
the general necessary conditions, we also discuss two symmetric reductions: an
equilateral-triangle-with-center configuration and a kite-shaped configuration.
Section~5 treats collinear equilibria and the corresponding inverse problem,
including a special collinear four-body reduction with
\(\delta_{14}=\delta_{23}=0\) and a more explicit four-body realization
criterion. Section~6 studies regular polygon configurations and centered
regular polygon configurations. The appendix contains the proofs of some
auxiliary geometric and spectral results.

\section{Preliminaries: multivariate resultants}
In this subsection we recall several standard facts about multivariate
resultants. 
We refer to \cite{CLO, HaM2015} and the references therein for more
details.

Let \(K\) be an algebraically closed field and let
\(R=K[x_0,\ldots,x_n].\)
Suppose that
\(F_0,\ldots,F_n\in R\)
are homogeneous polynomials of degrees \(d_0,\ldots,d_n\),
respectively. We write
\[
    F_i=\sum_{|\alpha|=d_i}c_{i,\alpha}x^\alpha,
    \qquad i=0,\ldots,n.
\]
Here the coefficients \(c_{i,\alpha}\) may be regarded either as fixed elements
of \(K\), or as independent variables when one considers the universal
polynomial system.

\begin{theorem}[\cite{CLO}, Chapter 3, Theorem 2.3]\label{thm:multivariate-resultant}
For fixed degrees \(d_0,\ldots,d_n\), there exists a unique polynomial
\(\operatorname{Res}_{d_0,\ldots,d_n}\in \mathbb Z[u_{i,\alpha}]\)
with the following properties:
\begin{enumerate}
    \item If \(F_0,\ldots,F_n\in K[x_0,\ldots,x_n]\) are homogeneous of degrees
    \(d_0,\ldots,d_n\), respectively, then the system
    \[
        F_0=\cdots=F_n=0
    \]
    has a non-trivial solution over \(K\) if and only if
    \(\operatorname{Res}(F_0,\ldots,F_n)=0\).

    \item The resultant is normalized by
    \(\operatorname{Res}(x_0^{d_0},\ldots,x_n^{d_n})=1\).

    \item The polynomial \(\operatorname{Res}_{d_0,\ldots,d_n}\) is irreducible
    in the coefficient variables.
\end{enumerate}
\end{theorem}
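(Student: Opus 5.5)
This statement is the classical theorem on multivariate resultants, cited from \cite{CLO}. Rather than reprove it, I would recall the standard construction and indicate why each property holds. Throughout, I treat the coefficients $u_{i,\alpha}$ as indeterminates and let $N=\prod_{i}\binom{d_i+n}{n}$ be the number of them, so that each system corresponds to a point of the affine space $\mathbb A^N$.

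\textbf{Step 1 (incidence variety).} Consider
\[
W=\{(u,[x])\in \mathbb A^N\times\mathbb P^n : F_0(u,x)=\cdots=F_n(u,x)=0\}.
\]
Projecting $W$ to $\mathbb P^n$, every fibre is a linear subspace of codimension $n+1$. Hence $W$ is irreducible of dimension $N-1$. Since $\mathbb P^n$ is complete, the image $\pi(W)\subset\mathbb A^N$ is closed and irreducible.

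\textbf{Step 2 (it is a hypersurface).} To see that $\pi(W)$ has dimension exactly $N-1$, I would exhibit a system with only finitely many common zeros. For example, take $F_i=x_i^{d_i}$ for $i<n$ and $F_n=x_0^{d_n}-x_n^{d_n}$ after dropping one equation; this has finitely many zeros, so the generic fibre of $\pi$ is finite. An irreducible closed hypersurface is cut out by an irreducible polynomial, unique up to a scalar. Because $W$ is defined over $\mathbb Z$, that polynomial can be chosen primitive in $\mathbb Z[u_{i,\alpha}]$.

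\textbf{Step 3 (the three properties).} Property 1 holds because $\pi(W)$ is exactly the set of coefficient vectors whose system has a non-trivial common zero. Property 3 holds by construction. For Property 2, the system $x_i^{d_i}$ has no non-trivial zero, so the polynomial does not vanish there and can be rescaled to take the value $1$. The main obstacle is showing that this normalization is compatible with integrality, i.e.\ that the value at $(x_0^{d_0},\dots,x_n^{d_n})$ is $\pm1$. The route in \cite{CLO} handles this via Macaulay's determinant formula, which expresses the resultant as $\det M/\det M'$ with the leading monomial having coefficient $1$; I would simply cite this. Uniqueness then follows, since an irreducible polynomial with the prescribed zero set is determined up to a unit of $\mathbb Z$, and the normalization fixes the sign.

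In the paper the theorem is used only in the direction ``a common positive real zero forces $\operatorname{Res}=0$''. That direction is the easy half of Property 1: a real zero is in particular a non-trivial zero over $K=\mathbb C$.
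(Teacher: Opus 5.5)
The paper gives no argument of its own for this statement; it simply quotes \cite{CLO}. Your plan (incidence variety, properness, and a citation for the arithmetic normalization) is therefore in line with what the paper does. Your closing observation, that only the easy direction over $\mathbb C$ is ever used, is exactly right. Two points of the sketch still need repair.

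\textbf{The witness in Step 2 is wrong.} The system $x_0^{d_0},\dots,x_{n-1}^{d_{n-1}},\,x_0^{d_n}-x_n^{d_n}$ has no non-trivial common zero at all: the first $n$ forms force $x_0=\dots=x_{n-1}=0$, and then the last forces $x_n=0$. So it is not a point of $\pi(W)$, and its empty fibre tells you nothing about the fibres of $W\to\pi(W)$. Dropping an equation does not help, because a fibre of $\pi$ is cut out by all $n+1$ forms.

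What you need is a point of $\pi(W)$ with a finite non-empty fibre. For instance, take $F_i=x_i^{d_i}$ for $i<n$ and $F_n=x_0^{d_n}$; its only common zero is $[0:\cdots:0:1]$. The fibre-dimension theorem says every component of every non-empty fibre of $W\to\pi(W)$ has dimension at least $\dim W-\dim\pi(W)$, which gives $\dim\pi(W)\ge N-1$. Separately, $(x_0^{d_0},\dots,x_n^{d_n})\notin\pi(W)$ gives $\pi(W)\ne\mathbb A^N$.

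\textbf{The normalization.} Citing Macaulay's quotient formula from \cite{CLO} is close to circular, since the treatment of Macaulay matrices there presupposes the existence and irreducibility of $\operatorname{Res}$. You can avoid it using only the easy half of Macaulay's construction, applied to the primitive generator $P$ from your Step 2:
\begin{itemize}
\item Set $\delta=\sum_i(d_i-1)+1$. Let $M$ have rows $(x^\alpha/x_i^{d_i})F_i$, where $x^\alpha$ runs over monomials of degree $\delta$ and $i$ is the least index with $x_i^{d_i}\mid x^\alpha$.
\item A common zero $x^*\neq0$ gives the non-zero kernel vector $((x^*)^\beta)_{|\beta|=\delta}$. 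Hence $D=\det M\in\mathbb Z[u]$ vanishes on $\pi(W)=V(P)$.
\item At the coefficient vector $u^*$ of $(x_0^{d_0},\dots,x_n^{d_n})$, the matrix $M$ is the identity, with rows and columns both indexed by the degree-$\delta$ monomials.
\item So $P\mid D$ in $\mathbb Q[u]$ (Nullstellensatz, $P$ irreducible), hence in $\mathbb Z[u]$ by Gauss's lemma. Evaluating gives $P(u^*)\mid D(u^*)=1$.
\end{itemize}

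\textbf{Characteristic.} Your argument, like the statement in \cite{CLO}, gives Property 1 only in characteristic $0$, whereas the paper states it for every algebraically closed $K$. Positive characteristic needs one more step. For example, run the properness argument over $\operatorname{Spec}\mathbb Z$: $W_{\mathbb Z}$ is a vector bundle over $\mathbb P^n_{\mathbb Z}$, so its generic fibre is dense. Its closed image is then the closure of $V(P)\subset\mathbb A^N_{\mathbb Q}$, which is $V(P)$ because $P$ is primitive. None of this affects the paper, which only works over $\mathbb C$.
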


Here a non-trivial solution means a point
\((x_0,\ldots,x_n)\neq 0\)
satisfying all the equations. Equivalently, the vanishing of the resultant
detects whether the hypersurfaces defined by \(F_0,\ldots,F_n\) have a common
point in the projective space \(\mathbb P^n(K)\).

We shall also use the following two algebraic properties.

\begin{theorem}[\cite{HaM2015}, Theorem 2]
\label{thm:resultant-properties}
Let \(F_0,\ldots,F_n\in K[x_0,\ldots,x_n]\) be homogeneous polynomials of
degrees \(d_0,\ldots,d_n\), respectively. Then the following hold.

\begin{enumerate}
    \item If \(i<j\), then
    \[
    \begin{aligned}
    \operatorname{Res}
    (F_0,\ldots,F_i,\ldots,F_j,\ldots,F_n)  
    =(-1)^{d_0d_1\cdots d_n}
    \operatorname{Res}
    (F_0,\ldots,F_j,\ldots,F_i,\ldots,F_n).
    \end{aligned}
    \]

    \item If \(F_j=F'_jF''_j\),
    where \(F'_j\) and \(F''_j\) are homogeneous polynomials, then
    \[
    \begin{aligned}
    \operatorname{Res}(F_0,\ldots,F_j,\ldots,F_n)
    =
    \operatorname{Res}(F_0,\ldots,F'_j,\ldots,F_n)\cdot
    \operatorname{Res}(F_0,\ldots,F''_j,\ldots,F_n).
    \end{aligned}
    \]
\end{enumerate}
\end{theorem}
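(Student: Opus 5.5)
We argue in the universal setting: the coefficients $c_{i,\alpha}$ are independent indeterminates, and each identity is specialized at the end. Both assertions would follow from the three properties in Theorem~\ref{thm:multivariate-resultant}, plus two standard facts. The first is that $\operatorname{Res}_{d_0,\ldots,d_n}$ is homogeneous of degree $\prod_{k\neq i}d_k$ in the coefficients of $F_i$. The second is the change-of-variables rule: for $A\in GL_{n+1}(K)$,
\[
\operatorname{Res}(F_0\circ A,\ldots,F_n\circ A)=\det(A)^{d_0d_1\cdots d_n}\operatorname{Res}(F_0,\ldots,F_n).
\]
The general strategy is to compare two polynomials with the same zero locus. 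Irreducibility and the Nullstellensatz then force one to be a constant times a product of powers of the other's factors. Degree counting fixes the exponents, and evaluation on monomial systems fixes the constant.

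For the transposition property, set $P(F_0,\ldots,F_n)=\operatorname{Res}(F_0,\ldots,F_j,\ldots,F_i,\ldots,F_n)$. This is a polynomial in the coefficients of the system of degrees $(d_0,\ldots,d_n)$. It vanishes exactly when the $F_k$ have a common non-trivial zero, because the common zero set does not depend on the order of the equations. It is also irreducible, being the irreducible polynomial $\operatorname{Res}_{\ldots,d_j,\ldots,d_i,\ldots}$ with its variables relabelled. By the Nullstellensatz, $\operatorname{Res}$ and $P$ are therefore constant multiples of each other.

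To find the constant, evaluate at $F_k=x_k^{d_k}$, where $\operatorname{Res}=1$. Then $P$ is the resultant of the system with $x_j^{d_j}$ in slot $i$ and $x_i^{d_i}$ in slot $j$. Let $A$ be the linear map interchanging the variables $x_i$ and $x_j$; it has $\det A=-1$. Applying $A$ turns this system into the normalized monomial system $(x_0^{e_0},\ldots,x_n^{e_n})$, where $(e_k)$ is the permuted degree list. The change-of-variables rule then gives $P=(-1)^{d_0\cdots d_n}$, which is the claimed sign.

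For multiplicativity, let $F'_j,F''_j$ have degrees $d'_j,d''_j$ with $d'_j+d''_j=d_j$. Regard both
\[
L=\operatorname{Res}(F_0,\ldots,F'_jF''_j,\ldots,F_n)
\quad\text{and}\quad
R'R''=\operatorname{Res}(\ldots,F'_j,\ldots)\cdot\operatorname{Res}(\ldots,F''_j,\ldots)
\]
as polynomials in the independent coefficients of $F_k$ $(k\neq j)$, $F'_j$ and $F''_j$. A point is a common zero of the system containing $F'_jF''_j$ if and only if it is a common zero of the system containing $F'_j$ or of the system containing $F''_j$. Hence $V(L)=V(R')\cup V(R'')$. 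Both $R'$ and $R''$ are irreducible, and they are not proportional since they involve disjoint coefficient sets. The Nullstellensatz therefore gives $L=c\,R'^{\,a}R''^{\,b}$ with $c\in K^*$ and $a,b\ge1$.

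Next, compare degrees in the coefficients of $F'_j$. The entries of $F'_jF''_j$ are bilinear in the coefficients of $F'_j$ and $F''_j$. So $L$ has degree $\prod_{k\neq j}d_k$ in the coefficients of $F'_j$, which is exactly the degree of $R'$ there, while $R''$ does not involve them. This forces $a=1$, and symmetrically $b=1$. Finally, take $F_k=x_k^{d_k}$ for $k\ne j$ and $F'_j=x_j^{d'_j}$, $F''_j=x_j^{d''_j}$. Then all three resultants are normalized monomial systems, so each equals $1$, and hence $c=1$. Specializing the coefficients gives the identity over $K$.

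The main obstacle is the exponent and constant bookkeeping. One must check that $L$ genuinely has the stated degree in the coefficients of $F'_j$: no cancellation can occur, because the universal $L$ is a nonzero polynomial and the degree is read off after substitution. One must also handle the permuted-degree normalization in the first part carefully, which I would do via the change-of-variables rule as described.
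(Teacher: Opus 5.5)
Your argument is correct. The paper does not prove this statement at all; it is imported from \cite{HaM2015}, and ultimately from \cite{CLO}. So your proposal should be read as a self-contained derivation of both identities from the properties in Theorem~\ref{thm:multivariate-resultant}, which the citation does not provide.

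The scheme works as you describe in both parts. Equal zero loci plus irreducibility give, via the Nullstellensatz, proportionality in part 1 and $L=cR'^{\,a}R''^{\,b}$ in part 2. Degree counting fixes the exponents, and evaluation on $(x_0^{d_0},\ldots,x_n^{d_n})$ fixes the constant. The $v'$-degree argument is also sound: $L$ is a nonzero polynomial obtained by substituting expressions linear in the coefficients of $F'_j$ into something homogeneous of degree $\prod_{k\neq j}d_k$ in the $j$-th slot.

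What your route makes visible, and the citation hides, is that both identities hinge on the exact homogeneity degree $\prod_{k\neq i}d_k$, and in particular on its independence of $d_i$. That independence is what gives $a=b=1$. Through $\sum_i d_i\prod_{k\neq i}d_k=(n+1)d_0\cdots d_n$, the same degree also produces the exponent in $\det(A)^{d_0\cdots d_n}$, and hence the sign in part 1.

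The homogeneity degree and the transformation law are not among the properties listed in Theorem~\ref{thm:multivariate-resultant}, so flag them as additional cited facts. The transformation law can itself be obtained by your own technique, so nothing is circular. View $\operatorname{Res}(F\circ A)$ as a polynomial in the entries of $A$ and the coefficients. Its zero locus is $V(\det)\cup V(\operatorname{Res})$; set $A=I$ and count degrees in $A$.

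Three minor precision points:
\begin{enumerate}
\item $R'$ and $R''$ do not involve disjoint coefficient sets, since both contain the coefficients of $F_k$, $k\neq j$. They are non-associate because $R'$ has positive degree in the coefficients of $F'_j$ while $R''$ has degree zero there.
\item Part 2 tacitly needs $d'_j,d''_j\geq 1$, since Theorem~\ref{thm:multivariate-resultant} says nothing about a constant entry.
\item The Nullstellensatz step uses irreducibility in $K[u_{i,\alpha}]$, not just in $\mathbb Z[u_{i,\alpha}]$. If you only want to rely on the statement over $\mathbb C$ as in \cite{CLO}, run the argument there, obtain the identities in $\mathbb Z[u_{i,\alpha}]$, and specialize to $K$. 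This fits your universal framing.
\end{enumerate}
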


The multiplicativity property explains why, in our later computations, the
resultant may split into several algebraic factors. Each factor gives a
possible algebraic branch for the existence of a non-zero projective solution.
However, in the charged \(N\)-body problem, the projective variables often
represent mutual distances or ratios of distances. Thus, after obtaining the
condition
\[
    \operatorname{Res}(F_0,\ldots,F_n)=0,
\]
one must still determine which algebraic branches admit positive real solutions.

We now recall a concrete formula which will be useful for the case of three
ternary quadratic forms. Let
\[
\begin{aligned}
F_0={}&c_{01}X^2+c_{02}Y^2+c_{03}Z^2
      +c_{04}XY+c_{05}XZ+c_{06}YZ,\\
F_1={}&c_{11}X^2+c_{12}Y^2+c_{13}Z^2
      +c_{14}XY+c_{15}XZ+c_{16}YZ,\\
F_2={}&c_{21}X^2+c_{22}Y^2+c_{23}Z^2
      +c_{24}XY+c_{25}XZ+c_{26}YZ.
\end{aligned}
\]
These are homogeneous quadratic polynomials in three variables. By Theorem
\ref{thm:multivariate-resultant},
\[
    \operatorname{Res}(F_0,F_1,F_2)=0
\]
if and only if the system
\[
    F_0=F_1=F_2=0
\]
has a non-trivial common zero in \(\mathbb P^2(K)\). In principle, this
resultant is a polynomial in the eighteen coefficients \(c_{ij}\). The following
formula gives an efficient determinant expression for it.

Let
\[
    J=\det
    \begin{pmatrix}
    \dfrac{\partial F_0}{\partial X} &
    \dfrac{\partial F_0}{\partial Y} &
    \dfrac{\partial F_0}{\partial Z} \\[1.2ex]
    \dfrac{\partial F_1}{\partial X} &
    \dfrac{\partial F_1}{\partial Y} &
    \dfrac{\partial F_1}{\partial Z} \\[1.2ex]
    \dfrac{\partial F_2}{\partial X} &
    \dfrac{\partial F_2}{\partial Y} &
    \dfrac{\partial F_2}{\partial Z}
    \end{pmatrix}
\]
be the Jacobian determinant of \(F_0,F_1,F_2\). Since \(J\) is a homogeneous
cubic polynomial, its first partial derivatives are homogeneous quadratic
polynomials. Write
\[
\begin{aligned}
\frac{\partial J}{\partial X}
={}&b_{01}X^2+b_{02}Y^2+b_{03}Z^2
   +b_{04}XY+b_{05}XZ+b_{06}YZ,\\
\frac{\partial J}{\partial Y}
={}&b_{11}X^2+b_{12}Y^2+b_{13}Z^2
   +b_{14}XY+b_{15}XZ+b_{16}YZ,\\
\frac{\partial J}{\partial Z}
={}&b_{21}X^2+b_{22}Y^2+b_{23}Z^2
   +b_{24}XY+b_{25}XZ+b_{26}YZ.
\end{aligned}
\]

\begin{proposition}[\cite{HaM2015}, Proposition 1]\label{prop:three-ternary-quadrics}
With the notation above, the resultant of \(F_0,F_1,F_2\) is given by
\[
\operatorname{Res}(F_0,F_1,F_2)
=
\frac{-1}{512}
\det
\begin{pmatrix}
c_{01} & c_{02} & c_{03} & c_{04} & c_{05} & c_{06} \\
c_{11} & c_{12} & c_{13} & c_{14} & c_{15} & c_{16} \\
c_{21} & c_{22} & c_{23} & c_{24} & c_{25} & c_{26} \\
b_{01} & b_{02} & b_{03} & b_{04} & b_{05} & b_{06} \\
b_{11} & b_{12} & b_{13} & b_{14} & b_{15} & b_{16} \\
b_{21} & b_{22} & b_{23} & b_{24} & b_{25} & b_{26}
\end{pmatrix}.
\]
\end{proposition}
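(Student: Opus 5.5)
The plan is to identify the $6\times6$ determinant, call it $D$, as a constant multiple of $\operatorname{Res}_{2,2,2}$ by using irreducibility and a degree count, and then to fix the constant with the normalization in Theorem~\ref{thm:multivariate-resultant}. Treat all eighteen coefficients $c_{i\alpha}$ as independent indeterminates. The argument has four steps.

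\emph{Step 1: $D$ vanishes on the resultant locus.} Suppose $p\neq0$ is a common zero of $F_0,F_1,F_2$. I will show that all six quadrics
\[
F_0,\;F_1,\;F_2,\;\partial J/\partial X,\;\partial J/\partial Y,\;\partial J/\partial Z
\]
vanish at $p$. The quadrics vanishing at a fixed point form a $5$-dimensional subspace of the $6$-dimensional space of ternary quadrics. So the six coefficient rows would be linearly dependent, giving $D=0$.

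To prove $\nabla J(p)=0$, let $M(x)$ be the Jacobian matrix with rows $\nabla F_i$; its entries are linear in $x$. By Euler's formula, $M(p)p=2(F_i(p))_i=0$, so $\operatorname{rank}M(p)\le2$.

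Jacobi's formula gives $\partial_k J=\operatorname{tr}\bigl(\operatorname{adj}(M)\,\partial_k M\bigr)$.
\begin{itemize}
\item If $\operatorname{rank}M(p)\le1$, then $\operatorname{adj}M(p)=0$ and we are done.
\item If $\operatorname{rank}M(p)=2$, then $\operatorname{adj}M(p)=\mu\,p\,w^T$, where $\mu$ is a scalar and $w$ spans the left kernel, $w^TM(p)=0$.
\end{itemize}
In the second case,
\[
\partial_kJ(p)=\mu\sum_i w_i\sum_l \partial_k\partial_lF_i\,p_l=\mu\sum_i w_i\,\partial_kF_i(p)=\mu\,(w^TM(p))_k=0.
\]
The middle equality is Euler's identity applied to the linear forms $\partial_kF_i$.

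\emph{Step 2: degree count.} Each row $c_{i\cdot}$ is linear in the coefficients of $F_i$. The cubic $J$ is multilinear, of degree one in the coefficients of each of $F_0,F_1,F_2$, and so are its partials. Hence $D$ has multidegree $(1+3,\,1+3,\,1+3)=(4,4,4)$. This equals the known multidegree of $\operatorname{Res}_{2,2,2}$, namely $d_1d_2=d_0d_2=d_0d_1=4$, from \cite{CLO}.

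\emph{Step 3: $D$ is a constant multiple of the resultant.} The resultant is irreducible by Theorem~\ref{thm:multivariate-resultant}, so its zero set is an irreducible hypersurface in coefficient space. By Step 1, $D$ vanishes on this hypersurface, so the Nullstellensatz gives $\operatorname{Res}\mid D$. Comparing degrees via Step 2, $D=\kappa\operatorname{Res}$ for a constant $\kappa$. It remains to check that $D\not\equiv0$ and to find $\kappa$.

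\emph{Step 4: normalization.} Take $F_0=X^2$, $F_1=Y^2$, $F_2=Z^2$. Then $J=8XYZ$, so
\[
\partial J/\partial X=8YZ,\qquad \partial J/\partial Y=8XZ,\qquad \partial J/\partial Z=8XY.
\]
With columns ordered $X^2,Y^2,Z^2,XY,XZ,YZ$, the matrix is block diagonal. The upper block is $I_3$ and the lower block is the anti-diagonal matrix with entries $8$, whose determinant is $-512$. Hence $D=-512$. Since $\operatorname{Res}(X^2,Y^2,Z^2)=1$, this gives $\kappa=-512$, that is, $\operatorname{Res}=-D/512$.

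The main obstacle is Step 1, specifically showing that $\nabla J$ vanishes at a common zero. The adjugate argument above is the route I would take, with the rank-$\le1$ case handled separately as noted.
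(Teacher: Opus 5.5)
Your argument is correct and complete. The paper gives no proof of this proposition: it quotes the classical formula for $\operatorname{Res}_{2,2,2}$ from \cite{HaM2015}. So there is no internal route to compare against, and your proposal supplies a self-contained derivation of what the paper takes from the literature.

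Each step of that derivation holds:
\begin{itemize}
\item The adjugate form of Jacobi's formula, combined with Euler's identity for the linear forms $\partial_kF_i$, shows that $\nabla J$ vanishes at any non-trivial common zero $p$. Your split on $\operatorname{rank}M(p)$ is exactly what is needed. In the rank-two case, $\operatorname{adj}M(p)=\mu\,p\,w^T$ follows from $M\operatorname{adj}M=\operatorname{adj}M\,M=J(p)I=0$ together with $\ker M(p)=\operatorname{span}\{p\}$.
\item The six quadrics then lie in the $5$-dimensional space of quadrics vanishing at $p$, so $D=0$ on $V(\operatorname{Res})$.
\item Irreducibility and the Nullstellensatz give $\operatorname{Res}\mid D$.
\item The multidegree match $(4,4,4)$ forces $D=\kappa\operatorname{Res}$ with $\kappa$ a constant.
\item Specializing to $(X^2,Y^2,Z^2)$ gives $I_3$ in the upper block and the anti-diagonal block with entries $8$ in the lower block, so the determinant is $-512$. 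This shows $D\not\equiv0$ and fixes $\kappa=-512$.
\end{itemize}

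You use two ingredients beyond what the paper recalls in Theorem~\ref{thm:multivariate-resultant}. The first is the degree formula $\deg_{F_i}\operatorname{Res}=\prod_{j\neq i}d_j$, which you rightly attribute to \cite{CLO}; only the total degree $12$ is actually needed. The second is characteristic $\neq 2$, required both for Euler's identity and for dividing by $512$. Both are harmless in the paper's setting.
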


If $F_0,F_1,F_2$ have simpler forms, we have
\begin{proposition}\label{prop:special-ternary-quadrics}
Suppose
\begin{eqnarray*}
F_0&=&X^2 + b_1 XY + c_1 Y^2, \\
F_1&=&Y^2 + b_2 YZ + c_2 Z^2, \\
F_2&=&Z^2 + b_3 ZX + c_3 X^2.
\end{eqnarray*}
Then the resultant of \(F_0,F_1,F_2\) is given by
\begin{eqnarray*}
\operatorname{Res}(F_0,F_1,F_2)
=\prod_{\varepsilon_1,\varepsilon_2,\varepsilon_3=\pm1}
\left[
1-\frac{(-b_1+\epsilon_1\sqrt{b_1^2-4c_1})(-b_2+\epsilon_2\sqrt{b_2^2-4c_2})(-b_3+\epsilon_3\sqrt{b_3^2-4c_3})}{8}
\right].
\end{eqnarray*}
\end{proposition}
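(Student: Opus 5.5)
Each $F_i$ factors over $\mathbb C$ into linear forms, and the multiplicativity in Theorem~\ref{thm:resultant-properties}(2) then reduces $\operatorname{Res}(F_0,F_1,F_2)$ to a product of eight resultants of three linear forms. Each of those is a $3\times3$ determinant, which we compute directly. Concretely, let $\alpha_1^{\pm}=\bigl(-b_1\pm\sqrt{b_1^2-4c_1}\bigr)/2$ be the two roots of $t^2+b_1t+c_1=0$. Then
\[
F_0=(X-\alpha_1^{+}Y)(X-\alpha_1^{-}Y),\qquad F_1=(Y-\alpha_2^{+}Z)(Y-\alpha_2^{-}Z),\qquad F_2=(Z-\alpha_3^{+}X)(Z-\alpha_3^{-}X),
\]
with $\alpha_2^\pm,\alpha_3^\pm$ defined analogously from $(b_2,c_2)$ and $(b_3,c_3)$. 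We work over $K=\mathbb C$ (or the algebraic closure of the coefficient field), so all square roots make sense. Applying Theorem~\ref{thm:resultant-properties}(2) once in each slot gives
\[
\operatorname{Res}(F_0,F_1,F_2)=\prod_{\varepsilon_1,\varepsilon_2,\varepsilon_3=\pm1}\operatorname{Res}\bigl(X-\alpha_1^{\varepsilon_1}Y,\;Y-\alpha_2^{\varepsilon_2}Z,\;Z-\alpha_3^{\varepsilon_3}X\bigr).
\]

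Next we compute the resultant of three linear forms $L_i=\sum_j a_{ij}x_j$. It is a polynomial in the nine coefficients that vanishes exactly when $\det(a_{ij})=0$. Since $\det$ is irreducible and $\operatorname{Res}$ is irreducible by Theorem~\ref{thm:multivariate-resultant}(3), the two agree up to a constant. The normalization $\operatorname{Res}(X,Y,Z)=1=\det I$ fixes that constant, so $\operatorname{Res}(L_0,L_1,L_2)=\det(a_{ij})$. For our factors, with columns indexed by $X,Y,Z$, the coefficient matrix is
\[
\begin{pmatrix}1&-\alpha_1&0\\0&1&-\alpha_2\\-\alpha_3&0&1\end{pmatrix},
\]
whose determinant is $1-\alpha_1\alpha_2\alpha_3$. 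This follows from expanding along the first row: $1\cdot1+\alpha_1\cdot(0-\alpha_2\alpha_3)=1-\alpha_1\alpha_2\alpha_3$. Substituting $\alpha_i^{\varepsilon_i}=(-b_i+\varepsilon_i\sqrt{b_i^2-4c_i})/2$ turns the product $\alpha_1\alpha_2\alpha_3$ into the stated expression with the factor $1/8$.

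The only delicate point is justifying the use of multiplicativity when the factors have irrational (or complex) coefficients. The identity in Theorem~\ref{thm:resultant-properties}(2) is a polynomial identity holding over any algebraically closed field, so this is fine. Alternatively, we can note that the final product is symmetric under each sign flip $\varepsilon_i\mapsto-\varepsilon_i$, so it is a polynomial in $b_i,c_i$ and no ambiguity in the choice of square-root branch arises. The case $b_i^2=4c_i$, where the two roots coincide, needs no separate treatment: the factorization still holds as a square, and multiplicativity applies equally.
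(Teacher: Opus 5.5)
Your proof is correct and complete. Each quadric is monic in the appropriate variable, so the factorizations $F_0=(X-\alpha_1^{+}Y)(X-\alpha_1^{-}Y)$ etc.\ always hold. Multiplicativity then splits $\operatorname{Res}(F_0,F_1,F_2)$ into eight resultants of linear forms. Each of these is the determinant $1-\alpha_1\alpha_2\alpha_3$, which reproduces the stated factors with the $1/8$.

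The paper gives no proof of this proposition. It is stated immediately after the general $6\times6$ determinant formula of Proposition~\ref{prop:three-ternary-quadrics}, from which it could only be confirmed by a heavy symbolic expansion.

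The closest argument in the paper is the Remark following Theorem~\ref{thm:charged-four-body-equilibrium}. There one passes to ratios of the variables, composes the roots of $P_1$ and $P_2$ into the quartic $Q$, and tests $\operatorname{Res}(Q,P_3)=0$. That root-composition idea is the same mechanism as yours: a nontrivial common zero forces $\alpha_1^{\varepsilon_1}\alpha_2^{\varepsilon_2}\alpha_3^{\varepsilon_3}=1$ for some choice of roots. However, it only characterizes when the resultant vanishes. Your route, combining multiplicativity with the normalization $\operatorname{Res}(X,Y,Z)=1$, determines the exact value of the resultant, including the constant. The paper's statement needs exactly that.

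One step is stated a little quickly: passing from ``same zero set'' to ``equal up to a constant'' for $\operatorname{Res}_{1,1,1}$ and $\det$. This uses the Nullstellensatz. Each polynomial lies in the radical of the principal ideal generated by the other, and that ideal is prime, hence radical, because its generator is irreducible. Alternatively, you can simply cite the standard fact that the resultant of $n+1$ linear forms equals the determinant of their coefficient matrix.

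Your observation that the product is invariant under each flip $\varepsilon_i\mapsto-\varepsilon_i$ correctly disposes of any branch ambiguity in the square roots.
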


In the sequel, some equilibrium equations will be reduced to systems of three
homogeneous quadratic equations in three variables. Proposition
\ref{prop:three-ternary-quadrics}, and in particular the special form in
Proposition \ref{prop:special-ternary-quadrics}, provides a practical way to
compute their resultants. After the resultant is factored, Theorem
\ref{thm:multivariate-resultant} gives an algebraic criterion for the existence
of a non-zero projective common root. Finally, because the variables in our
problem correspond to distances or ratios of distances, the resulting algebraic
conditions must be combined with the positivity requirement on the relevant
distance variables.

\section{Three-body problem}
In this section we give a complete classification of the equilibria 
and its linear stability in the charged three-body problem.

\subsection{The existence condition}

Recall that \(\delta_{ij}\) is the effective interaction coefficient between the
\(i\)-th and the \(j\)-th particles. The case \(\delta_{ij}>0\) corresponds to an
attractive interaction, while \(\delta_{ij}<0\) corresponds to a repulsive
interaction.

We first observe that no non-collinear equilibrium can occur.

\begin{lemma}\label{Lm:c3bp}
Let \(a=(a_1,a_2,a_3)\) be a non-collision equilibrium of the charged
three-body problem. Under the non-trivial assumption that at least two of the
coefficients \(\delta_{ij}\) are nonzero, the three points \(a_1,a_2,a_3\) must
be collinear.
\end{lemma}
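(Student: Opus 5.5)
Write the equilibrium equations explicitly. Since $U=\sum_{i<j}\delta_{ij}/|q_i-q_j|$, the condition $\partial U/\partial q_i=0$ at $a$ reads
\[
    \sum_{j\neq i}\frac{\delta_{ij}}{r_{ij}^3}(a_j-a_i)=0,\qquad i=1,2,3,\qquad r_{ij}=|a_i-a_j|.
\]
Set $\mu_{ij}=\delta_{ij}/r_{ij}^3$, which has the same sign and zero pattern as $\delta_{ij}$. Argue by contradiction: suppose $a_1,a_2,a_3$ are not collinear. Then the vectors $a_2-a_1$ and $a_3-a_1$ are linearly independent in the plane (the configuration lies in the affine plane they span, so the planar case suffices for any ambient dimension $k$).

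The equation for particle $1$ is
\[
\mu_{12}(a_2-a_1)+\mu_{13}(a_3-a_1)=0.
\]
Linear independence forces $\mu_{12}=\mu_{13}=0$. Similarly, the equation for particle $2$ involves the independent vectors $a_1-a_2$ and $a_3-a_2$, so $\mu_{12}=\mu_{23}=0$. Hence all three $\mu_{ij}$ vanish, and therefore $\delta_{12}=\delta_{13}=\delta_{23}=0$ because $r_{ij}>0$ in a non-collision configuration. This contradicts the non-trivial assumption that at least two $\delta_{ij}$ are nonzero; in fact even one nonzero coefficient gives a contradiction. So the configuration must be collinear.

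There is no real obstacle in this argument. The only points requiring care are that the non-collision hypothesis makes all $r_{ij}$ positive and finite, so that $\mu_{ij}=0$ is equivalent to $\delta_{ij}=0$, and that non-collinearity of three points is exactly the linear independence of the two difference vectors used at each vertex.
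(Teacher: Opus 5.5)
Your proof is correct and is essentially the paper's argument: non-collinearity makes the two edge vectors at vertex $1$ (and then at vertex $2$) linearly independent, so $\delta_{12}=\delta_{13}=0$ and then $\delta_{23}=0$, which contradicts the non-trivial assumption. Your added remarks, that $r_{ij}>0$ makes $\mu_{ij}=0$ equivalent to $\delta_{ij}=0$ and that the argument works in any ambient dimension, are accurate but not needed beyond what the paper states.
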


\begin{proof}
Suppose, by contradiction, that \(a_1,a_2,a_3\) are not collinear. Then, for
each fixed \(i\), the two vectors \(a_j-a_i\) and \(a_k-a_i\), where
\(\{i,j,k\}=\{1,2,3\}\), are linearly independent. The equilibrium equation for
the \(i\)-th particle is
\[
    \sum_{j\neq i}
    \frac{\delta_{ij}}{|a_j-a_i|^3}(a_j-a_i)=0 .
\]
Since the two vectors in this equation are linearly independent, both
coefficients must vanish. Taking \(i=1\), we obtain
\[
    \delta_{12}=\delta_{13}=0.
\]
Taking \(i=2\), we further obtain \(\delta_{23}=0\). Hence all three
interaction coefficients vanish, which contradicts the non-trivial assumption.
Therefore every non-collision equilibrium must be collinear.
\end{proof}

We now determine the exact condition for the existence of a collinear
equilibrium.

\begin{theorem}\label{thm:charged-three-body-equilibrium}
Consider the charged three-body problem with masses
\(m_1,m_2,m_3>0\) and charges \(e_1,e_2,e_3\). Suppose that at least two of
\(\delta_{12},\delta_{23},\delta_{13}\) are nonzero. Then a non-collision
equilibrium exists if and only if there is a permutation \((i,j,k)\) of
\((1,2,3)\) such that
\[
    \delta_{ij}\delta_{jk}>0,\qquad
    \delta_{ij}\delta_{ik}<0,
\]
and
\[
    \sqrt{|\delta_{ik}|}
    =
    \sqrt{|\delta_{ij}|}
    +
    \sqrt{|\delta_{jk}|}.
    \tag{3.1}
\]
In this case, the equilibrium is collinear. More precisely, the \(j\)-th
particle lies between the \(i\)-th and the \(k\)-th particles, and the adjacent
distances satisfy
\[
    |a_i-a_j|:|a_j-a_k|
    =
    \sqrt{|\delta_{ij}|}:\sqrt{|\delta_{jk}|}.
    \tag{3.2}
\]
Conversely, if the above conditions hold, then placing the three particles on a
line with the distance ratio \((3.2)\) gives a non-collision equilibrium, unique
up to translations, rotations and scalings.
\end{theorem}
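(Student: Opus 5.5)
By Lemma~\ref{Lm:c3bp} every non-collision equilibrium is collinear, so we place the particles on the real line. After relabelling by a permutation $(i,j,k)$ we may assume $a_i<a_j<a_k$. We set $x=a_j-a_i>0$ and $y=a_k-a_j>0$. With this ordering, the one-dimensional force-balance equations for the outer particle $i$ and the middle particle $j$ become
\[
\frac{\delta_{ij}}{x^2}+\frac{\delta_{ik}}{(x+y)^2}=0,\qquad
-\frac{\delta_{ij}}{x^2}+\frac{\delta_{jk}}{y^2}=0 .
\]
The equation for particle $k$ is redundant, because the three equations sum to zero: the total force vanishes by translation invariance of $U$. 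The whole proof therefore reduces to analysing these two scalar equations in the unknowns $x,y>0$.

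\textbf{Necessity.} If $\delta_{ij}=0$, the first equation forces $\delta_{ik}=0$. The second then gives $\delta_{jk}=0$, so all three coefficients vanish, contradicting the non-trivial assumption. Hence $\delta_{ij}\ne0$, and the same two equations then give $\delta_{ik}\neq0$ and $\delta_{jk}\neq0$.

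\begin{itemize}
\item The second equation gives $\delta_{jk}/\delta_{ij}=y^2/x^2>0$. This yields $\delta_{ij}\delta_{jk}>0$ and $x:y=\sqrt{|\delta_{ij}|}:\sqrt{|\delta_{jk}|}$, which is (3.2).
\item The first equation gives $\delta_{ik}/\delta_{ij}=-(x+y)^2/x^2<0$, hence $\delta_{ij}\delta_{ik}<0$.
\item Taking square roots in the first equation gives $\sqrt{|\delta_{ik}|}/\sqrt{|\delta_{ij}|}=(x+y)/x=1+y/x$. Substituting $y/x=\sqrt{|\delta_{jk}|}/\sqrt{|\delta_{ij}|}$ from (3.2) gives (3.1).
\end{itemize}

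\textbf{Sufficiency.} Given the sign conditions and (3.1), choose $x=\sqrt{|\delta_{ij}|}$ and $y=\sqrt{|\delta_{jk}|}$, both positive.

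\begin{itemize}
\item Since $\delta_{ij}$ and $\delta_{jk}$ have the same sign, we have $\delta_{jk}/y^2=\mathrm{sgn}(\delta_{ij})=\delta_{ij}/x^2$, so the second equation holds.
\item Since $\delta_{ik}$ has the opposite sign, (3.1) gives $\delta_{ik}/(x+y)^2=-\mathrm{sgn}(\delta_{ij})=-\delta_{ij}/x^2$, so the first equation holds.
\end{itemize}

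The third equation then follows from the vanishing of the total force. This produces a non-collision collinear equilibrium with particle $j$ in the middle.

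\textbf{Uniqueness and main obstacle.} For uniqueness, note that the analysis above used only the ratio $y/x$, which is forced to equal $\sqrt{|\delta_{jk}|/|\delta_{ij}|}$ once the middle particle is fixed. What needs care is that a different choice of middle particle cannot also give an equilibrium. With $j$ in the middle, the sign conditions say that $\delta_{ij}$ and $\delta_{jk}$ share a sign while $\delta_{ik}$ has the opposite sign. Thus the middle particle is the common index of the two equal-sign coefficients, and that index is unique. The labelling of the two outer particles $i,k$ is only a reflection, which is a rotation in the plane. The configuration is therefore unique up to translations, rotations and scalings.

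The main obstacle is bookkeeping rather than depth. One must get the sign conventions in the one-dimensional force equations right, and must use the non-trivial assumption to exclude the degenerate case where some $\delta$ vanishes.
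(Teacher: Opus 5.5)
Your proposal is correct and takes essentially the same route as the paper. You both reduce to a collinear configuration via Lemma~\ref{Lm:c3bp}, write the two independent scalar force-balance equations for a fixed ordering, and read off the sign conditions, the ratio (3.2) and the relation (3.1) by taking square roots, with the converse checked by direct substitution. You are in fact slightly more careful than the paper in two places: you explicitly rule out a vanishing coefficient before dividing (the paper divides by $\delta_{23}$ without comment), and you justify why the middle particle is uniquely determined by the sign pattern.
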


\begin{proof}
By the previous lemma, every equilibrium must be collinear. After relabelling
the particles, we may assume that the order on the line is
$a_1,a_2,a_3$.
Up to a translation, a rotation and a scaling, we may write
\[
    a_1=0,\qquad
    a_2=x\ell,\qquad
    a_3=(x+1)\ell,
\]
where \(x>0\) and \(\ell>0\). The equilibrium equations become
\[
    \frac{\delta_{12}}{x^2\ell^2}
    +
    \frac{\delta_{13}}{(x+1)^2\ell^2}
    =0,
    \tag{3.3}
\]
\[
    -\frac{\delta_{12}}{x^2\ell^2}
    +
    \frac{\delta_{23}}{\ell^2}
    =0,
    \tag{3.4}
\]
and
\[
    -\frac{\delta_{13}}{(x+1)^2\ell^2}
    -
    \frac{\delta_{23}}{\ell^2}
    =0.
    \tag{3.5}
\]
Only two of these three equations are independent, since the sum of the total
forces is identically zero.

From \((3.4)\), we obtain
\[
    x^2=\frac{\delta_{12}}{\delta_{23}}.
\]
Since \(x>0\), this implies $\delta_{12}\delta_{23}>0$.
Moreover, from \((3.5)\), we obtain
\[
    (x+1)^2=-\frac{\delta_{13}}{\delta_{23}},
\]
and hence $\delta_{13}\delta_{23}<0$.
Therefore \(\delta_{12}\) and \(\delta_{23}\) have the same sign, while
\(\delta_{13}\) has the opposite sign.

Taking positive square roots gives
\[
    x=\sqrt{\frac{|\delta_{12}|}{|\delta_{23}|}},
    \qquad
    x+1=\sqrt{\frac{|\delta_{13}|}{|\delta_{23}|}}.
\]
Thus
\[
    \sqrt{|\delta_{12}|}+\sqrt{|\delta_{23}|}
    =
    \sqrt{|\delta_{13}|}.
    \tag{3.6}
\]
This is exactly condition \((3.1)\) in the special ordering
\((i,j,k)=(1,2,3)\).

Conversely, suppose that, for the ordering \(a_1,a_2,a_3\), one has
\[
    \delta_{12}\delta_{23}>0,\qquad
    \delta_{12}\delta_{13}<0,
\]
and
\[
    \sqrt{|\delta_{13}|}
    =
    \sqrt{|\delta_{12}|}
    +
    \sqrt{|\delta_{23}|}.
\]
Set
\[
    x=\sqrt{\frac{|\delta_{12}|}{|\delta_{23}|}},
\]
and choose
\[
    a_1=0,\qquad
    a_2=x\ell,\qquad
    a_3=(x+1)\ell,
    \qquad \ell>0.
\]
Then
\[
    x+1=\sqrt{\frac{|\delta_{13}|}{|\delta_{23}|}},
\]
and a direct substitution into \((3.3)\)--\((3.5)\) shows that the total force
on each particle is zero. Hence this collinear configuration is an equilibrium.

For a general ordering of the three particles, the same argument applies after
replacing \((1,2,3)\) by the corresponding permutation \((i,j,k)\). This proves
the theorem.
\end{proof}

\begin{remark}
In the special ordering \(a_1,a_2,a_3\), Theorem
\ref{thm:charged-three-body-equilibrium} says that an equilibrium exists exactly
in one of the following two cases:
\[
    \delta_{12}>0,\qquad
    \delta_{23}>0,\qquad
    \delta_{13}<0,
\]
with
\begin{equation}\label{condition.of.e.1}
    \sqrt{\delta_{12}}+\sqrt{\delta_{23}}
    =
    \sqrt{-\delta_{13}},
\end{equation}
or
\[
    \delta_{12}<0,\qquad
    \delta_{23}<0,\qquad
    \delta_{13}>0,
\]
with
\begin{equation}\label{condition.of.e.2}
    \sqrt{-\delta_{12}}+\sqrt{-\delta_{23}}
    =
    \sqrt{\delta_{13}}.
\end{equation}
Equivalently, the two adjacent interactions must have the same sign, the
interaction between the two endpoints must have the opposite sign, and the
absolute values must satisfy the square-root relation above.
\end{remark}

The preceding theorem gives an explicit algebraic criterion for the existence
of a three-body equilibrium. 


Next, we will consider the stability of the equilibrium, if it exists.

\subsection{Linear stability of the three-body equilibria}

We now discuss the linear stability of the collinear equilibria obtained above.
The equations of motion are
\[
    m_i\ddot q_i=\frac{\partial U}{\partial q_i},
    \qquad i=1,2,3,
\]
and an equilibrium \(a=(a_1,a_2,a_3)\) satisfies
\[
    \frac{\partial U}{\partial q}(a)=0.
\]
Thus the linearized equation at \(a\) is
\[
    M\ddot \xi=D^2U(a)\xi,
\]
where $M=\operatorname{diag}(m_1I_2,m_2I_2,m_3I_2)$.

Assume, after a translation and a rotation, that the equilibrium is ordered as
\[
    a_1=0,\qquad a_2=r_1,\qquad a_3=r_1+r_2,
    \qquad r_1,r_2>0.
\]
Set
\[
    r_3=r_1+r_2.
\]
The equilibrium condition is equivalent to
\[
    \frac{\delta_{12}}{r_1^2}
    =
    \frac{\delta_{23}}{r_2^2}
    =
    -\frac{\delta_{13}}{r_3^2}
    =:\kappa ,
    \qquad \kappa\neq 0.
\]
Hence
\[
    \delta_{12}=\kappa r_1^2,\qquad
    \delta_{23}=\kappa r_2^2,\qquad
    \delta_{13}=-\kappa r_3^2.
\]

Write the perturbation as
\[
    \xi_i=(u_i,v_i),
\]
where \(u_i\) is the perturbation in the collinear direction and \(v_i\) is the
transverse perturbation. Since the configuration is collinear, the linearized
system splits into the longitudinal and transverse parts:
\[
    M_0\ddot u=2Gu,\qquad
    M_0\ddot v=-Gv,
\]
where $M_0=\operatorname{diag}(m_1,m_2,m_3)$,
and \(G\) is the weighted Laplacian matrix
\[
    G=
    \begin{pmatrix}
    c_{12}+c_{13} & -c_{12} & -c_{13}\\
    -c_{12} & c_{12}+c_{23} & -c_{23}\\
    -c_{13} & -c_{23} & c_{13}+c_{23}
    \end{pmatrix},
\]
with $c_{ij}=\frac{\delta_{ij}}{|a_i-a_j|^3}$.
Using the equilibrium relations, we have
\[
    c_{12}=\frac{\kappa}{r_1},\qquad
    c_{23}=\frac{\kappa}{r_2},\qquad
    c_{13}=-\frac{\kappa}{r_3}.
\]
Therefore, for any \(z=(z_1,z_2,z_3)^T\in\mathbb R^3\),
\[
\begin{aligned}
    z^TGz
    &=
    \kappa\left[
    \frac{(z_1-z_2)^2}{r_1}
    +
    \frac{(z_2-z_3)^2}{r_2}
    -
    \frac{(z_1-z_3)^2}{r_3}
    \right]  \\
    &=
    \kappa
    \frac{
    \bigl(r_2(z_1-z_2)-r_1(z_2-z_3)\bigr)^2
    }{r_1r_2r_3}.
\end{aligned}
\]
Thus \(G\) has rank one. More precisely,
\[
    G
    =
    \frac{\kappa}{r_1r_2r_3}
    \begin{pmatrix}
    r_2\\
    -r_3\\
    r_1
    \end{pmatrix}
    \begin{pmatrix}
    r_2 & -r_3 & r_1
    \end{pmatrix}.
\]
Consequently, the generalized eigenvalue problem
\[
    Gz=\mu M_0z
\]
has two zero eigenvalues and one nonzero eigenvalue
\[
    \mu=\kappa \rho,
\]
where
\[
    \rho=
    \frac{1}{r_1r_2r_3}
    \left(
    \frac{r_2^2}{m_1}
    +
    \frac{r_3^2}{m_2}
    +
    \frac{r_1^2}{m_3}
    \right)>0.
\]

It follows that the nonzero characteristic exponents of the longitudinal part
satisfy
\[
    \sigma^2=2\mu=2\kappa\rho,
\]
whereas those of the transverse part satisfy
\[
    \sigma^2=-\mu=-\kappa\rho.
\]
Building on the above, we obtain the following result, which shows that even when such equilibria exist in the charged three-body problem, they are invariably linearly unstable.

\begin{proposition}\label{prop:3bd.unstable}
Every non-collision equilibrium of the charged three-body problem is linearly
unstable.
More precisely:
\begin{enumerate}
    \item If
    \[
        \delta_{12}>0,\qquad \delta_{23}>0,\qquad \delta_{13}<0,
    \]
    then \(\kappa>0\). The longitudinal part has real characteristic exponents
    \[
        \sigma=\pm\sqrt{2\kappa\rho},
    \]
    and hence the equilibrium is linearly unstable in the collinear direction.

    \item If
    \[
        \delta_{12}<0,\qquad \delta_{23}<0,\qquad \delta_{13}>0,
    \]
    then \(\kappa<0\). The transverse part has real characteristic exponents
    \[
        \sigma=\pm\sqrt{|\kappa|\rho},
    \]
    and hence the equilibrium is linearly unstable in the transverse direction.
\end{enumerate}
\end{proposition}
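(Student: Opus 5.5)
The plan is to use the reduction carried out just before the statement. By Theorem~\ref{thm:charged-three-body-equilibrium}, any non-collision equilibrium is collinear, with the middle particle relabelled as particle $2$. The equilibrium relations then give $\delta_{12}=\kappa r_1^2$, $\delta_{23}=\kappa r_2^2$ and $\delta_{13}=-\kappa r_3^2$ for some $\kappa\neq0$. The sign of $\kappa$ is the common sign of $\delta_{12}$ and $\delta_{23}$, which is opposite to that of $\delta_{13}$. So the two sign patterns listed in the proposition are exactly the cases $\kappa>0$ and $\kappa<0$, and every equilibrium falls into one of them.

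First I would verify the linearization. Since $U$ is a sum of terms $\delta_{ij}/|q_i-q_j|$, the Hessian block for the pair $(i,j)$, evaluated at a collinear point with unit direction $e$, equals
\[
\delta_{ij}|a_i-a_j|^{-3}\bigl(3ee^T-I\bigr)
\]
up to the Laplacian sign structure. Along $e$ this gives the factor $2c_{ij}$, and transversally it gives $-c_{ij}$. Hence $M_0\ddot u=2Gu$ and $M_0\ddot v=-Gv$.

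Next, the rank-one formula for $G$ is an algebraic identity. It uses $r_3=r_1+r_2$ and can be checked by expanding both sides. With $w=(r_2,-r_3,r_1)^T$ and $G=\frac{\kappa}{r_1r_2r_3}ww^T$, the equation $Gz=\mu M_0 z$ has a nonzero solution with $\mu\ne0$ only if $z\propto M_0^{-1}w$. This gives $\mu=\frac{\kappa}{r_1r_2r_3}\,w^TM_0^{-1}w=\kappa\rho$, with $\rho>0$.

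For the characteristic exponents, look for solutions of the form $u=e^{\sigma t}z$. Then $\sigma^2M_0z=2Gz$, so $\sigma^2=2\mu$ on the nontrivial mode. Likewise $\sigma^2=-\mu$ for the transverse part. If $\kappa>0$, then $2\kappa\rho>0$, which gives real exponents $\pm\sqrt{2\kappa\rho}$. Thus $u(t)=e^{\sqrt{2\kappa\rho}\,t}M_0^{-1}w$ is an exponentially growing solution of the linearized system. If $\kappa<0$, then $-\kappa\rho=|\kappa|\rho>0$, and the transverse mode $v(t)=e^{\sqrt{|\kappa|\rho}\,t}M_0^{-1}w$ grows exponentially. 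Either way there is an eigenvalue with positive real part, so the equilibrium is linearly unstable.

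The only step that needs care is the Hessian sign bookkeeping, namely the factors $2$ and $-1$ and the overall sign of $G$. The rest is direct. The zero eigenvalues, which come from translation, rotation and scaling, do not affect the conclusion, since one positive real exponent already suffices.
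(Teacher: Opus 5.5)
Your proposal is correct and follows essentially the same route as the paper. You reduce to the ordered collinear configuration with common ratio $\kappa$, split the linearization into $M_0\ddot u=2Gu$ and $M_0\ddot v=-Gv$, use the rank-one form $G=\frac{\kappa}{r_1r_2r_3}ww^T$ with $w=(r_2,-r_3,r_1)^T$ to get the single nonzero generalized eigenvalue $\mu=\kappa\rho$, and read off $\sigma^2=2\kappa\rho$ or $\sigma^2=-\kappa\rho$ according to the sign of $\kappa$. Your explicit Hessian block $\delta_{ij}r_{ij}^{-3}(3ee^T-I)$ and the explicit growing modes $e^{\sigma t}M_0^{-1}w$ fill in steps that the paper only asserts.
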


\section{Four-body problem}
\label{sec:4}
Having reserved the analysis of collinear equilibria for a general \(N\) to a
later section, we now focus on non-collinear equilibria of the charged
four-body problem. We first record several elementary qualitative restrictions on
such equilibria.

\subsection{Qualitative restrictions on four-body equilibria}

By a similar argument in the beginning of the proof of Lemma \ref{Lm:c3bp},
we have
\begin{lemma}\label{lem:coplanar.4bp}
    An equilibrium of the charged four-body problem must be co-planar.
\end{lemma}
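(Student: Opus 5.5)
We argue by contradiction, as in the proof of Lemma~\ref{Lm:c3bp}. Suppose that $a_1,a_2,a_3,a_4\in\mathbb R^3$ do not lie in a common plane. Then, for each fixed $i$, the three vectors $a_j-a_i$ with $j\neq i$ are linearly independent. The equilibrium equation for the $i$-th particle is
\[
    \sum_{j\neq i}\frac{\delta_{ij}}{|a_j-a_i|^3}(a_j-a_i)=0 .
\]
This is a vanishing linear combination of three linearly independent vectors, so every coefficient $\delta_{ij}/|a_j-a_i|^3$ must vanish. Since the configuration is non-collision, the distances are finite and positive, and therefore $\delta_{ij}=0$ for all $j\neq i$.

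Taking $i=1$ gives $\delta_{12}=\delta_{13}=\delta_{14}=0$. Taking $i=2$ gives in addition $\delta_{23}=\delta_{24}=0$. Taking $i=3$ gives $\delta_{34}=0$. Hence all six coefficients vanish. This contradicts the non-trivial assumption that at least two of the $\lambda_{ij}$, equivalently two of the $\delta_{ij}$ (since $\delta_{ij}=m_im_j\lambda_{ij}$ with $m_im_j>0$), are nonzero. So the four points are coplanar.

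If the ambient dimension is larger than three, the same reasoning applies to the affine span of the configuration. If that span had dimension three, the vectors $a_j-a_i$ would again be linearly independent for every $i$, and the argument above goes through unchanged. No step is a genuine obstacle. The only point to check is that non-coplanarity gives linear independence of the three difference vectors from \emph{every} vertex, not just one. This holds because the affine span has dimension three, so the three difference vectors from any single point must span that three-dimensional direction space.
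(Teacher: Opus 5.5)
Your proof is correct and follows the paper's approach: the paper simply invokes the linear-independence argument from the proof of Lemma~\ref{Lm:c3bp}. At each vertex of a non-coplanar configuration the three difference vectors are independent, which forces every $\delta_{ij}=0$ and contradicts the non-trivial assumption.
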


\begin{lemma}\label{lem:no-three-collinear-four-body}
Let \(a=(a_1,a_2,a_3,a_4)\) be a non-trivial equilibrium of the charged
four-body problem. If the four particles are not collinear, then no three of
them are collinear.
\end{lemma}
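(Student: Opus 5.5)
The plan is to argue by contradiction: a collinear triple would force the off-line particle to decouple completely, and decoupling forces all charge-to-mass ratios to coincide, which the three-body criterion forbids. So suppose the four particles are not collinear but, after relabelling, $a_1,a_2,a_3$ lie on a line $L$ while $a_4\notin L$. Write $c_{ij}=\delta_{ij}/|a_i-a_j|^3$, so the equilibrium equation for the $i$-th particle is $\sum_{j\neq i}c_{ij}(a_j-a_i)=0$.

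\textbf{Step 1: the off-line particle interacts with nobody.} Take $i\in\{1,2,3\}$ and decompose its equilibrium equation into the components along $L$ and normal to $L$, exactly as in the proof of Lemma~\ref{Lm:c3bp}. The vectors $a_j-a_i$ with $j\in\{1,2,3\}\setminus\{i\}$ lie along $L$. The vector $a_4-a_i$ has a nonzero normal component, since $a_4\notin L$. Hence the normal component of the equation reduces to $c_{i4}\,\mathrm{dist}(a_4,L)=0$. This gives
\[
    \delta_{14}=\delta_{24}=\delta_{34}=0 .
\]
The equation for particle $4$ is then automatically satisfied. The equations for particles $1,2,3$ reduce to those of a collinear charged three-body equilibrium with coefficients $\delta_{12},\delta_{13},\delta_{23}$.

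\textbf{Step 2: at least two of the remaining coefficients are nonzero.} By the non-trivial assumption, at least two of the six $\delta_{ij}$ are nonzero. Since the three involving index $4$ vanish, at least two of $\delta_{12},\delta_{13},\delta_{23}$ are nonzero. Theorem~\ref{thm:charged-three-body-equilibrium} therefore applies to the triple $a_1,a_2,a_3$. It says the two adjacent coefficients must have one sign and the endpoint coefficient the opposite sign; in particular the three are not all of the same sign.

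\textbf{Step 3: decoupling forces all three coefficients to share a sign.} This algebraic step is where the real content lies; I expect it to be the main obstacle, since Steps 1 and 2 alone do not yet yield a contradiction. The condition $\delta_{i4}=m_im_4-e_ie_4=0$ with $m_i,m_4>0$ forces $e_4\neq0$ and
\[
    \frac{e_i}{m_i}=\frac{m_4}{e_4}=:\mu \qquad (i=1,2,3).
\]
Consequently
\[
    \delta_{ij}=m_im_j(1-\mu^2)\qquad\text{for all } 1\le i<j\le3 ,
\]
so $\delta_{12},\delta_{13},\delta_{23}$ all share the sign of $1-\mu^2$.

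\textbf{Conclusion.} There are two cases, and both give a contradiction.
\begin{itemize}
    \item If $\mu^2=1$, all six coefficients vanish, which contradicts the non-trivial assumption.
    \item If $\mu^2\neq1$, the three coefficients $\delta_{12},\delta_{13},\delta_{23}$ are nonzero and of a common sign, which contradicts the sign pattern $\delta_{ij}\delta_{jk}>0$, $\delta_{ij}\delta_{ik}<0$ required by Theorem~\ref{thm:charged-three-body-equilibrium}.
\end{itemize}
Hence no three of the four particles can be collinear.
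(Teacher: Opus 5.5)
Your proof is correct and follows the paper's argument step for step. The normal components force $\delta_{14}=\delta_{24}=\delta_{34}=0$, which gives the common ratio $e_i/m_i=m_4/e_4$, so $\delta_{12},\delta_{13},\delta_{23}$ all share the sign of $1-\mu^2$. The only difference is in the final contradiction: you invoke the sign pattern of Theorem~\ref{thm:charged-three-body-equilibrium}, whereas the paper simply observes that the two same-sign forces on the leftmost of $a_1,a_2,a_3$ point in the same direction and so cannot balance.
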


\begin{proof}
Suppose, by contradiction, that three particles are collinear while the fourth
one is not. Without loss of generality, assume that
\(a_1,a_2,a_3\) are collinear and \(a_4\) is not on this line.

For each \(i=1,2,3\), the forces exerted on \(a_i\) by the other two collinear
particles have no component perpendicular to the line through
\(a_1,a_2,a_3\). Hence the perpendicular component of the force exerted by
\(a_4\) must vanish. Since \(a_4\) is not on the line, this implies
\[
    \delta_{14}=\delta_{24}=\delta_{34}=0.
\]
Equivalently,
\[
    e_1e_4=m_1m_4,\qquad
    e_2e_4=m_2m_4,\qquad
    e_3e_4=m_3m_4.
\]
Since \(m_4>0\), we have \(e_4\neq 0\), and hence
\[
    \frac{e_1}{m_1}
    =
    \frac{e_2}{m_2}
    =
    \frac{e_3}{m_3}
    =
    \frac{m_4}{e_4}.
\]
It follows that
\[
    \delta_{ij}
    =
    m_i m_j-e_i e_j
    =
    m_i m_j
    \left(
        1-\frac{m_4^2}{e_4^2}
    \right),
    \qquad 1\leq i<j\leq 3.
\]
Thus \(\delta_{12},\delta_{13},\delta_{23}\) have the same sign, unless they
all vanish.

Now consider the leftmost particle among \(a_1,a_2,a_3\). The two forces acting
on it from the other two collinear particles point in the same direction if
\(\delta_{12},\delta_{13},\delta_{23}\) have the same nonzero sign. Therefore
they cannot balance. Hence we must have
\[
    \delta_{12}=\delta_{13}=\delta_{23}=0.
\]
Together with
\[
    \delta_{14}=\delta_{24}=\delta_{34}=0,
\]
this gives \(\delta_{ij}=0\) for all \(1\leq i<j\leq 4\), contradicting the
non-trivial assumption. Therefore no three particles can be collinear.
\end{proof}

\begin{proposition}\label{prop:four-body-sign-patterns}
Let \(a=(a_1,a_2,a_3,a_4)\) be a non-trivial non-collinear equilibrium of the
charged four-body problem.

\begin{enumerate}
    \item If \(a_1,a_2,a_3,a_4\) form a convex quadrilateral and are labelled
    counterclockwise along its boundary, then the necessary sign pattern is
    \[
        \delta_{12}>0,\;
        \delta_{23}>0,\;
        \delta_{34}>0,\;
        \delta_{14}>0\;\;{\rm and}\;\;\delta_{13}<0,\;
        \delta_{24}<0.
    \]
    In other words, the four sides must correspond to attractive effective
    interactions, while the two diagonals must correspond to repulsive effective
    interactions.

    \item If the configuration is concave, and if \(a_4\) lies in the interior of
    the triangle formed by \(a_1,a_2,a_3\), then the three coefficients on the
    outer triangle have the same sign, while the three coefficients connecting
    the interior point to the outer vertices have the opposite sign. More
    precisely, either
    \[
        \delta_{12}>0,\;
        \delta_{13}>0,\;
        \delta_{23}>0
        \;\;{\rm and}\;\;\delta_{14}<0,\;
        \delta_{24}<0,\;
        \delta_{34}<0,
    \]
    or
    \[
        \delta_{12}<0,\;
        \delta_{13}<0,\;
        \delta_{23}<0
        \;\;{\rm and}\;\;
        \delta_{14}>0,\;
        \delta_{24}>0,\;
        \delta_{34}>0.
    \]
\end{enumerate}
\end{proposition}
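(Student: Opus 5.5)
The plan is to reduce the equilibrium equations to a statement about linear dependences among four points in the plane, read off all the \emph{relative} signs from oriented areas, and then fix the one remaining global sign in the convex case using the algebraic form $\delta_{ij}=m_im_j(1-\lambda_i\lambda_j)$.

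\textbf{Step 1: all coefficients are nonzero.} Write $c_{ij}=\delta_{ij}/|a_i-a_j|^3$, so the equilibrium equation at $a_i$ is $\sum_{j\ne i}c_{ij}(a_j-a_i)=0$. By Lemma~\ref{lem:no-three-collinear-four-body}, no three of the points are collinear. Hence for each $i$, any two of the three vectors $a_j-a_i$ are linearly independent.

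Suppose some $c_{ij}=0$. The remaining two terms in the equation at $a_i$ then force the other two coefficients at $a_i$ to vanish, so particle $i$ decouples. The other three particles then form a non-collinear three-body equilibrium. The argument of Lemma~\ref{Lm:c3bp} kills all their coefficients, so every $\delta$ vanishes, contradicting the non-trivial assumption. Thus every $c_{ij}\neq0$.

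\textbf{Step 2: relative signs from oriented areas.} Let $\Delta_k$ be the signed area of the triangle formed by the three points other than $a_k$, with signs chosen so that
\[
\sum_k\Delta_k=0,\qquad \sum_k\Delta_k a_k=0 .
\]
This is the unique affine dependence, up to scale. The equation at $a_i$ is an affine dependence $\sum_{j\ne i}c_{ij}a_j-\bigl(\sum_{j\ne i}c_{ij}\bigr)a_i=0$, so it must be proportional to $(\Delta_k)$. That is, $c_{ij}=t_i\Delta_j$ for some scalar $t_i$ and all $j\neq i$.

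Symmetry $c_{ij}=c_{ji}$ gives $t_i\Delta_j=t_j\Delta_i$. All $\Delta_k\neq0$, since no three points are collinear. Hence $t_i=\mu\Delta_i$ for a common $\mu\neq0$, and
\[
c_{ij}=\mu\,\Delta_i\Delta_j .
\]
Now use the sign structure of $(\Delta_k)$:
\begin{itemize}
\item In the convex, counterclockwise-labelled case, $\Delta_1,\Delta_3$ share one sign and $\Delta_2,\Delta_4$ the other. So the four side coefficients share a sign $s=-\operatorname{sign}\mu$, and the two diagonal coefficients have sign $-s$.
\item In the concave case with $a_4$ interior, $\Delta_1,\Delta_2,\Delta_3$ share a sign and $\Delta_4$ has the opposite sign. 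So $\delta_{12},\delta_{13},\delta_{23}$ share one sign and $\delta_{14},\delta_{24},\delta_{34}$ have the opposite sign.
\end{itemize}
The concave case gives exactly the two alternatives stated in part~2, so it is finished.

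\textbf{Step 3: fixing the global sign in the convex case (the main obstacle).} The geometry alone leaves $\mu$ of either sign, so the constraint must come from the structure $\delta_{ij}=m_im_j(1-\lambda_i\lambda_j)$ with $\lambda_i=e_i/m_i$. Suppose, for contradiction, that the sides are negative, i.e.\ $\lambda_1\lambda_2$, $\lambda_2\lambda_3$, $\lambda_3\lambda_4$, $\lambda_4\lambda_1$ are all $>1$, while the diagonals are positive, i.e.\ $\lambda_1\lambda_3<1$ and $\lambda_2\lambda_4<1$.

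Since each adjacent product is positive, all four $\lambda_i$ have the same sign. Hence $\lambda_1\lambda_3>0$ and $\lambda_2\lambda_4>0$, and therefore
\[
\lambda_1\lambda_2\lambda_3\lambda_4=(\lambda_1\lambda_3)(\lambda_2\lambda_4)\in(0,1).
\]
On the other hand,
\[
\lambda_1\lambda_2\lambda_3\lambda_4=(\lambda_1\lambda_2)(\lambda_3\lambda_4)>1 .
\]
This is a contradiction. So $s>0$: the sides are attractive and the diagonals repulsive, which is part~1.

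The step I expect to need the most care is the oriented-area bookkeeping in Step~2: verifying the sign patterns of $(\Delta_k)$ for convex and for interior-point configurations. This is standard, but it should be checked against the counterclockwise labelling convention.
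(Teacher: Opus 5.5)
Your proof is correct, but it obtains the sign relations differently from the paper.

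\textbf{The paper's route.} The paper argues locally. At each vertex it writes one edge vector as a positive combination of the other two, e.g.\ $a_3-a_1=\alpha(a_2-a_1)+\beta(a_4-a_1)$ with $\alpha,\beta>0$ in the convex case. Linear independence then gives $c_{12}=-\alpha c_{13}$ and $c_{14}=-\beta c_{13}$. These vertex-by-vertex sign relations are chained together.

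\textbf{Your route.} You use the one-dimensionality of the space of affine dependences to get the global factorization $c_{ij}=\mu\Delta_i\Delta_j$. You then read all six signs off the sign pattern of $(\Delta_k)$.

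\textbf{What your route buys.}
\begin{itemize}
\item It recovers magnitudes as well as signs.
\item It actually proves the self-stress statement $\omega_{ij}=\tau\eta_i\eta_j$, which the paper invokes without justification in the appendix proof of Lemma~\ref{lem:geometric-characterization}.
\item It makes non-vanishing automatic: if $\mu=0$ every coefficient vanishes, and otherwise none does. So your Step~1 is redundant, though harmless.
\end{itemize}

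\textbf{The oriented-area bookkeeping you flagged.} It is easy here.
\begin{itemize}
\item Convex case: the diagonals meet at $(1-s)a_1+sa_3=(1-t)a_2+ta_4$ with $s,t\in(0,1)$. This gives the dependence $(1-s,\,-(1-t),\,s,\,-t)$.
\item Interior-point case: barycentric coordinates give $(\alpha,\beta,\gamma,-1)$ with $\alpha,\beta,\gamma>0$.
\end{itemize}

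\textbf{The trade-off.} The paper's cone argument is more elementary and needs no global linear algebra. However, it yields only signs and leaves the exclusion of vanishing coefficients implicit.

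Your Step~3 is the same algebraic exclusion as the paper's. You phrase it through the products $\lambda_i\lambda_j$, whereas the paper compares $e_ie_j$ with $m_im_j$.
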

\begin{proof}
For convenience, write
\[
    c_{ij}=\frac{\delta_{ij}}{|a_i-a_j|^3}.
\]
Clearly \(c_{ij}\) and \(\delta_{ij}\) have the same sign. The equilibrium
equation for the \(i\)-th particle is
\[
    \sum_{j\neq i} c_{ij}(a_j-a_i)=0.
\]

First suppose that \(a_1,a_2,a_3,a_4\) form a convex quadrilateral, labelled
counterclockwise. At the vertex \(a_1\), the vector \(a_3-a_1\) lies in the
interior of the cone generated by \(a_2-a_1\) and \(a_4-a_1\). Thus there exist
\(\alpha,\beta>0\) such that
\[
    a_3-a_1
    =
    \alpha(a_2-a_1)+\beta(a_4-a_1).
\]
The force balance at \(a_1\) is
\[
    c_{12}(a_2-a_1)
    +
    c_{13}(a_3-a_1)
    +
    c_{14}(a_4-a_1)
    =
    0.
\]
Substituting the previous relation gives
\[
    (c_{12}+\alpha c_{13})(a_2-a_1)
    +
    (c_{14}+\beta c_{13})(a_4-a_1)
    =
    0.
\]
Since \(a_2-a_1\) and \(a_4-a_1\) are linearly independent, we get
\[
    c_{12}=-\alpha c_{13},\qquad
    c_{14}=-\beta c_{13}.
\]
Hence
\[
    \operatorname{sgn}(\delta_{12})
    =
    \operatorname{sgn}(\delta_{14})
    =
    -\operatorname{sgn}(\delta_{13}).
\]
Applying the same argument at the other three vertices gives
\[
    \operatorname{sgn}(\delta_{12})
    =
    \operatorname{sgn}(\delta_{23})
    =
    \operatorname{sgn}(\delta_{34})
    =
    \operatorname{sgn}(\delta_{14}),
\]
and
\[
    \operatorname{sgn}(\delta_{13})
    =
    \operatorname{sgn}(\delta_{24})
    =
    -\operatorname{sgn}(\delta_{12}).
\]
Thus, purely from the geometry of force balance, one obtains either
\[
    (\delta_{12},\delta_{23},\delta_{34},\delta_{14},\delta_{13},\delta_{24})
    =
    (+,+,+,+,-,-),
\]
or
\[
    (\delta_{12},\delta_{23},\delta_{34},\delta_{14},\delta_{13},\delta_{24})
    =
    (-,-,-,-,+,+).
\]

We now show that the second possibility is incompatible with
\(\delta_{ij}=m_i m_j-e_i e_j\). If
\[
    \delta_{12},\delta_{23},\delta_{34},\delta_{14}<0,
    \qquad
    \delta_{13},\delta_{24}>0,
\]
then
\[
    e_1e_2>m_1m_2,\qquad
    e_3e_4>m_3m_4,
\]
and hence
\[
    e_1e_2e_3e_4>m_1m_2m_3m_4.
\]
On the other hand, since all four adjacent products are positive, the charges
\(e_1,e_2,e_3,e_4\) must have the same sign. Therefore \(e_1e_3>0\) and
\(e_2e_4>0\). But \(\delta_{13}>0\) and \(\delta_{24}>0\) imply
\[
    0<e_1e_3<m_1m_3,
    \qquad
    0<e_2e_4<m_2m_4.
\]
Multiplying these two inequalities gives
\[
    e_1e_2e_3e_4<m_1m_2m_3m_4,
\]
a contradiction. Hence the only possible convex sign pattern is
\[
    \delta_{12},\delta_{23},\delta_{34},\delta_{14}>0,
    \qquad
    \delta_{13},\delta_{24}<0.
\]

Now suppose that the configuration is concave and that \(a_4\) lies inside the
triangle formed by \(a_1,a_2,a_3\). At the vertex \(a_1\), the vector
\(a_4-a_1\) lies in the interior of the cone generated by \(a_2-a_1\) and
\(a_3-a_1\). The same cone argument as above gives
\[
    \operatorname{sgn}(\delta_{12})
    =
    \operatorname{sgn}(\delta_{13})
    =
    -\operatorname{sgn}(\delta_{14}).
\]
Similarly, the force balance at \(a_2\) gives
\[
    \operatorname{sgn}(\delta_{12})
    =
    \operatorname{sgn}(\delta_{23})
    =
    -\operatorname{sgn}(\delta_{24}),
\]
and the force balance at \(a_3\) gives
\[
    \operatorname{sgn}(\delta_{13})
    =
    \operatorname{sgn}(\delta_{23})
    =
    -\operatorname{sgn}(\delta_{34}).
\]
Combining these three relations yields
\[
    \operatorname{sgn}(\delta_{12})
    =
    \operatorname{sgn}(\delta_{13})
    =
    \operatorname{sgn}(\delta_{23}),
\]
and
\[
    \operatorname{sgn}(\delta_{14})
    =
    \operatorname{sgn}(\delta_{24})
    =
    \operatorname{sgn}(\delta_{34})
    =
    -\operatorname{sgn}(\delta_{12}).
\]
This proves the asserted sign pattern in the concave case.
\end{proof}

\subsection{A geometric formulation of four-body equilibria}
\label{subsec:4.2}

Before stating the general proportional relation, we first describe the
geometry of a four-body static equilibrium.
Suppose the four particles located at $a_1,a_2,a_3,a_4$ are labeled 
$A,B,C,D$, respectively, and form a static equilibrium (see Figure \ref{fig:four-body-static-equilibrium}).
Recall that, for each pair of
particles, the effective interaction is determined by
\[
    \delta_{ij}=m_i m_j-e_i e_j.
\]
In the geometric formulation below, only the magnitudes of the six pairwise
forces enter the proportional relations. Hence we write
\[
    \Delta_{ij}=|\delta_{ij}|,\qquad 1\leq i<j\leq 4.
\]
Denote by \(AB,BC,CD,DA,AC,BD\) the six mutual distances. Moreover, let
\[
    R_A=R_{BCD},\qquad
    R_B=R_{ACD},\qquad
    R_C=R_{ABD},\qquad
    R_D=R_{ABC},
\]
where \(R_{BCD}\), for example, is the circumradius of the triangle formed by
\(B,C,D\). Thus \(R_A\) is the circumradius of the triangle opposite to the
vertex \(A\), and similarly for the other vertices.

\begin{figure}[htbp]
\centering
\begin{tikzpicture}[scale=1.1]
    \coordinate (A) at (0,0);
    \coordinate (B) at (3.5,0.0);
    \coordinate (C) at (2.4,2.8);
    \coordinate (D) at (0.2,1.7);

    \draw[thick] (A)--(B)--(C)--(D)--cycle;
    \draw[dashed] (A)--(C);
    \draw[dashed] (B)--(D);

    \fill (A) circle (2pt) node[below left] {$A=a_1$};
    \fill (B) circle (2pt) node[below right] {$B=a_2$};
    \fill (C) circle (2pt) node[above right] {$C=a_3$};
    \fill (D) circle (2pt) node[above left] {$D=a_4$};

    \node at (1.7,-0.2) {$AB$};
    \node at (3.25,1.45) {$BC$};
    \node at (1.0,2.35) {$CD$};
    \node at (-0.25,0.85) {$DA$};
    \node at (1.55,1.55) {$AC$};
    \node at (1.45,0.9) {$BD$};
\end{tikzpicture}
\caption{A schematic four-body static equilibrium. The four boundary edges and
the two diagonals represent the six pairwise interactions.}
\label{fig:four-body-static-equilibrium}
\end{figure}

The following theorem gives a necessary geometric relation for a non-collinear
four-body static equilibrium. It concerns only the magnitudes of the effective
interactions; the corresponding sign conditions will be discussed separately.

\begin{lemma}[Geometric characterization of four-body equilibrium]\label{lem:geometric-characterization}
Let \(A,B,C,D\) be a non-collinear static equilibrium of the charged four-body
problem. With the notation introduced above, the following proportional relation
holds:
\[
\begin{aligned}
&\frac{AB}{R_A R_B}:
\frac{BC}{R_B R_C}:
\frac{CD}{R_C R_D}:
\frac{DA}{R_D R_A}:
\frac{AC}{R_A R_C}:
\frac{BD}{R_B R_D}  \\
&=
\Delta_{12}^{2/3}\Delta_{34}^{-1/3}:
\Delta_{23}^{2/3}\Delta_{14}^{-1/3}:
\Delta_{34}^{2/3}\Delta_{12}^{-1/3}:
\Delta_{14}^{2/3}\Delta_{23}^{-1/3}:
\Delta_{13}^{2/3}\Delta_{24}^{-1/3}:
\Delta_{24}^{2/3}\Delta_{13}^{-1/3}.
\end{aligned}
\]
\end{lemma}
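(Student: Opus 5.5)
The plan is to apply Lamy's theorem (the sine rule for three balanced planar forces) at each vertex, then eliminate the unknown proportionality constants. Write $F_{ij}=\Delta_{ij}/|a_i-a_j|^2$ for the magnitude of the force between particles $i$ and $j$.

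\textbf{Step 1: all $\Delta_{ij}$ are positive.} By Lemma~\ref{lem:no-three-collinear-four-body}, no three of $A,B,C,D$ are collinear, so at each vertex the three direction vectors are pairwise independent. Suppose one force at a vertex vanished, say at $A$. The remaining two forces act along independent directions, so they would also vanish. Propagating this through the other vertices forces every $\delta_{ij}=0$, which contradicts the non-trivial assumption. Hence all $\Delta_{ij}>0$, all six magnitudes $F_{ij}$ are nonzero, and the negative powers in the statement make sense.

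\textbf{Step 2: Lamy's theorem at each vertex.} At $A$, three nonzero coplanar forces along the lines $AB$, $AC$, $AD$ sum to zero. Hence each magnitude is proportional to the sine of the angle between the other two lines. The signs of the $\delta$'s only flip directions, and the unsigned sines are unchanged by reversing a line, so this holds with the magnitudes $\Delta_{ij}$. Each such sine is an angle of a triangle with vertex $A$, so the extended law of sines converts it to a side over a circumradius, for example $\sin\angle CAD = CD/(2R_B)$, since $R_B=R_{ACD}$. This gives a constant $t_A>0$ with
\[
F_{AB}=t_A\frac{CD}{R_B},\qquad F_{AC}=t_A\frac{BD}{R_C},\qquad F_{AD}=t_A\frac{BC}{R_D},
\]
and analogous relations at $B$, $C$, $D$ with constants $t_B,t_C,t_D$.

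\textbf{Step 3: the constants agree after weighting.} Each $F_{ij}$ is computed twice, once from each endpoint. From $A$ and $B$ we get $F_{AB}=t_A\,CD/R_B=t_B\,CD/R_A$, so $t_AR_A=t_BR_B$. Running over the six pairs gives $t_AR_A=t_BR_B=t_CR_C=t_DR_D=:\tau>0$. Consequently
\[
\Delta_{12}=\tau\frac{AB^2\,CD}{R_AR_B},\qquad \Delta_{34}=\tau\frac{CD^2\,AB}{R_CR_D},
\]
and similarly for the pairs $(\Delta_{23},\Delta_{14})$ and $(\Delta_{13},\Delta_{24})$.

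\textbf{Step 4: extract the ratio.} Set $P=R_AR_BR_CR_D$. A direct computation gives
\[
\frac{\Delta_{12}^2}{\Delta_{34}}=\tau P\,\frac{AB^3}{R_A^3R_B^3},
\qquad\text{hence}\qquad
\Delta_{12}^{2/3}\Delta_{34}^{-1/3}=(\tau P)^{1/3}\frac{AB}{R_AR_B}.
\]
The other five terms follow in the same way with the same factor $(\tau P)^{1/3}$, which is exactly the asserted proportion.

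\textbf{Main obstacle.} The delicate part is Step 2. One must check that Lamy's relation holds with unsigned sines whatever the sign pattern of the $\delta$'s, and that the correct circumradius appears in each case. For instance, the sine of the angle at $A$ between $AC$ and $AD$ belongs to triangle $ACD$, whose circumradius is $R_B$, not $R_A$. Once this bookkeeping is right, Steps 3 and 4 are mechanical.
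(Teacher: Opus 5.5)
Your proof is correct, and it follows a genuinely different route from the paper's.

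\textbf{The paper's route.} The paper regards $\omega_{ij}=\delta_{ij}/r_{ij}^3$ as a self-stress of $K_4$. It then quotes, from rigidity theory and without proof, that every such stress has the form $\omega_{ij}=\tau\eta_i\eta_j$, where $\eta$ is the affine dependence of $A,B,C,D$. It also quotes that $|\eta_i|$ is proportional to the area of the triangle opposite vertex $i$. Together these give $\Delta_{ij}=KS_iS_jr_{ij}^3$, and the circumradius formula $S=xyz/(4R)$ then converts areas into circumradii.

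\textbf{Your route.} You apply Lamy's sine rule at each vertex and turn each sine into a side over a circumradius with the extended law of sines. You then glue the four local constants together, using that each edge magnitude is seen from both of its endpoints. Your intermediate identity $\Delta_{ij}=\tau\,r_{ij}^2r_{kl}/(R_iR_j)$ is exactly the paper's $KS_iS_jr_{ij}^3$ once the product of the six side lengths is absorbed into the constant.

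\textbf{What your approach buys.}
\begin{enumerate}
\item It is elementary and self-contained. At the level of magnitudes it re-derives the one-dimensionality of the stress space, which the paper only asserts.
\item Your Step 1 shows explicitly that every $\Delta_{ij}>0$. This is needed both for Lamy's rule and for the negative exponents in the statement. The paper leaves it implicit in $\tau\neq0$ and $\eta_i\neq0$.
\end{enumerate}

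\textbf{What the paper's approach buys.} It keeps the signs. From $\omega_{ij}=\tau\eta_i\eta_j$ and the sign pattern of the affine dependence, you can read off directly the two geometrically possible sign configurations behind Proposition~\ref{prop:four-body-sign-patterns}. Your unsigned-sine argument deliberately discards that information, which is harmless here because the lemma concerns only magnitudes.
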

\begin{proof}
    The proof of this lemma will be deferred to the Appendix.
\end{proof}

Based on the above lemma, we have the following theorem.

\begin{theorem}\label{thm:charged-four-body-equilibrium}
Consider the charged four-body problem with masses
\(m_1,m_2,m_3,m_4>0\) and charges \(e_1,e_2,e_3,e_4\). Suppose that at least two of the quantities
\(\delta_{ij}=m_im_j-e_ie_j,1\le i<j\le4\) are nonzero,
and \(A=a_1\), \(B=a_2\), \(C=a_3\), \(D=a_4\) form a
non-collinear equilibrium. 
Set
\begin{equation}
a=|\delta_{14}\delta_{23}|^{1/3},\quad
b=|\delta_{12}\delta_{34}|^{1/3},\quad
c=|\delta_{13}\delta_{24}|^{1/3}.
\end{equation}
Then the following necessary conditions hold.

\begin{enumerate}
    \item[(i)] The three numbers \(a,b,c\) are the side lengths of a possibly
    degenerate triangle. Equivalently,
    \[
        a\leq b+c,\qquad b\leq c+a,\qquad c\leq a+b.
    \]

    \item[(ii)] If the triangle in \((i)\) is non-degenerate, let
    \(\theta_1,\theta_2,\theta_3\) be its interior angles opposite to
    \(a,b,c\), respectively, and let \(S_0\) be its area, $d$ be its circumdiameter. Equivalently, if this
    triangle is denoted by \(\triangle BDE\) with
    \[
        DE=a,\qquad BE=b,\qquad BD=c,
    \]
    then
    \[
        \theta_1=\angle DBE,\qquad
        \theta_2=\angle BDE,\qquad
        \theta_3=\angle BED.
    \]
    Define
    \[
        \mu_{12}=\Delta_{12}^{2/3}\Delta_{34}^{-1/3},\qquad
        \mu_{13}=\Delta_{13}^{2/3}\Delta_{24}^{-1/3},\qquad
        \mu_{14}=\Delta_{14}^{2/3}\Delta_{23}^{-1/3}.
    \]
    Then the following three homogeneous polynomial equations have a non-trivial
    common solution. Consequently,
    \[
        \operatorname{Res}(F_0,F_1,F_2)=0,
    \]
    where
    \begin{eqnarray}
    F_0(x,y,z)
    &=&
    \frac{1}{\mu_{14}^2}x^2
    \bigl(z^2+x^2+2zx\cos\theta_2\bigr)
    -
    \frac{1}{\mu_{12}^2}y^2
    \bigl(y^2+z^2+2yz\cos\theta_1\bigr),
    \label{F0:four-body-polynomials}\\
    F_1(x,y,z)
    &=&
    \frac{1}{\mu_{14}^2}x^2
    \bigl(x^2+y^2+2xy\cos\theta_3\bigr)
    -
    \frac{1}{\mu_{13}^2}z^2
    \bigl(y^2+z^2+2yz\cos\theta_1\bigr),
    \label{F1:four-body-polynomials}\\
    F_2(x,y,z)
    &=&
    \frac{\mu_{12}^2\mu_{13}^2}{4{d}^2S_0^2}
    x^2(ax+by+cz)^4
    -
    \bigl(y^2+z^2+2yz\cos\theta_1\bigr)
    (ayz+bzx+cxy)^2.
    \label{F2:four-body-polynomials}
    \end{eqnarray}
\end{enumerate}
In the degenerate case, namely when one of the equalities in \((i)\) holds, the
condition \((ii)\) should be replaced by the corresponding limiting or special
case. We do not use the formula involving \(S_0^{-2}\) in that case.
\end{theorem}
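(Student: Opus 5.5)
The plan is to start from Lemma~\ref{lem:geometric-characterization} and read off from it a relation between the side lengths $AB,\dots,BD$ and the circumradii $R_A,\dots,R_D$ that forces the auxiliary triangle. By the lemma there is a constant $t>0$ with
\[
AB=t\mu_{12}R_AR_B,\quad CD=t\mu_{34}R_CR_D,\quad AC=t\mu_{13}R_AR_C,\quad BD=t\mu_{24}R_BR_D,\quad DA=t\mu_{14}R_DR_A,\quad BC=t\mu_{23}R_BR_C,
\]
where $\mu_{34}=\Delta_{34}^{2/3}\Delta_{12}^{-1/3}$, etc. Multiplying opposite pairs gives $AB\cdot CD=t^2 b\, R_AR_BR_CR_D$, $AC\cdot BD=t^2 c\,R_AR_BR_CR_D$, $AD\cdot BC=t^2 a\,R_AR_BR_CR_D$, since $\mu_{12}\mu_{34}=(\Delta_{12}\Delta_{34})^{1/3}=b$ and similarly. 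Hence $a:b:c=AD\cdot BC:AB\cdot CD:AC\cdot BD$. Ptolemy's inequality for any four planar points (Lemma~\ref{lem:coplanar.4bp} gives coplanarity) says each of these three products is at most the sum of the other two, which is exactly (i); equality occurs iff the points are concyclic (in a suitable cyclic order). This settles (i).

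For (ii), I would realize the auxiliary triangle concretely via inversion: invert about $A$ with radius $r$. Then $B',C',D'$ satisfy $B'C'=r^2BC/(AB\cdot AC)$, etc., so the side lengths of $\triangle B'C'D'$ are proportional to $BC\cdot AD$, $CD\cdot AB$, $BD\cdot AC$, i.e.\ to $a,b,c$. Thus $\triangle B'C'D'$ is similar to the auxiliary triangle with angles $\theta_1,\theta_2,\theta_3$, and non-degeneracy matches non-concyclicity. Next I would parametrize $A$ (or its image data) by unknowns $x,y,z$, naturally the (signed) barycentric-type quantities such as $x\propto 1/AD$, $y\propto 1/AB$, $z\propto 1/AC$ scaled so that the distances in the inverted picture become the law-of-cosines expressions $y^2+z^2+2yz\cos\theta_1$ and so on. With this choice, the remaining proportionalities of Lemma~\ref{lem:geometric-characterization} not yet used (individual ratios such as $DA/AB$ versus $\mu_{14}R_D/(\mu_{12}R_B)$) turn into equalities between quantities $\frac{1}{\mu_{14}^2}x^2(\cdots)$ and $\frac{1}{\mu_{12}^2}y^2(\cdots)$ after squaring to remove radicals, yielding $F_0$ and $F_1$. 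The circumradii are expressed through the area formula $R=\frac{\text{product of sides}}{4\,\text{area}}$, with the areas of triangles like $ABC$ written in terms of $S_0$, $d$ (via $abc=2dS_0$) and the linear forms $ax+by+cz$, $ayz+bzx+cxy$; one more scale-fixing relation, using $\mu_{12}\mu_{13}$, gives $F_2$. Homogeneity in $(x,y,z)$ comes from scale invariance, so a genuine configuration produces a nonzero common root, and Theorem~\ref{thm:multivariate-resultant} gives $\operatorname{Res}(F_0,F_1,F_2)=0$.

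The main obstacle is the last step: choosing the coordinates $x,y,z$ so that all areas and circumradii become polynomial and the equations come out exactly as \eqref{F0:four-body-polynomials}--\eqref{F2:four-body-polynomials}. I would first try placing the inverted triangle as $\triangle BDE$ with the given sides, expressing the relevant point by trilinear-type weights $x,y,z$ on its vertices. Then $ax+by+cz$ and $ayz+bzx+cxy$ arise as the standard area normalizations, and $x^2+y^2+2xy\cos\theta_3$ arises as a squared length. I would verify the resulting identities by direct computation rather than seek a conceptual derivation.
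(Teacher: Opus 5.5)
Your part (i) is correct and is the paper's argument in another form. Multiplying opposite ratios of Lemma~\ref{lem:geometric-characterization} gives $a:b:c=AD\cdot BC:AB\cdot CD:AC\cdot BD$, and Ptolemy's inequality is exactly what the paper proves by constructing $E$ with $\triangle ABE\sim\triangle ACD$. Your inversion even produces the paper's triangle. With $A=0\in\mathbb C$ one has $E=BD/C$, so $\triangle BDE$ is the image of $\triangle BCD$ under $z\mapsto BD/z$, which is inversion of radius $\sqrt{AB\cdot AD}$ composed with a reflection.

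Part (ii), however, is not proved. You defer the whole derivation to a computation you have not done. Your first choice of coordinates ($x\propto 1/AD$, etc., i.e.\ distances from $A$ to the inverted \emph{vertices}) cannot produce expressions like $y^2+z^2+2yz\cos\theta_1$. The missing idea is to let $x,y,z$ be the signed distances from $A$ to the \emph{sides} $DE,BE,BD$, and to work with the pedal triangle $B'D'E'$ of $A$.
\begin{itemize}
\item Each side of the image triangle is the image of a circle through $A$, so $x=\frac{AB\cdot AD}{2R_D}$, $y=\frac{AB\cdot AD}{2R_B}$, $z=\frac{AB\cdot AD}{2R_C}$.
\item The lemma's ratios at $A$ then become $\frac{AB}{2x}:\frac{AD}{2y}:\frac{AE}{2z}=\frac{1}{\mu_{14}}:\frac{1}{\mu_{12}}:\frac{1}{\mu_{13}}$.
\item Combine this with $D'E'=AB\sin\theta_1$, $B'E'=AD\sin\theta_2$, $B'D'=AE\sin\theta_3$ and the law of cosines ($D'E'^2=y^2+z^2+2yz\cos\theta_1$, etc.). This gives three quadratic relations with a common multiplier, and eliminating the multiplier gives the quartics.
\item The sextic needs the last independent ratio of the lemma, which involves $R_A=R_{BCD}$. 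The paper handles it via $\triangle B'D'E'\sim\triangle DBC$ and $2S_{B'D'E'}=(ayz+bzx+cxy)/d$, then homogenizes with $ax+by+cz=2S_0$.
\end{itemize}
So homogeneity comes from the elimination and this substitution, not from scale invariance, because the auxiliary triangle has been normalized. Whether $A$ lies inside or outside the auxiliary triangle must also be tracked by signs.

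Be aware, too, that carrying this out does not reproduce the printed polynomials. Since $\sin\theta_1:\sin\theta_2:\sin\theta_3=a:b:c$, the quadratic relations are
\[
y^2+z^2+2yz\cos\theta_1=4K^2\mu_{23}^2x^2,\quad
z^2+x^2+2zx\cos\theta_2=4K^2\mu_{34}^2y^2,\quad
x^2+y^2+2xy\cos\theta_3=4K^2\mu_{24}^2z^2,
\]
with a common $K>0$ and $\mu_{23}=a/\mu_{14}$, $\mu_{34}=b/\mu_{12}$, $\mu_{24}=c/\mu_{13}$. This is what the paper's own \eqref{proportional.relation} says. By contrast, \eqref{eq1}--\eqref{eq3} use $1/\mu_{14},1/\mu_{12},1/\mu_{13}$, which are proportional to these only when $a=b=c$.
\begin{itemize}
\item At the point the proof produces, the printed $F_0$ equals a positive multiple of $x^2y^2(\sin^2\theta_2-\sin^2\theta_1)$, which is nonzero unless $a=b$.
\item The printed $F_2$ is not even invariant under $\delta_{ij}\mapsto s\delta_{ij}$ ($s>0$), a scaling that preserves equilibria and fixes $[x:y:z]$. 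For fixed $(x,y,z)$ its two terms scale by $s^{0}$ and $s^{4/3}$.
\end{itemize}
The consistent version replaces $1/\mu_{14}^2,1/\mu_{12}^2$ in $F_0$ by $\mu_{23}^2,\mu_{34}^2$, and $1/\mu_{14}^2,1/\mu_{13}^2$ in $F_1$ by $\mu_{23}^2,\mu_{24}^2$. The constant in $F_2$ becomes $\mu_{12}^2\mu_{13}^2/(b^2c^2)$, i.e.\ it gains a factor $a^2$. So a ``verify by direct computation'' step would fail for the statement as written. The corrected system is what your approach, once completed, would establish.
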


\begin{proof}
For convenience, set
\[
    \mu_{ij}
    =
    |\delta_{ij}|^{2/3}|\delta_{kl}|^{-1/3},
    \qquad
    \{i,j,k,l\}=\{1,2,3,4\}.
\]
Thus, for example,
\[
    \mu_{12}=|\delta_{12}|^{2/3}|\delta_{34}|^{-1/3}=\frac{|\delta_{12}|}{b},
    \quad
    \mu_{13}=|\delta_{13}|^{2/3}|\delta_{24}|^{-1/3}=\frac{|\delta_{13}|}{c},
    \quad
    \mu_{14}=|\delta_{14}|^{2/3}|\delta_{23}|^{-1/3}=\frac{|\delta_{14}|}{a}.
\]

Since \(A=a_1\), \(B=a_2\), \(C=a_3\), \(D=a_4\) form a
non-collinear equilibrium,
by Lemma~\ref{lem:geometric-characterization}, the six quantities
\[
    \frac{AB}{R_A R_B},\quad
    \frac{BC}{R_B R_C},\quad
    \frac{CD}{R_C R_D},\quad
    \frac{DA}{R_D R_A},\quad
    \frac{AC}{R_A R_C},\quad
    \frac{BD}{R_B R_D}
\]
are proportional to
\(\mu_{12},\;\mu_{23},\; \mu_{34},\;
    \mu_{14},\; \mu_{13},\; \mu_{24}\),
respectively.

We first derive the triangle condition. Construct a point \(E\) such that
\[
    \triangle ABE\sim \triangle ACD.
\]
Then one also has
\[
    \triangle ADE\sim \triangle ACB.
\]
Consequently,
\[
    \frac{DE}{BD}
    =
    \frac{AD\cdot BC}{AC\cdot BD},
    \qquad
    \frac{BE}{BD}
    =
    \frac{AB\cdot CD}{AC\cdot BD}.
\]
Using the proportional relation in Lemma~\ref{lem:geometric-characterization},
we obtain
\[
\begin{aligned}
\frac{DE}{BD}
&=
\frac{
\left(\dfrac{AD}{R_A R_D}\right)
\left(\dfrac{BC}{R_B R_C}\right)}
{
\left(\dfrac{AC}{R_A R_C}\right)
\left(\dfrac{BD}{R_B R_D}\right)}
=
\frac{\mu_{14}\mu_{23}}{\mu_{13}\mu_{24}}  \\
&=
\frac{
(\Delta_{14}\Delta_{23})^{1/3}}
{
(\Delta_{13}\Delta_{24})^{1/3}},
\end{aligned}
\]
and similarly
\[
\begin{aligned}
\frac{BE}{BD}
&=
\frac{
\left(\dfrac{AB}{R_A R_B}\right)
\left(\dfrac{CD}{R_C R_D}\right)}
{
\left(\dfrac{AC}{R_A R_C}\right)
\left(\dfrac{BD}{R_B R_D}\right)}
=
\frac{\mu_{12}\mu_{34}}{\mu_{13}\mu_{24}} \\
&=
\frac{
(\Delta_{12}\Delta_{34})^{1/3}}
{
(\Delta_{13}\Delta_{24})^{1/3}} .
\end{aligned}
\]
Therefore the three sides of the triangle \(BDE\) are proportional to
\[
    (\Delta_{14}\Delta_{23})^{1/3},\qquad
    (\Delta_{12}\Delta_{34})^{1/3},\qquad
    (\Delta_{13}\Delta_{24})^{1/3}.
\]
Equivalently, if we set
\[
    a=(\Delta_{14}\Delta_{23})^{1/3},\qquad
    b=(\Delta_{12}\Delta_{34})^{1/3},\qquad
    c=(\Delta_{13}\Delta_{24})^{1/3},
\]
then \(a,b,c\) must satisfy the triangle inequalities. This proves condition
\((i)\).

Now assume that the triangle determined by \(a,b,c\) is non-degenerate. We
normalize the auxiliary triangle \(BDE\) so that
\[
    DE=a,\qquad BE=b,\qquad BD=c.
\]
By assumption, we have
\[
    \theta_1=\angle DBE,\qquad
    \theta_2=\angle BDE,\qquad
    \theta_3=\angle BED,
\]
i.e., \(\theta_1,\theta_2,\theta_3\) are the angles opposite to
\(a,b,c\), respectively. Moreover,
\[
    S_0=S_{\triangle BDE}.
\]

Let \(B',D',E'\) be the feet of the perpendiculars from \(A\) to the three
sides \(DE,BE,BD\), respectively:
\[
    AB'\perp DE,\qquad AD'\perp BE,\qquad AE'\perp BD.
\]
Set
\[
    x=AB',\qquad y=AD',\qquad z=AE'.
\]
Since the configuration is non-collinear and no three particles are collinear,
we have
\[
    (x,y,z)\neq (0,0,0).
\]

By similarity properties of triangles:
\[
R_B = R_{ACD} = R_{ABE} \cdot \frac{AC}{AE},
\quad
R_D = R_{ABC} = R_{ADE} \cdot \frac{AB}{AE}.
\]
It then follows that
\[
\begin{aligned}
AB \cdot R_C R_D &= R_{ABD} R_{ADE} \cdot \frac{AB^2}{AE^2}, \\
AD \cdot R_B R_C &= R_{ABD} R_{ABE} \cdot \frac{AD^2}{AE^2}, \\
AC \cdot R_B R_D &= R_{ABE} R_{ADE} \cdot \frac{AB \cdot AD \cdot AC}{AE^2}.
\end{aligned}
\]
From the theorem,
\[
R_{ABD} R_{ADE} \cdot \frac{AB^2}{AE^2}
:
R_{ABD} R_{ABE} \cdot \frac{AD^2}{AE^2}
:
R_{ABE} R_{ADE} \cdot \frac{AB \cdot AD \cdot AC}{AE^2}
= \mu_{12} : \mu_{14} : \mu_{13}.
\]

Taking the ratio of the third term to the first term:
\[
\frac{R_{ABE} \cdot AD \cdot AC}{R_{ABD} \cdot AB \cdot AE}
= \frac{\mu_{13}}{\mu_{12}}.
\]

By triangle similarity $\displaystyle \frac{AB}{AE} = \frac{AC}{AD}$, we obtain
\[
\frac{R_{ABE} \cdot AD^2}{R_{ABD} \cdot AE^2}
= \frac{1}{\mu_{12}} : \frac{1}{\mu_{13}}.
\]

Comparing the third term to the second term similarly yields
\[
\frac{R_{ABD} \cdot AE^2}{R_{ADE} \cdot AB^2}
= \frac{1}{\mu_{13}} : \frac{1}{\mu_{14}}.
\]

Combining these results:
\begin{equation}
    R_{ABE} \cdot AD^2 : R_{ADE} \cdot AB^2 : R_{ABD} \cdot AE^2
= \frac{1}{\mu_{12}} : \frac{1}{\mu_{14}} : \frac{1}{\mu_{13}}.
\end{equation}

Recall the circumradius formula
\[
R_{ABE} = \frac{AB \cdot AE \cdot BE}{4 S_{\triangle ABE}},
\]
with identical expressions for other triangles.

Substitution gives the area-side ratio relation:
\begin{equation}
    \frac{AD \cdot BE}{S_{\triangle ABE}}
:
\frac{AB \cdot DE}{S_{\triangle ADE}}
:
\frac{BD \cdot AE}{S_{\triangle ABD}}
=
\frac{1}{\mu_{12}} : \frac{1}{\mu_{14}} : \frac{1}{\mu_{13}}.
\end{equation}

In the following proof, we regard $\triangle BDE$ as given, and investigate the existence of point $A$ satisfying the above equilibrium relations.
To simplify computation, we construct the \textit{orthic triangle} $\triangle B'D'E'$ of $A$ with respect to $\triangle BDE$, such that
\[
AB' \perp DE,\quad AD' \perp BE,\quad AE' \perp BD.
\]
We distinguish two geometric cases.

{\bf Case I.} If \(A\) lies inside \(\triangle BDE\), see Figure \ref{fig:placeholder2}.

\begin{figure}
    \centering
    \includegraphics[width=0.5\linewidth]{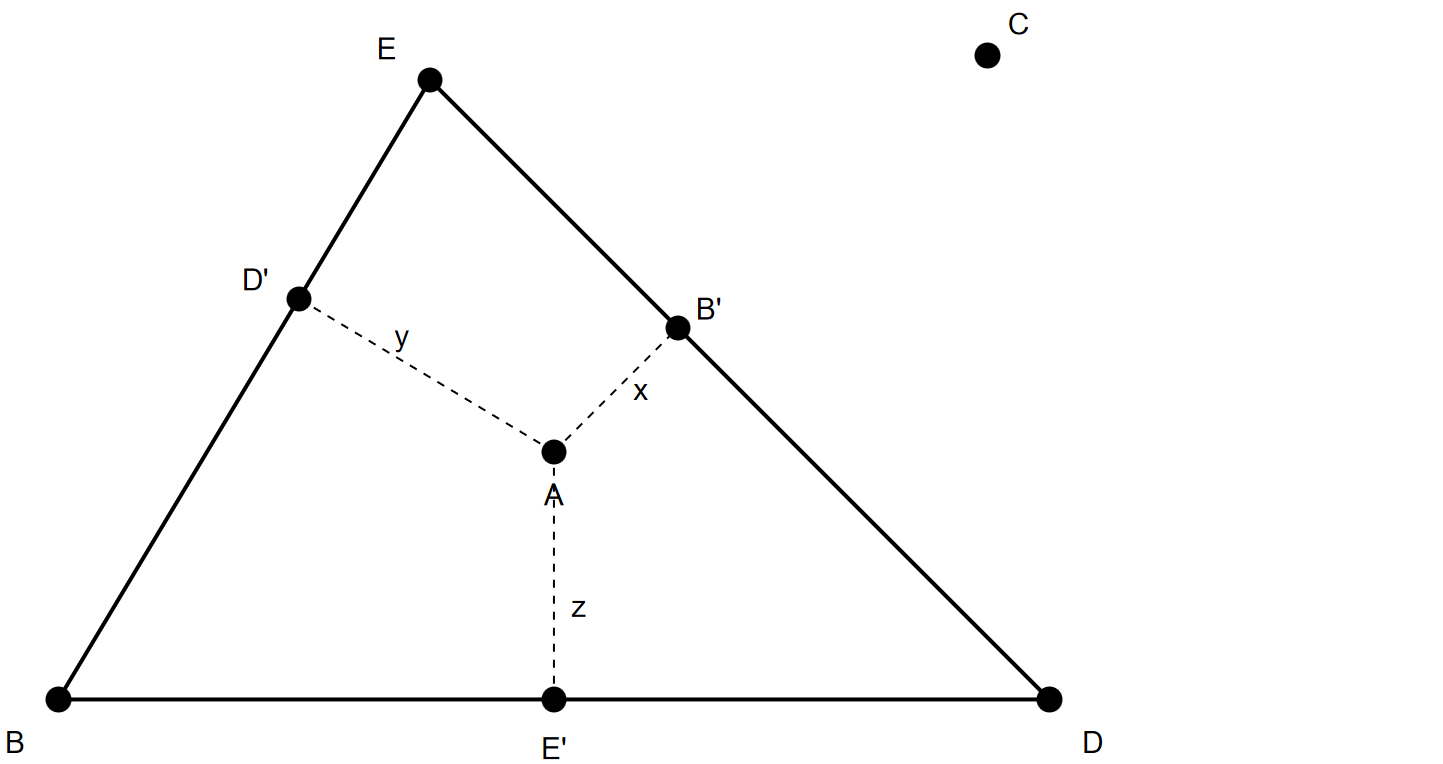}
    \caption{Point A lies inside triangle BDE}
    \label{fig:placeholder2}
\end{figure}

In such a case,
the angle chasing gives:
\[
\begin{aligned}
\angle D'B'E'
&= \angle AB'D + \angle AB'E
= \angle AED' + \angle ADE' \\
&= \angle ADC + \angle ADB
= \angle BDC, \\
\angle B'D'E'
&= \angle AED + \angle ABD
= \angle ABC + \angle ABD
= \angle CBD.
\end{aligned}
\]

Therefore
\[
\triangle B'D'E' \sim \triangle DBC.
\]

From similarity:
\[
\frac{R_A}{R_{B'D'E'}}
= \frac{BD}{B'D'}
= \frac{BD}{AE \sin E}
= \frac{2 R_{BDE}}{AE},
\]
so that
\[
R_A = \frac{2 R_{BDE}}{AE} \cdot R_{B'D'E'}.
\]

Additionally:
\[
R_C = R_{ABD} \cdot \frac{AE}{AE} = R_{ABD},
\quad
R_D = R_{ADE} \cdot \frac{AB}{AE}.
\]

Substituting into
\[
\frac{AB \cdot R_C R_D}{BC \cdot R_A R_D} = \frac{\mu_{12}}{\mu_{23}},
\]
we get
\[
\frac{R_{ABD} \cdot AB \cdot AE}{R_{B'D'E'} \cdot BC \cdot 2 R_{BDE}}
= \frac{\mu_{12}}{\mu_{23}}.
\]

Using $ \displaystyle BC = \frac{AB}{AE} \cdot DE $,
\[
\frac{R_{ABD} \cdot AE^2}{R_{B'D'E'} \cdot DE \cdot 2 R_{BDE}}
=
\frac{1}{\mu_{13}} : \left( \mu_{23} \cdot \frac{1}{\mu_{12}} \cdot \frac{1}{\mu_{13}} \right)
=\frac{1}{\mu_{13}} : \frac{\color{red}a}{\mu_{12}\mu_{14}\mu_{13}}.
\]

Collecting all four terms:
\[
\begin{aligned}
& R_{ABE} \cdot AD^2 : R_{ADE} \cdot AB^2 : R_{ABD} \cdot AE^2 : \left(2 R_{BDE} R_{B'D'E'}\right) \\
&=
\frac{1}{\mu_{12}} : \frac{1}{\mu_{14}} : \frac{1}{\mu_{13}} : \frac{1}{\mu_{12}\mu_{14}\mu_{13}}.
\end{aligned}
\]

Inserting the circumradius formula
\[
R_{B'D'E'} = \frac{B'D' \cdot D'E' \cdot E'B'}{4 S_{\triangle B'D'E'}}
= \frac{AE \cdot AB \cdot AD \cdot \sin B \sin D \sin E}{4 S_{\triangle B'D'E'}},
\]
we arrive at
\[
\begin{aligned}
&
\frac{AD \cdot BE}{4 S_{\triangle ABE}}
:
\frac{AB \cdot DE}{4 S_{\triangle ADE}}
:
\frac{AE \cdot BD}{4 S_{\triangle ABD}}
:
\frac{2 R_{BDE} \sin B \sin D \sin E}{4 S_{\triangle B'D'E'}} \\
&=
\frac{1}{\mu_{12}} : \frac{1}{\mu_{14}} : \frac{1}{\mu_{13}} : \frac{1}{\mu_{12}\mu_{14}\mu_{13}}.
\end{aligned}
\]
Equivalently:
\begin{equation}
    \begin{aligned}
&
\frac{AD}{2 AD'} : \frac{AB}{2 AB'} : \frac{AE}{2 AE'} : \frac{2 R_{BDE} \sin B \sin D \sin E}{4 S_{\triangle B'D'E'}} \\
&=
\frac{1}{\mu_{12}} : \frac{1}{\mu_{14}} : \frac{1}{\mu_{13}} : \frac{1}{\mu_{12}\mu_{14}\mu_{13}}.
\end{aligned}
\end{equation}
Finally, by the law of sines
\[
AD = \frac{B'E'}{\sin D} = B'E' \cdot \frac{2 R_{BDE}}{BE},
\]
and cyclic permutations, we obtain
\begin{equation}\label{proportional.relation}
\begin{aligned}
&
\frac{B'E'}{2 AD'} : \frac{D'E'}{2 AB'} : \frac{B'D'}{2 AE'} : \frac{\sin B \sin D \sin E}{4 S_{\triangle B'D'E'}} \\
&=
\frac{BE}{\mu_{12}} : \frac{DE}{\mu_{14}} : \frac{BD}{\mu_{13}} : \frac{1}{\mu_{12}\mu_{14}\mu_{13}} \\
&=
\mu_{34} : \mu_{23} : \mu_{24} : \frac{1}{\mu_{12}\mu_{14}\mu_{13}}.
\end{aligned}
\end{equation}
Hence, we have
\[
\frac{AD}{2AD'}: \frac{AB}{2AB'}: \frac{AE}{2AE'}
= \frac{1}{\mu_{14}}: \frac{1}{\mu_{12}}: \frac{1}{\mu_{13}},
\]

Suppose $k:=\frac{AD}{2AD'}:\frac{1}{\mu_{14}}$, which is a nonzero multiplier.
Recalling $AB' = x$, $AD' = y$, $AE' = z$, we obtain the system
\begin{eqnarray}
y^2+z^2+2yz \cos\theta_1 &=& \dfrac{4k^2}{\mu_{14}^2} x^2, \label{eq1}\\
z^2+x^2+2zx \cos\theta_2 &=& \dfrac{4k^2}{\mu_{12}^2} y^2, \label{eq2}\\
x^2+y^2+2xy \cos\theta_3 &=& \dfrac{4k^2}{\mu_{13}^2} z^2. \label{eq3}
\end{eqnarray}
Eliminating $k$ from \eqref{eq1} and \eqref{eq2} gives
\[
    F_0(x,y,z)=0,
\]
where
\[
F_0(x,y,z)
=
\frac{1}{\mu_{14}^2}x^2
\bigl(z^2+x^2+2zx\cos\theta_2\bigr)-
\frac{1}{\mu_{12}^2}y^2
\bigl(y^2+z^2+2yz\cos\theta_1\bigr).
\]
Similarly, eliminating $k$ from \eqref{eq1} and \eqref{eq3} gives
\[
    F_1(x,y,z)=0,
\]
where
\[
F_1(x,y,z)
=
\frac{1}{\mu_{14}^2}x^2
\bigl(x^2+y^2+2xy\cos\theta_3\bigr)  
-
\frac{1}{\mu_{13}^2}z^2
\bigl(y^2+z^2+2yz\cos\theta_1\bigr).
\]
Solving equations \eqref{eq1}-\eqref{eq3} yields the value of $k$. Since its explicit expression is complicated, we regard $k$ as a parameter in the following discussion.
By \eqref{proportional.relation}, we also have
\[
\frac{d \sin\theta_1 \sin\theta_2 \sin\theta_3}{4 S_{\triangle B'D'E'}}
= \frac{k}{\mu_{12}\mu_{14}\mu_{13}}.
\]

Now it remains to obtain the third polynomial equation. 
Recall that \(A\) lies inside \(\triangle BDE\), then the area relations give
\[
\begin{aligned}
2 S_{\triangle B'D'E'}
&= xy \sin\theta_3 + yz \sin\theta_1 + zx \sin\theta_2 \\
&= \frac{1}{d}\left(
|\delta_{13}|^{\frac{1}{3}} |\delta_{24}|^{\frac{1}{3}} xy
+ |\delta_{14}|^{\frac{1}{3}} |\delta_{23}|^{\frac{1}{3}} yz
+ |\delta_{14}|^{\frac{1}{3}} |\delta_{34}|^{\frac{1}{3}} zx
\right),
\end{aligned}
\]
and
\[
2 S_{\triangle BDE}
= |\delta_{14}|^{\frac{1}{3}} |\delta_{23}|^{\frac{1}{3}} x
+ |\delta_{12}|^{\frac{1}{3}} |\delta_{34}|^{\frac{1}{3}} y
+ |\delta_{13}|^{\frac{1}{3}} |\delta_{24}|^{\frac{1}{3}} z.
\]
Thus
\[
\frac{2 S_{\triangle BDE}}{4d\, S_{\triangle B'D'E'}}
= \frac{d^{2} \sin\theta_1 \cdot \sin\theta_2 \cdot \sin\theta_3}{4d\, S_{\triangle B'D'E'}}
= \frac{k{d}}{\mu_{12}\mu_{14}\mu_{13}}.
\]
Since
\(a = |\delta_{14} \delta_{23}|^{\frac{1}{3}},\;
b = |\delta_{12} \delta_{34}|^{\frac{1}{3}},\;
c = |\delta_{13} \delta_{24}|^{\frac{1}{3}}\),
 we get
\begin{equation}\label{area.eqs}
    \begin{cases}
ax + by + cz = 2 S_{\triangle BDE} \\[1ex]
ayz + bzx + cxy = \dfrac{\mu_{12}\mu_{14}\mu_{13}}{k{d}} S_{\triangle BDE}
\end{cases}
\end{equation}
Let $\mu = \mu_{12}\mu_{14}\mu_{13}$ and $S_0 = S_{\triangle BDE}$. It follows that
\[
ayz + bzx + cxy=\frac{\mu}{kd}S_0
= \frac{\mu}{k{d}} \cdot \frac{S_0}{4S_0^2} \left(ax + by + cz\right)^2,
\]
which gives a homogeneous equation with respect to $x,y,z$:
\begin{equation}\label{eq4}
(ax + by + cz)^2 - \frac{4k{d}S_0}{\mu} (ayz + bzx + cxy) = 0.
\end{equation}
Solving $k$ from \eqref{eq4}, we obtain
\begin{equation}\label{k}
    k=\frac{\mu(ax+by+cz)^2}{4{d}S_0(ayz+bzx+cxy)}.
\end{equation}
Plugging \eqref{k} into \eqref{eq1}, putting the terms over a common denominator, and using \(\mu=\mu_{12}\mu_{13}\mu_{14}\), we have
\[
    F_2^{+}(x,y,z)=0,
\]
where
\begin{equation}\label{F2}
F_2^{+}(x,y,z):=\frac{\mu_{12}^2\mu_{13}^2}{4{d}^2S_0^2}x^2(ax+by+cz)^4
    -(y^2+z^2+2yz\cos\theta_1)(ayz+bzx+cxy)^2.
\end{equation}

{\bf Case II.}  If \(A\) lies outside \(\triangle BDE\), see figure \ref{fig:placeholder3}.
Without loss of generality, we suppose $E$ lies interior of \(\triangle ABD\).

\begin{figure}
    \centering
    \includegraphics[width=0.5\linewidth]{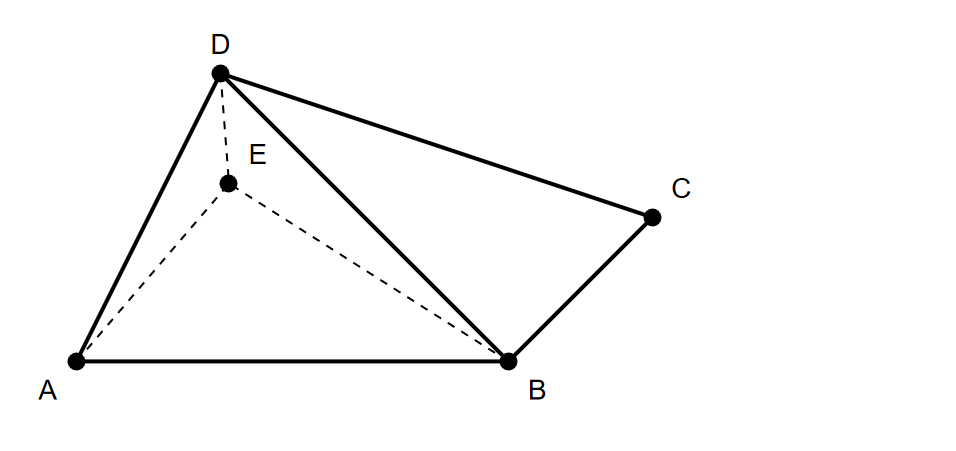}
    \caption{Point A lies outside triangle BDE}
    \label{fig:placeholder3}
\end{figure}

Under a similar arguments with Case I
Suppose $k:=\frac{AD}{2AD'}:\frac{1}{\mu_{14}}$, which is a nonzero multiplier.
Recalling $AB' = x$, $AD' = y$, $AE' = z$, 
under a similar arguments with Case I, we obtain the system
\begin{eqnarray}
y^2+z^2-2yz \cos\theta_1 &=& \dfrac{4k^2}{\mu_{14}^2} x^2, \label{eq1.outside}\\
z^2+x^2-2zx \cos\theta_2 &=& \dfrac{4k^2}{\mu_{12}^2} y^2, \label{eq2.outside}\\
x^2+y^2+2xy \cos\theta_3 &=& \dfrac{4k^2}{\mu_{13}^2} z^2. \label{eq3.outside}
\end{eqnarray}

Now the corresponding signed area
relations becomes
\[
    -ax-by+cz=2S_0
\]
and
\[
    ayz+bzx-cxy=\frac{\mu_{12}\mu_{13}\mu_{14}}{k{d}}S_0.
\]
The same substitution gives
\[
    F_2^{-}(x,y,z)=0,
\]
where
\begin{equation}\label{F2-}
F_2^{-}(x,y,z)
=
\frac{\mu_{12}^2\mu_{13}^2}{4{d}^2S_0^2}
x^2(-ax-by+cz)^4-
\bigl(y^2+z^2+2yz\cos\theta_1\bigr)
(ayz+bzx-cxy)^2 .
\end{equation}

Notes that, under the involution $(x,y,z)\to(-x,-y,z)$, the four polynomial
\eqref{eq1.outside}-\eqref{F2-} are just the same polynomials as
\eqref{eq1}-\eqref{F2}.

Thus, nevertheless whether $A$ in the interior or exterior of $\Delta BDE$,
the three homogeneous polynomials
\[
    F_0,\quad F_1,\quad F_2
\]
have a non-trivial common zero \((x,y,z)\). Hence, by the defining property of
the multivariate resultant,
\[
    \operatorname{Res}(F_0,F_1,F_2)=0.
\]
This proves condition \((ii)\) and completes the proof.
\end{proof}

\begin{remark}
Since $F_0,F_1,F_2$ have degrees $4,4,6$, respectively, 
it is difficult to compute their resultant in a direct manner.
However, we can equivalently establish the necessary condition by computing the resultants of two systems of three quadratic polynomials.

First, note that the system of three quadratic polynomial equations \eqref{eq1}-\eqref{eq3} has
a nontrivial solution $(x,y,z)\in\mathbb{R}^3$.
Then their resultant, which can be computed by Proposition \ref{prop:three-ternary-quadrics}, 
and is denoted by $R(k)$ as a function of the parameter $k$, must vanish.
Hence $k$ can be determined by solving the polynomial equation $R(k)=0$.

On the other hand,
expanding \eqref{eq4} yields:
\begin{equation}\label{eq4'}
    \begin{aligned}
a^2 x^2 + b^2 y^2 + c^2 z^2
&= -\left(2abxy + 2acxz + 2bcyz\right)
+ \frac{4k{d}S_0}{\mu}\left(ayz + bzx + cxy\right) \\[1ex]
&= \left(\frac{2k {d}S_0}{\mu}a - bc\right)\frac{1}{\cos B}\left[\frac{4k^2}{\mu_{14}^2}x^2 - y^2 - z^2\right] \\[1ex]
&\quad + \left(\frac{2k {d}S_0}{\mu}b - ac\right)\frac{1}{\cos D}\left[\frac{4k^2}{\mu_{12}^2}y^2 - x^2 - z^2\right] \\[1ex]
&\quad + \left(\frac{2k {d}S_0}{\mu}c - ab\right)\frac{1}{\cos E}\left[\frac{4k^2}{\mu_{13}^2}z^2 - x^2 - y^2\right].
\end{aligned}
\end{equation}
From \eqref{eq1}-\eqref{eq3} and \eqref{eq4'}, we obtain the system
\begin{equation}\label{three.quadratic.eqs}
    \begin{cases}
x^2 + b_1 xy + c_1 y^2 = 0 \\
y^2 + b_2 yz + c_2 z^2 = 0 \\
x^2 + b_3 zx + c_3 z^2 = 0
\end{cases}
\end{equation}

Let
\[
    P_1(t)=t^2+b_1t+c_1,\qquad
    P_2(t)=t^2+b_2t+c_2,\qquad
    P_3(t)=t^2+b_3t+c_3.
\]
Denote the roots of these three polynomials by
\[
    \xi_1,\xi_2;\qquad
    \eta_1,\eta_2;\qquad
    \zeta_1,\zeta_2,
\]
respectively. Namely,
$P_1(\xi_i)=0,\quad P_2(\eta_j)=0,\quad P_3(\zeta_\ell)=0$ for all $i,j,\ell=1,2$.

For any common solution \((x,y,z)\) of \eqref{three.quadratic.eqs} with \(xyz\ne0\), set
\[
    X=\frac{x}{y},\qquad
    Y=\frac{y}{z},\qquad
    Z=\frac{x}{z}.
\]
Then the three equations in \eqref{three.quadratic.eqs} are equivalent to
\[
    P_1(X)=0,\qquad P_2(Y)=0,\qquad P_3(Z)=0,
\]
and these three ratios satisfy
\[
    Z=XY.
\]
Conversely, if \(X,Y,Z\) satisfy the above three equations together with
\(Z=XY\), then choosing any \(z_0\ne0\) and setting
\[
    y=Yz_0,\qquad x=XYz_0
\]
gives a non-trivial common solution of \eqref{three.quadratic.eqs}.

Thus the problem is reduced to comparing the roots of \(P_3\) with all
pairwise products of roots of \(P_1\) and \(P_2\). Define the quartic
polynomial
\[
    Q(t)=\prod_{i=1}^2\prod_{j=1}^2 (t-\xi_i\eta_j).
\]
Its roots are
\[
    \xi_1\eta_1,\quad \xi_1\eta_2,\quad \xi_2\eta_1,\quad \xi_2\eta_2.
\]
Since
\[
    \xi_1+\xi_2=-b_1,\qquad \xi_1\xi_2=c_1,
    \qquad
    \eta_1+\eta_2=-b_2,\qquad \eta_1\eta_2=c_2,
\]
we obtain
\[
\begin{aligned}
    Q(t)
    ={}& t^4-b_1b_2t^3
    +(b_1^2c_2+b_2^2c_1-2c_1c_2)t^2   \\
    &{}-b_1b_2c_1c_2t+c_1^2c_2^2 .
\end{aligned}
\]
Therefore the quadratic system \eqref{three.quadratic.eqs} has a non-trivial solution with
\(xyz\ne0\) only if \(Q(t)\) and \(P_3(t)\) have a common root. Equivalently,
\[
    \operatorname{Res}(Q,P_3)=0.
\]
Here $\operatorname{Res}(Q,P_3)$ is computed directly by the determinant:
\begin{equation}
    \begin{aligned}
&\operatorname{Res}(Q,P_3)=\begin{vmatrix}
1 & -b_1b_2 & b_1^2c_2 + b_2^2c_1 - 2c_1c_2 & -b_1c_1b_2c_2 & c_1^2c_2^2& 0 \\
0 & 1 & -b_1b_2 & b_1^2c_2 + b_2^2c_1 - 2c_1c_2 & -b_1c_1b_2c_2 & c_1^2c_2^2 \\
1 & b_3 & c_3 & 0 & 0 & 0 \\
0 & 1 & b_3 & c_3 & 0 & 0 \\
0 & 0 & 1 & b_3 & c_3 & 0 \\
0 & 0 & 0 & 1 & b_3 & c_3
\end{vmatrix}\\
&= \frac{1}{\mu_{12}^8 \mu_{13}^8}
\Bigg(256 c_1^4 k^8 \mu_{12}^8 + 128 c_1^3 k^6 \mu_{12}^4 \mu_{13}^2
\bigg(
2 b_1 b_3 k^2 \mu_{12}^2
- \left(b_3^2 - 2 c_3\right)
\left(\mu_{12}^4 - 2 k^2 \mu_{13}^2\right)
\bigg) \\
&\quad+ c_3^2 \mu_{12}^4 \mu_{13}^4
\bigg(
16 b_1^4 k^4 \mu_{12}^4
+ 16 b_1^3 b_3 k^4 \mu_{12}^2 \mu_{13}^2 + 4 b_1 b_3 c_3 k^2 \mu_{12}^2 \mu_{13}^4
+ c_3^2 \mu_{12}^4 \mu_{13}^4 \\
&\quad\quad+ 4 b_1^2 k^2 \mu_{13}^2
\big(
b_3^2 \mu_{12}^4
- 2 c_3 \mu_{12}^4
+ 4 c_3 k^2 \mu_{13}^2
\big)
\bigg) \\
&\quad+ 8 c_1 c_3 k^2 \mu_{12}^2 \mu_{13}^4
\bigg(
8 b_1^3 b_3 k^4 \mu_{12}^4 - \left(b_3^2 - 2 c_3\right) c_3 \mu_{12}^2 \mu_{13}^2
\left(\mu_{12}^4 - 2 k^2 \mu_{13}^2\right) 
\\
&\quad\quad+ 2 b_1 b_3 k^2 \mu_{13}^2
\big(
b_3^2 \mu_{12}^4
- 5 c_3 \mu_{12}^4
+ 4 c_3 k^2 \mu_{13}^2
\big) + b_1^2
\big(
-8 c_3 k^2 \mu_{12}^6
+ 8 b_3^2 k^4 \mu_{12}^2 \mu_{13}^2
\big)
\bigg) \\
&\quad+ 16 c_1^2
\bigg(
4 b_1 b_3 k^6 \mu_{12}^2 \mu_{13}^4
\big(
b_3^2 \mu_{12}^4
- 5 c_3 \mu_{12}^4
+ 4 c_3 k^2 \mu_{13}^2
\big) + 4 b_1^2 k^6 \mu_{12}^4 \mu_{13}^2
\big(
b_3^2 \mu_{12}^4
- 2 c_3 \mu_{12}^4
+ 4 c_3 k^2 \mu_{13}^2
\big) 
\\
&\quad\quad+ k^4 \mu_{13}^4
\big(
b_3^4 \mu_{12}^8
- 4 b_3^2 c_3 \mu_{12}^8 
+ 2 c_3^2
\big(
3 \mu_{12}^8
- 8 k^2 \mu_{12}^4 \mu_{13}^2
+ 8 k^4 \mu_{13}^4
\big)
\big)
\bigg)
\Bigg)
\end{aligned}
\end{equation}
When this determinant vanishes, the quadratic system has a non-trivial
projective common root over the algebraic closure. In the present problem,
one must still determine whether such a root gives real positive distance
variables and satisfies the corresponding geometric branch conditions.
\end{remark}

Lastly, we turn to the case in which the triangle $\Delta BDE$ is degenerate. 
The similarity relations yield $\angle ABC+\angle ADC=\pi$; hence $A,B,C,D$ are concyclic.
We shall prove that this case is impossible.

\begin{corollary}\label{cor:degenerate-auxiliary-triangle}
Under the non-trivial consumption,
any equilibrium of four bodies cannot form a concyclic quadrilateral.
\end{corollary}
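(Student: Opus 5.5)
The plan is to combine three facts: the geometric proportionality of Lemma~\ref{lem:geometric-characterization}, Ptolemy's theorem, and the multiplicative structure $\delta_{ij}=m_im_j(1-\lambda_i\lambda_j)$ with $\lambda_i=e_i/m_i$. From these we aim for a contradiction.

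\emph{Step 1: geometry and signs.} Suppose $A,B,C,D$ is a non-trivial concyclic equilibrium. Four distinct concyclic points are in convex position, and by Lemma~\ref{lem:no-three-collinear-four-body} no three of them are collinear. After relabelling counterclockwise, Proposition~\ref{prop:four-body-sign-patterns} then gives
\[
\delta_{12},\delta_{23},\delta_{34},\delta_{14}>0,\qquad \delta_{13},\delta_{24}<0 .
\]
In particular all six $\delta_{ij}$ are nonzero.

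\emph{Step 2: distances from the coefficients.} All four triangles share the same circumcircle, so $R_A=R_B=R_C=R_D$. Lemma~\ref{lem:geometric-characterization} then reduces to $d_{ij}=s\,\mu_{ij}$ for a common $s>0$, where $d_{ij}=|a_i-a_j|$. Taking products and ratios of opposite pairs gives $d_{ij}/d_{kl}=\Delta_{ij}/\Delta_{kl}$ and $d_{ij}d_{kl}=s^2(\Delta_{ij}\Delta_{kl})^{1/3}$. Hence
\[
\Delta_{ij}=K\,d_{ij}^2\,d_{kl},\qquad \{i,j,k,l\}=\{1,2,3,4\},
\]
with one constant $K>0$. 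Ptolemy's identity $AC\cdot BD=AB\cdot CD+AD\cdot BC$ becomes $c=a+b$, which is exactly the degenerate auxiliary triangle. This confirms that the degenerate case and the concyclic case coincide.

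\emph{Step 3: the algebraic contradiction.} In cross-ratio-type combinations the masses cancel. Using the signs from Step~1, set
\[
u_{ij}=1-\lambda_i\lambda_j .
\]
Then $u_{12},u_{23},u_{34},u_{14}>0$ and $u_{13},u_{24}<0$. Step~2 gives
\[
\Bigl(\frac{u_{12}u_{34}}{|u_{13}u_{24}|}\Bigr)^{1/3}+\Bigl(\frac{u_{14}u_{23}}{|u_{13}u_{24}|}\Bigr)^{1/3}=1 .
\]
It also gives further ratio identities, for instance
\[
\frac{\delta_{12}\delta_{13}}{\delta_{24}\delta_{34}}=\frac{m_1^2u_{12}u_{13}}{m_4^2u_{24}u_{34}},
\]
which must equal the corresponding expression in the $d$'s, such as $d_{12}^2d_{34}d_{13}^2d_{24}/(d_{24}^2d_{13}d_{34}^2d_{12})$.

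The concrete strategy is as follows. Write the cyclic quadrilateral with vertices at angles on the unit circle, so that all $d_{ij}$ are chord lengths $2|\sin(\cdot)|$. Use the force balance at each vertex, which has the form $\sum_j c_{ij}(a_j-a_i)=0$ with $c_{ij}=\delta_{ij}/d_{ij}^3=K\,d_{kl}/d_{ij}$. This determines the ratios $m_i/m_j$, and hence the $\lambda_i$, in terms of the angles. Then check that the relations $\delta_{ij}=m_im_j-e_ie_j$ cannot all hold with real $e_i$.

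The sign constraints force $\lambda_1\lambda_3>1$ and $\lambda_2\lambda_4>1$. This leaves two cases:
\begin{itemize}
\item[(a)] all four $\lambda_i$ have the same sign. Then $u_{12}u_{34}<1$ together with the side conditions should contradict the relation $c=a+b$. One would try to use the fact that $(\lambda_1\lambda_3)(\lambda_2\lambda_4)=(\lambda_1\lambda_2)(\lambda_3\lambda_4)=(\lambda_1\lambda_4)(\lambda_2\lambda_3)$.
\item[(b)] $\lambda_1,\lambda_3$ and $\lambda_2,\lambda_4$ have opposite signs. Then all side coefficients satisfy $u>1$ automatically, and a different estimate is needed.
\end{itemize}

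\emph{Main obstacle.} Step~3 is the hard part. Neither the signs alone nor the Ptolemy relation alone gives a contradiction, so one must genuinely combine the force-balance vectors (which fix the mass ratios) with the rank-type constraint that the six numbers $u_{ij}=1-\lambda_i\lambda_j$ come from only four parameters. As a first attempt, I would express the $\lambda_i$ through the identity
\[
(1-u_{13})(1-u_{24})=(1-u_{12})(1-u_{34})=(1-u_{14})(1-u_{23}),
\]
substitute $u_{ij}\propto d_{ij}^2d_{kl}/(m_im_j)$, and show that the resulting equation in the angles and masses forces $u_{ij}=0$ for all pairs. That would give $\lambda_i=\pm1$ and contradict the non-trivial assumption. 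The square is the special symmetric instance of this argument.
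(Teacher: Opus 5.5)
Your Steps 1 and 2 are sound. Deriving $c=a+b$ from Ptolemy's theorem via $R_A=R_B=R_C=R_D$ is a cleaner justification of ``concyclic $\Rightarrow$ degenerate auxiliary triangle'' than the paper's angle remark. The gap is Step 3: you never reach a contradiction, and the plan you sketch rests on a false premise. Force balance does not determine the ratios $m_i/m_j$. For four points with no three collinear it determines only the stress, i.e.\ the six $\delta_{ij}$ up to one common factor, and that does not fix the eight unknowns $(m_i,\lambda_i)$. All the force-balance information you need is already contained in Lemma~\ref{lem:geometric-characterization} (hence in $c=a+b$) and in the sign pattern. So your assertion that signs plus Ptolemy cannot suffice is wrong once the product structure $\delta_{ij}=m_im_j(1-\lambda_i\lambda_j)$ is used.

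The missing idea is a one-line inequality. Case (a) is excluded by the signs alone. If all $\lambda_i$ have the same sign, then $0<\lambda_1\lambda_2<1$ and $0<\lambda_3\lambda_4<1$ give $(\lambda_1\lambda_3)(\lambda_2\lambda_4)=(\lambda_1\lambda_2)(\lambda_3\lambda_4)<1$. This contradicts $\lambda_1\lambda_3>1$ and $\lambda_2\lambda_4>1$, with no appeal to $c=a+b$. In case (b), set $\alpha=|\lambda_1|$, $\beta=|\lambda_2|$, $\gamma=|\lambda_3|$, $\eta=|\lambda_4|$. Then $u_{12}=1+\alpha\beta$, $u_{34}=1+\gamma\eta$, $|u_{13}|=\alpha\gamma-1$, $|u_{24}|=\beta\eta-1$, and
\[
(1+\alpha\beta)(1+\gamma\eta)-(\alpha\gamma-1)(\beta\eta-1)=\alpha\beta+\gamma\eta+\alpha\gamma+\beta\eta>0 .
\]
Hence $b^3>c^3$, since the factor $m_1m_2m_3m_4$ cancels. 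So the first cube root in your own displayed identity already exceeds $1$, which is incompatible with $c=a+b$. The paper argues the same way: it shows $a^3+b^3-c^3>0$ and hence $c^3<a^3+b^3<(a+b)^3$.
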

\begin{proof}
We will prove, none of the following equalities can occur:
\[
    a+b=c,\qquad b+c=a,\qquad c+a=b.
\]

Now we suppose $a+b=c$.
By the construction in the proof of Theorem~\ref{thm:charged-four-body-equilibrium},
the equality \(a+b=c\) means that the auxiliary triangle \(\triangle BDE\)
is degenerate and that \(E\) lies on the segment \(BD\). The similarity
relations then imply \(    \angle ABC+\angle ADC=\pi\).
Hence \(A,B,C,D\) are concyclic. Since the four points are distinct and no
three of them are collinear, this gives a convex quadrilateral. Therefore,
by Proposition~\ref{prop:four-body-sign-patterns}, the necessary sign pattern is
    \[
        \delta_{12}>0,\;
        \delta_{23}>0,\;
        \delta_{34}>0,\;
        \delta_{14}>0\;\;{\rm and}\;\;\delta_{13}<0,\;
        \delta_{24}<0.
    \]
Since \(\delta_{ij}=m_i m_j(1-\lambda_i\lambda_j)\),
the above sign pattern gives
\[
    \lambda_1\lambda_3>1,\qquad
    \lambda_2\lambda_4>1,
\]
and
\[
    \lambda_1\lambda_2<1,\quad
    \lambda_2\lambda_3<1,\quad
    \lambda_3\lambda_4<1,\quad
    \lambda_4\lambda_1<1.
\]
In particular, \(\lambda_1\) and \(\lambda_3\) have the same sign, and
\(\lambda_2\) and \(\lambda_4\) have the same sign.

These two signs cannot be the same. Indeed, if all \(\lambda_i\) had the same
sign, then
\[
    \lambda_1\lambda_2<1,\qquad
    \lambda_3\lambda_4<1
\]
would imply
\[
    (\lambda_1\lambda_3)(\lambda_2\lambda_4)
    =
    (\lambda_1\lambda_2)(\lambda_3\lambda_4)
    <1,
\]
contradicting
\(\lambda_1\lambda_3>1,\;
    \lambda_2\lambda_4>1\).

Thus \(\lambda_1,\lambda_3\) have one sign and \(\lambda_2,\lambda_4\) have
the opposite sign.
Set
\[
    \alpha=|\lambda_1|,\qquad
    \beta=|\lambda_2|,\qquad
    \gamma=|\lambda_3|,\qquad
    \eta=|\lambda_4|.
\]
Then
\(    \alpha\gamma>1,\;
    \beta\eta>1\).
Moreover,
\[
\begin{aligned}
a^3=|\delta_{14}\delta_{23}|
&=
m_1m_2m_3m_4(1+\alpha\eta)(1+\beta\gamma),\\
b^3=|\delta_{12}\delta_{34}|
&=
m_1m_2m_3m_4(1+\alpha\beta)(1+\gamma\eta),\\
c^3=|\delta_{13}\delta_{24}|
&=
m_1m_2m_3m_4(\alpha\gamma-1)(\beta\eta-1).
\end{aligned}
\]
Therefore
\[
\begin{aligned}
a^3+b^3-c^3
={}&
m_1m_2m_3m_4
\Big[
(1+\alpha\eta)(1+\beta\gamma)
+
(1+\alpha\beta)(1+\gamma\eta)  
-
(\alpha\gamma-1)(\beta\eta-1)
\Big]\\
={}&m_1m_2m_3m_4(1+\alpha\beta+\alpha\gamma+\alpha\eta
+\beta\gamma+\beta\eta+\gamma\eta+\alpha\beta\gamma\eta)\\
>{}&0.
\end{aligned}
\]
Since \(a,b>0\), we have
\[
    c^3<a^3+b^3<(a+b)^3.
\]
Thus
\[
    c<a+b,
\]
which contradicts the assumed equality \(a+b=c\). Therefore the degenerate
case \(a+b=c\) cannot occur.

The two remaining degenerate possibilities \(b+c=a\) and \(c+a=b\) are excluded
in the same way by relabelling the four particles. This proves the corollary.
\end{proof}

\subsection{Special symmetric reductions}

We now consider two special reductions of the four-body equilibrium conditions
under additional symmetry assumptions. Recall that
\[
    DE=a=|\delta_{14}\delta_{23}|^{1/3},\qquad
    BE=b=|\delta_{12}\delta_{34}|^{1/3},\qquad
    BD=c=|\delta_{13}\delta_{24}|^{1/3}.
\]

\paragraph{Equilateral auxiliary triangle.}
Assume that $\delta_{12}=\delta_{13}=\delta_{14}$ and $\delta_{34}=\delta_{24}=\delta_{23}$.
Set
\[
    p=\delta_{12}=\delta_{13}=\delta_{14},
    \qquad
    q=\delta_{23}=\delta_{24}=\delta_{34}.
\]
Then $a=b=c=|pq|^{1/3}$.
Thus the auxiliary triangle \(\triangle BDE\) is equilateral. Moreover,
\[
    \mu_{12}=\mu_{13}=\mu_{14}=:\nu,
    \qquad
    \mu=\mu_{12}\mu_{13}\mu_{14}=\nu^3,
\]
where $\nu=|p|^{2/3}|q|^{-1/3}$.
Since \(a=b=c\), we have $S_0=S_{\triangle BDE}=\frac{\sqrt{3}}{4}a^2$
and $d=\frac{2\sqrt3{}}{3}a$.

From \eqref{area.eqs},
we obtain
\[
    x+y+z=\frac{\sqrt{3}}{2}a,
    \qquad
    xy+yz+zx=\frac{\mu}{k{\color{red}d}}S_0
    =\frac{3}{8}\frac{\mu a}{k}.
\]
On the other hand,
\[
    \theta_1=\theta_2=\theta_3=\frac{\pi}{3}.
\]
Therefore the system \((4.13)\)--\((4.15)\) becomes
\[
\begin{cases}
    y^2+yz+z^2=\dfrac{4k^2}{\nu^2}x^2,\\[1.2ex]
    z^2+zx+x^2=\dfrac{4k^2}{\nu^2}y^2,\\[1.2ex]
    x^2+xy+y^2=\dfrac{4k^2}{\nu^2}z^2.
\end{cases}
\]
We claim that every positive solution satisfies
\[
    x=y=z.
\]
Indeed, if \(x,y,z\) are not all equal, then, after a permutation, we may
assume \(x<y\leq z\) or \(x\le y< z\). 
Then we have
\[
\dfrac{4k^2}{\mu_{14}^2}x^2=y^2+yz+z^2
>y^2+yz+z^2-(x+y+z)(z-x)
=y^2+xy+x^2=\dfrac{4k^2}{\mu_{13}^2}z^2
=\dfrac{4k^2}{\mu_{14}^2}z^2.
\]
Similarly, we have
\[
\dfrac{4k^2}{\mu_{14}^2}x^2\ge\dfrac{4k^2}{\mu_{14}^2}y^2
\ge\dfrac{4k^2}{\mu_{14}^2}z^2.
\]
Together with $x\le y\le z$, we must have $x\le y<0$ and $-x=|x|>|z|$,
which implies
\[
x+y+z\le x+z<-|z|+z\le0,
\]
which yields a contradiction. 
Hence $x=y=z$.
Using $x+y+z=\frac{\sqrt{3}}{2}a$,
we obtain $x=y=z=\frac{\sqrt{3}}{6}a$.
Substituting this into the three quadratic equations gives
$\frac{4k^2}{\nu^2}=3$, and hence
\[
    k=\frac{\sqrt{3}}{2}\nu.
\]
On the other hand, substituting \(x=y=z\) into $xy+yz+zx=\frac{3}{8}\frac{\mu a}{k}$
gives
\[
    k=\frac{3\mu}{2a}.
\]
Therefore $\frac{\sqrt{3}}{2}\nu=\frac{3\nu^3}{2a}$,
and hence $a=\sqrt{3}\nu^2$.

Since $a=|pq|^{1/3},
    \;
    \nu^2=|p|^{4/3}|q|^{-2/3}$,
the identity \(a=\sqrt{3}\nu^2\) is equivalent to
\[
    |q|=\sqrt{3}|p|.
\]
It remains to impose the sign condition. 
In this reduction, since $x=y=z>0$, the particle \(A\)
lies inside the triangle formed by \(B,D,E\). Thus the three coefficients on
the outer triangle must have the same sign, while the three coefficients
joining \(A\) to the outer vertices must have the opposite sign. Therefore
\(p\) and \(q\) must have opposite signs. Combining this with
\(|q|=\sqrt{3}|p|\), we obtain
\[
    q=-\sqrt{3}p.
\]

The preceding computation shows that this reduction is very rigid. In fact,
a nonzero solution forces $x=y=z$.
Thus \(A\) is the center of the equilateral auxiliary triangle
\(\triangle BDE\). By the similarity relation $\triangle ABE\sim \triangle ACD$,
the auxiliary point \(E\) coincides with the original point \(C\), up to the
chosen orientation. Hence the original four-body configuration consists of an
equilateral triangle \(BCD\) together with its center \(A\). We summarize this
case as follows.
\begin{proposition}[Equilateral triangle with its center]
Assume
\[
    \delta_{12}=\delta_{13}=\delta_{14}=p,
    \qquad
    \delta_{23}=\delta_{24}=\delta_{34}=q,
\]
with \(pq\neq 0\). Then, in this symmetric reduction, a non-collinear
equilibrium exists if and only if
\[
    q=-\sqrt{3}p.
\]
Equivalently,
\[
    \delta_{23}=\delta_{24}=\delta_{34}
    =
    -\sqrt{3}\delta_{12}
    =
    -\sqrt{3}\delta_{13}
    =
    -\sqrt{3}\delta_{14}.
\]

In this case the original configuration has the following shape: the three
points \(B,C,D\) form an equilateral triangle, and \(A\) is its center. Thus the
four-body configuration is a concave configuration consisting of an equilateral
triangle together with its center.

Moreover, up to similarity, the corresponding solution of the reduced geometric
equations is
\[
    x=y=z=\frac{\sqrt{3}}{6}a,
    \qquad
    k=\frac{\sqrt{3}}{2}\nu
    =
    \frac{3\mu}{2a},
\]
where $a=|pq|^{1/3},\;\nu=|p|^{2/3}|q|^{-1/3},\;\mu=\nu^3$.
\end{proposition}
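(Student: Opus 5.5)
I would prove both directions directly from the force-balance equations, without going through the auxiliary triangle. The preceding reduction is useful only as a heuristic: it already forces \(A\) to be the centre of an equilateral triangle \(BCD\) and forces \(q=-\sqrt3 p\).

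\emph{Necessity.} Suppose a non-collinear equilibrium exists with \(pq\neq0\). By Lemma~\ref{lem:no-three-collinear-four-body} no three of the points are collinear. By Proposition~\ref{prop:four-body-sign-patterns}, the convex case would require \(\delta_{13}\) and \(\delta_{12}\) to have opposite signs. Both equal \(p\), so the configuration must be concave. In the concave case the three coefficients of some outer triangle share a sign and the three coefficients to the interior point share the opposite sign. Since \(\delta_{1j}=p\) for all \(j\) and \(\delta_{ij}=q\) for \(i,j\ge2\), the interior point must be \(A=a_1\) and \(pq<0\). (If the interior point were one of \(B,C,D\), the outer triangle would contain two \(p\)-edges and one \(q\)-edge, which is impossible.)

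Next I write the balance at \(B\). Its neighbours are \(A\), with coefficient \(p/|AB|^3\), and \(C,D\), with coefficient \(q\) over the cubed distances. I expect the main obstacle here: showing that \(BCD\) is equilateral and that \(A\) is its centre. I would first use the balance at \(A\):
\[
\sum_{j=2}^4 \frac{a_j-a_1}{|a_j-a_1|^3}=0 .
\]
This says that \(A\) is a critical point of \(\sum_j |x-a_j|^{-1}\) for the triangle \(BCD\), which does not immediately give a centre. So I would combine it with the balance at \(B,C,D\). Summing the torque-free identities, or equivalently applying the proportionality of Lemma~\ref{lem:geometric-characterization} with all \(\Delta\)'s equal within each group, gives
\[
\frac{BC}{R_BR_C}=\frac{CD}{R_CR_D}=\frac{BD}{R_BR_D},
\qquad
\frac{AB}{R_AR_B}=\frac{AC}{R_AR_C}=\frac{AD}{R_AR_D}.
\]
The auxiliary triangle is then equilateral, and the preceding argument shows that every positive solution has \(x=y=z\). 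By the similarity \(\triangle ABE\sim\triangle ACD\), the point \(E\) coincides with \(C\), so \(BCD\) is equilateral with centre \(A\). I take the computation leading to \(|q|=\sqrt3|p|\) from the preceding discussion, and the sign condition \(pq<0\) then yields \(q=-\sqrt3p\).

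\emph{Sufficiency.} I place \(B,C,D\) at the vertices of an equilateral triangle of side \(s\) and \(A\) at its centre, so that \(|AB|=s/\sqrt3\). The balance at \(A\) holds by symmetry. At \(B\), the force from \(A\) points along \(BA\) with magnitude \(|p|\cdot3/s^2\). The two forces from \(C\) and \(D\) each have magnitude \(|q|/s^2\) and make an angle of \(30^\circ\) with \(BA\), so their resultant has magnitude \(\sqrt3|q|/s^2\) along \(BA\). The total force vanishes exactly when \(3p+\sqrt3 q=0\), that is, \(q=-\sqrt3p\), with the opposite signs handled automatically; the balance at \(C\) and \(D\) follows by symmetry. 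Finally, substituting \(a=b=c\) and \(x=y=z=\frac{\sqrt3}{6}a\) into \eqref{eq1}--\eqref{eq4} recovers the stated values of \(k\).
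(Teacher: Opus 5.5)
Your sufficiency check is correct, and the paper never writes it out. For an equilateral $BCD$ with centre $A$, the balance at $B$ reduces to $3p+\sqrt3\,q=0$, and $A$ balances by symmetry. The same computation also proves that $q=-\sqrt3p$ is \emph{necessary} once the shape is known. So you do not need to import the $|q|=\sqrt3|p|$ computation from the preceding discussion. Your up-front sign analysis is also fine, and better placed than in the paper, which concludes that $A$ is inside only after it has already used the inside-branch relation $x+y+z=\frac{\sqrt3}{2}a$. It is even quicker from the balance at $A$: $\sum_{j\ge2}(a_j-a_1)/|a_j-a_1|^3=0$ is a positive combination of three directions summing to zero, so $A$ lies inside $BCD$.

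The gap is the step you yourself flag as the main obstacle. Contrary to your opening sentence, you do not prove the shape directly; for necessity your proof is the paper's argument. You rely on the claim that every positive solution of \eqref{eq1}--\eqref{eq3} has $x=y=z$. That claim is true: subtracting two equations gives $(y-x)\bigl((1+\lambda)(x+y)+z\bigr)=0$ with $\lambda=4k^2/\nu^2$. But positivity of $x,y,z$ means $A$ is inside the auxiliary triangle $BDE$, whereas you only showed $A$ is inside $BCD$. This is not a formality. Read with signed distances, the system has non-symmetric real solutions with $x+y+z>0$, for instance $y=z>0$, $x\approx-1.39\,z$; such mixed-sign solutions are what a point outside $BDE$ produces.

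The missing link is true but must be stated. Put $A=0$ in $\mathbb C$. Then $E=BD/C$, and the similarity $z\mapsto\overline{z/(BD)}$ fixes $A$ and sends $B,D,E$ to the inversions of $D,B,C$. Since inversion preserves rays from $A$, $A$ is inside $BDE$ exactly when it is inside $BCD$.

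More simply, you can finish directly as you intended. Your second chain gives $R_B:R_C:R_D=AB:AC:AD$, and substituting this into the first gives $BC\cdot AD=CD\cdot AB=BD\cdot AC$. So inversion at $A$ sends $B,C,D$ to the vertices $P_j$ of an equilateral triangle. Since $(P_j-A)|P_j-A|=(a_j-A)/|a_j-A|^3$, the balance at $A$ says $A$ is a critical point of the strictly convex function $\sum_j|x-P_j|^3$, hence the centre of that triangle. Thus $AB=AC=AD$, and then $BC=CD=BD$.

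Your last sentence fails as written. With $a=b=c$, $x=y=z$ and $dS_0=a^3/2$, \eqref{eq4} gives $k=\frac{3\mu}{2a^2}$, not $\frac{3\mu}{2a}$. The second relation in \eqref{area.eqs}, and hence \eqref{eq4}, is not even invariant under $\delta_{ij}\mapsto t\delta_{ij}$: its two sides scale like $t^2$ and $t^{4/3}$. The stated value comes out of the paper's equilateral computation only because there $xy+yz+zx=\mu S_0/(kd)$ is written without dividing by $a$. Take $k$ from \eqref{eq1} instead: $3x^2=\frac{4k^2}{\nu^2}x^2$ gives $k=\frac{\sqrt3}{2}\nu$. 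The identity $\frac{3\mu}{2a}=\frac{\sqrt3}{2}\nu$ is then equivalent to $a=\sqrt3\,\nu^2$, i.e.\ to $|q|=\sqrt3|p|$, which you already have from the balance at $B$.
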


\medskip
\paragraph{An isosceles reduction.}
Assume that $\delta_{12}=\delta_{23}=\delta_{34}=\delta_{14}$.
Set
\[
    s=\delta_{12}=\delta_{23}=\delta_{34}=\delta_{14}.
\]
Then $a=|\delta_{14}\delta_{23}|^{1/3}=|\delta_{12}\delta_{34}|^{1/3}=|s|^{2/3}$
and $\mu_{12}=\mu_{23}=\mu_{34}=\mu_{41}=|s|^{1/3}$.
Thus $BE=DE=a$.
Moreover,
\[
    c=|\delta_{13}\delta_{24}|^{1/3}.
\]
We assume
\[
    0<c<2a,
\]
so that the auxiliary triangle \(\triangle BDE\) is non-degenerate. 
Since $\theta_1=\theta_2$, we have
\[
    \cos\theta_1=\sin\frac{\theta_3}{2},
    \qquad
    \cos\frac{\theta_3}{2}>0.
\]

We next show that any positive solution must satisfy \(x=y\). 
Subtracting the second equation in \((4.13)\)--\((4.15)\) from the first one,
we obtain
\[
    (1+\lambda)(y^2-x^2)+2z\cos\theta_1(y-x)=0,
\]
where $\lambda=\frac{4k^2}{\mu_{12}^2}=\frac{4k^2}{\mu_{14}^2}>0$.
Hence
\[
    (y-x)\bigl((1+\lambda)(x+y)+2z\cos\theta_1\bigr)=0.
\]
Since \(x,y,z>0\) and \(0<\theta_1<\pi/2\), the second factor is positive.
Therefore $x=y$.

From \((4.13)\)--\((4.15)\), we obtain
\[
\begin{cases}
    2xz\cos\theta_1
    =
    \left(\dfrac{4k^2}{\mu_{14}^2}-1\right)x^2-z^2,
    \\[1.5ex]
    2x^2\cos\theta_3
    =
    -2x^2+\dfrac{4k^2}{\mu_{13}^2}z^2.
\end{cases}
\]
The second equation gives
\[
    \frac{4k^2}{\mu_{13}^2}z^2
    =
    2x^2(1+\cos\theta_3)
    =
    4x^2\cos^2\frac{\theta_3}{2}.
\]
Hence
\begin{equation}\label{z.sym.case.2}
    z=\frac{\mu_{13}\cos(\theta_3/2)}{k}x.
\end{equation}
Substituting this into the first equation gives
\[
    \frac{4}{\mu_{14}^2}k^4
    -
    k^2
    -
    2\mu_{13}\cos\frac{\theta_3}{2}\cos\theta_1\, k
    -
    \mu_{13}^2\cos^2\frac{\theta_3}{2}
    =
    0.
\]
Next, the
signed area relations must also be satisfied. To express the inside and outside
branches simultaneously, let
\[
    \varepsilon\in\{1,-1\}.
\]
Here \(\varepsilon=1\) corresponds to the inside branch, while
\(\varepsilon=-1\) corresponds to the outside branch. The signed area relations
can be written as
\[
    2\varepsilon ax+cz=2S_0,
\]
and
\[
    2axz+\varepsilon c x^2=\frac{\mu}{k{d}}S_0.
\]
Using \eqref{z.sym.case.2},
the first signed area relation gives
\[
    x=
    \frac{2S_0 k}
    {2\varepsilon ak+c\mu_{13}\cos(\theta_3/2)}.
\]
The second signed area relation gives
\[
    x^2=
    \frac{\mu S_0}
    {{d}(2a\mu_{13}\cos(\theta_3/2)+\varepsilon ck)}.
\]
Therefore these two expressions are compatible if and only if
\[
    4S_0{d}k^2
    \left(
        2a\mu_{13}\cos\frac{\theta_3}{2}+\varepsilon ck
    \right)
    =
    \mu
    \left(
        2\varepsilon ak+c\mu_{13}\cos\frac{\theta_3}{2}
    \right)^2.
\]

In the isosceles reduction, the equalities $BE=DE,\; BA=DA$
imply that both \(A\) and \(E\) lie on the perpendicular bisector of \(BD\).
Using the similarity relations $\triangle ABE\sim \triangle ACD$
and $\triangle ADE\sim \triangle ACB$,
we see that the original point \(C\) also lies on this same symmetry axis.
Consequently,
\[
    AB=AD,\qquad CB=CD,
\]
and the original quadrilateral \(ABCD\) is a kite. 
The existence condition is
therefore reduced to the following algebraic and positivity conditions.

\begin{proposition}[Kite-shaped reduction]
Assume
\[
    \delta_{12}=\delta_{23}=\delta_{34}=\delta_{14}=s,
\]
and set
\[
    a=|s|^{2/3},
    \qquad
    c=|\delta_{13}\delta_{24}|^{1/3}.
\]
Suppose $0<c<2a$.
Then any non-collinear equilibrium in this reduction has the shape of a kite:
\[
    AB=AD,\qquad CB=CD.
\]
Equivalently, \(AC\) is the symmetry axis of the configuration, and the two
points \(B,D\) are symmetric with respect to \(AC\).

In this kite-shaped reduction, a non-collinear equilibrium exists if and only
if the following conditions hold:
\[
    s>0,
    \qquad
    \delta_{13}<0,
    \qquad
    \delta_{24}<0,
\]
and there exist $\varepsilon\in\{1,-1\}$ and $k>0$,
such that
\[
    \frac{4}{\mu_{14}^2}k^4
    -
    k^2
    -
    2\mu_{13}\cos\frac{\theta_3}{2}\cos\theta_1\, k
    -
    \mu_{13}^2\cos^2\frac{\theta_3}{2}
    =
    0,
\]
\[
    4S_0{d}k^2
    \left(
        2a\mu_{13}\cos\frac{\theta_3}{2}+\varepsilon ck
    \right)
    =
    \mu
    \left(
        2\varepsilon ak+c\mu_{13}\cos\frac{\theta_3}{2}
    \right)^2,
\]
and
\[
    2\varepsilon ak+c\mu_{13}\cos\frac{\theta_3}{2}>0,
    \qquad
    2a\mu_{13}\cos\frac{\theta_3}{2}+\varepsilon ck>0.
\]

In this case,
\[
    x=y=
    \frac{2S_0 k}
    {2\varepsilon ak+c\mu_{13}\cos(\theta_3/2)},
    \qquad
    z=
    \frac{\mu_{13}\cos(\theta_3/2)}{k}x.
\]
Here $\mu=\mu_{12}\mu_{13}\mu_{14}=a\mu_{13}$,
and the angles of the auxiliary triangle \(\triangle BDE\) satisfy
$\theta_1=\theta_2,
    \;
    \theta_3=\pi-2\theta_1$.
\end{proposition}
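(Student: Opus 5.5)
The plan has three steps: derive the kite shape and the sign pattern, show that equilibrium forces the two displayed polynomial identities, and reverse the construction to get sufficiency.

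\textbf{Kite shape and signs.} This part is essentially the argument given just before the statement, which I would make rigorous.

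Since $\delta_{12}=\delta_{34}$ and $\delta_{23}=\delta_{14}$, we have $a=b=|s|^{2/3}$, so the auxiliary triangle $\triangle BDE$ of Theorem~\ref{thm:charged-four-body-equilibrium} is isosceles with $BE=DE$. The hypothesis $0<c<2a$ makes it non-degenerate. It also gives $\theta_1=\theta_2<\pi/2$ and $\mu_{12}=\mu_{14}$.

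Subtracting \eqref{eq2} from \eqref{eq1} gives the factorization $(y-x)\bigl((1+\lambda)(x+y)+2z\cos\theta_1\bigr)=0$. The second factor is positive in the inside branch, so $x=y$. In the outside branch I would instead use \eqref{eq1.outside}--\eqref{eq2.outside} together with the involution $(x,y,z)\mapsto(-x,-y,z)$ noted in the proof of Theorem~\ref{thm:charged-four-body-equilibrium}; the same factorization should again force $x=y$.

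From $x=y$, the point $A$ is equidistant from the lines $BE$ and $DE$, and it lies on the correct side. Hence $A$ is on the symmetry axis of $\triangle BDE$, which is the perpendicular bisector of $BD$, and so $AB=AD$. The similarities $\triangle ABE\sim\triangle ACD$ and $\triangle ADE\sim\triangle ACB$ then give $CB/CD=(AB\cdot DE)/(AD\cdot BE)=1$. So $ABCD$ is a kite with axis $AC$.

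For the signs, the kite is either convex or concave.
\begin{itemize}
\item If it is convex, Proposition~\ref{prop:four-body-sign-patterns} gives $s>0$ and $\delta_{13},\delta_{24}<0$.
\item If it is concave, one vertex lies inside the triangle of the other three. Proposition~\ref{prop:four-body-sign-patterns}(2) would then force the equal coefficient $s$ to appear with both signs among the outer and inner edges. For example, with $A$ interior, $\delta_{23}$ lies on the outer triangle and $\delta_{12}$ on an inner edge. This is impossible since $s\neq0$.
\end{itemize}
This step needs care. I expect the kite symmetry to exclude every choice of interior vertex, but I would check each case.

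\textbf{Necessity of the displayed equations.} Next I would substitute $x=y$ into \eqref{eq3}. Using $1+\cos\theta_3=2\cos^2(\theta_3/2)$, this gives $z=\mu_{13}\cos(\theta_3/2)\,x/k$, after choosing $k>0$ (the sign of $k$ is free because only $k^2$ appears in \eqref{eq1}--\eqref{eq3}). Substituting this into \eqref{eq1} and dividing by $x^2/k^2$ produces the quartic in $k$.

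The two signed area relations \eqref{area.eqs}, and their outside analogues with sign $\varepsilon$, each determine $x$: the first gives $x$ linearly, the second gives $x^2$. Equating the square of the first expression with the second yields the compatibility equation. The two positivity inequalities simply say that both expressions for $x$ and $x^2$ are positive, which is required because $x>0$ is a distance.

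\textbf{Sufficiency.} This is the main obstacle, because Lemma~\ref{lem:geometric-characterization} is only stated as necessary. I would reverse the construction as follows.
\begin{enumerate}
\item Fix $\triangle BDE$ with sides $a,a,c$.
\item Given $k$ and $\varepsilon$ satisfying the conditions, define $x=y$ and $z$ by the displayed formulas.
\item Place $A$ on the symmetry axis at the signed distances $x,y,z$ from the three sides. This is consistent precisely because of the first area relation.
\item Define $C$ through the similarity $\triangle ACD\sim\triangle ABE$.
\end{enumerate}
It would then remain to verify force balance directly. By the kite symmetry, the forces on $A$ and $C$ have no component perpendicular to the axis, and the balance at $D$ follows from the balance at $B$ by reflection. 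Moreover, the total force sums to zero. So only two scalar equations are left: the axial balance at $A$ and one component of the balance at $B$.

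I would first try to show that these two scalar equations are exactly the quartic and the compatibility equation, by retracing the chain of proportionalities in the proof of Theorem~\ref{thm:charged-four-body-equilibrium}. If that fails, I would instead parametrize the kite directly by the half-angles at $A$ and $C$, write the two balances in terms of $s$, $\delta_{13}$ and $\delta_{24}$, and match them with the stated conditions. This direct route is laborious but elementary.
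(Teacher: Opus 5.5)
\textbf{There is a genuine gap: $x=y$ is never established in the case where the equilibria actually are.}

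\textbf{The outside case fails.} Subtracting \eqref{eq2.outside} from \eqref{eq1.outside} gives $(y-x)\bigl((1+\lambda)(x+y)-2z\cos\theta_1\bigr)=0$, and the second factor has no definite sign. The involution $(x,y,z)\mapsto(-x,-y,z)$ does turn the system into \eqref{eq1}--\eqref{eq3}, but only by making $x,y$ negative. That destroys exactly the positivity your argument relies on.

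\textbf{The inside case, where your factorization works, is empty.} There $x=y$ gives $AB=AD$. Then $\triangle ADE\sim\triangle ACB$ together with $\angle CAB=\angle DAC$ places $E$ on the ray $AC$, with $AE=AB^2/AC$. Since the kite is convex, the midpoint of $BD$ is on that same ray, so $A$ cannot lie inside $\triangle BDE$.

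\textbf{The inside/outside split is incomplete.} The split you inherit from the paper omits the region across the side $BD$ from $E$. There the first area relation is $ax+by-cz=2S_0$, not of the form $2\varepsilon ax+cz=2S_0$. Realizable equilibria do lie in this region: for example the rhombus $A,C=(0,\pm1)$, $B,D=(\mp4,0)$ with masses $(1,8,1,8)$ and charges $\approx(4.69,-37.5,4.69,-37.5)$.

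\textbf{The quartic is derived only for $\varepsilon=1$.} For $\varepsilon=-1$ you must substitute into \eqref{eq1.outside}, not \eqref{eq1}. This flips the sign of the linear term, so your derivation of the displayed quartic covers only $\varepsilon=1$.

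\textbf{No purely geometric patch can work.} No argument using only the force-balance equations, the sign pattern and $0<c<2a$ can yield $x=y$. These hypotheses are invariant under the cyclic relabelling $A\to B\to C\to D\to A$, which turns $AB=AD$ into $BA=BC$. So the statement in fact asserts that every such equilibrium is a rhombus. Non-rhombic solutions of the force-balance equations exist:
\begin{itemize}
\item Take $A=(0,1)$, $C=(0,-1)$, $B=(-p,0)$, $D=(q,0)$ with $p\neq q$ and $q(1+p^2)^{3/2}=p(1+q^2)^{3/2}$; for example $p=0.3$, $q\approx1.459$.
\item Set $\delta_{12}=\delta_{23}=\delta_{34}=\delta_{14}=(1+p^2)^{3/2}$, $\delta_{13}=-4(p+q)/q$ and $\delta_{24}=-2p(p+q)^2$.
\item These solve $\sum_{j\neq i}\delta_{ij}(a_j-a_i)/|a_j-a_i|^3=0$, with $c/a\approx1.9$.
\item Here $A$ lies in the vertical angle at $E$, the second factor above vanishes, and $AB\neq AD$.
\end{itemize}

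\textbf{The missing idea is realizability.} In the notation of the proof of Corollary~\ref{cor:degenerate-auxiliary-triangle}, $a^3-c^3=m_1m_2m_3m_4(\alpha+\beta)(\gamma+\eta)>0$, so $c<a$. Every non-rhombic kite of this type has $c>a$. For a kite with axis $AC$, $OA=1$, $OC=\rho^3\neq1$ ($O$ the midpoint of $BD$), balance forces $OB^2=\rho^2(1+\rho^2)$, and then $c/a=2(1+\rho)\sqrt{1+\rho^2}/(1+\rho+\rho^2)>1$. You therefore need to feed $\delta_{ij}=m_im_j-e_ie_j$ into the outside analysis and show that it excludes every solution with $x\neq y$.

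The paper's own argument has the same hole: it obtains $x=y$ only from \eqref{eq1}--\eqref{eq2}. It also proves no converse, so your sufficiency sketch is not behind the paper, but it is still only a plan.

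Your concave-case worry, by contrast, is easily settled. The four $s$-edges form the cycle $ABCD$. Any interior vertex carries two of them while the other two lie on the outer triangle, which contradicts Proposition~\ref{prop:four-body-sign-patterns}(2) since $s\neq0$.
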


\section{The collinear equilibrium}

We now study collinear equilibria for a general charged \(N\)-body system. 
Compared with the non-collinear four-body case, the collinear setting allows us
to reduce the force-balance equations to a system of algebraic equations in the
\(N-1\) adjacent distances, which provides a natural way to derive necessary
conditions for the existence of equilibria.

\subsection{The necessary condition}

Given positive mass $m = (m_1, m_2,\ldots, m_N)\in(\mathbb{R}^+)^N$ and the quantities of
charges $e = (e_i,e_2,\ldots, e_N) \in\mathbb{R}^N$, 
let $a=(a_1,\ldots,a_N)\in\mathbb{R}^N$ be an $N$-body collinear equilibrium of $m$ for $1\le i\le N$
which satisfies $a_i<a_j$ if $i<j$.
The total force on mass $m_i,1\le i\le N$ is given by
\begin{equation}
    F_i=-\sum_{j=1}^{i-1}\frac{m_im_j-e_ie_j}{r_{ij}^2}
    +\sum_{j=i+1}^{N}\frac{m_im_j-e_ie_j}{r_{ij}^2},
\end{equation}
where $r_{ij}=|a_i-a_j|$.

Suppose $x_i=a_{i+1}-a_i$ for $1\le i\le N-1$.
Thus, an equilibrium exists if and only if the equations
$F_1(x_1,\ldots,x_{N-1})=\ldots=F_n(x_1,\ldots,x_{N-1})=0$
possess a positive solution $(x_1,\ldots,x_{N-1})\in(\mathbb{R}^+)^{N-1}$.
Moreover, since $F_1+F_2+\ldots+F_N\equiv0$, we only need to consider the first $N-1$ equations.
To eliminate the denominators of $F_i$, we define
\begin{equation}\label{eq:collinear-Gi}
    G_i(x_1,\ldots,x_{N-1})=(\Pi_{j\not=i}{r_{ij}^2})F_i.
\end{equation}
Now $G_i(x_1,\ldots,x_{N-1})$ is a homogeneous multivariable polynomial in $x_1,\ldots,x_{N-1}$.

Then by Theorem 2.3 in Chapter 3 of \cite{CLO}, we have
the following result.
\begin{theorem}\label{thm:necessary.condition.collinear.nbp}
    Consider a charged $N$-body system as in Definition \ref{def:e}. For each value of the masses $m = (m_1, m_2, \ldots,m_N)\in(\mathbb{R}^+)^N$ and the quantities of
charges $e = (e_1,e_2,\ldots, e_N) \in\mathbb{R}^N$,
    suppose at least two of the parameters $\lambda_{ij},1\le i<j\le N$ are nonzero.
Then a necessary condition for the existence of an equilibrium is the resultant of the $N-1$ multivariable
polynomial $G_1(x_1,\ldots,x_{N-1}),G_2(x_1,\ldots,x_{N-1}),\ldots,G_{N-1}(x_1,\ldots,x_{N-1})$ is zero,
i.e.,
\begin{equation}
    Res(G_1,G_2,\ldots,G_{N-1})=0.
\end{equation}
\end{theorem}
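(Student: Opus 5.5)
The plan is to reduce the statement to Theorem~\ref{thm:multivariate-resultant}, applied to $n+1=N-1$ homogeneous polynomials in the $N-1$ variables $x_1,\ldots,x_{N-1}$ over $K=\mathbb C$. The square system (as many polynomials as variables) is exactly the setting of that theorem.

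\textbf{Step 1: homogeneity and degrees.} Every mutual distance is a positive linear form in the adjacent gaps. For $i<j$,
\[
r_{ij}=x_i+x_{i+1}+\cdots+x_{j-1}.
\]
Multiplying $F_i$ by $\prod_{j\ne i}r_{ij}^2$ cancels each denominator $r_{ij}^2$. What remains in $G_i$ is a sum of terms $\pm\delta_{ij}\prod_{l\ne i,j}r_{il}^2$. Each such term is homogeneous of degree $2(N-2)$. Hence $G_i$ is a homogeneous polynomial of degree $d_i=2N-4$ with real coefficients depending only on $m$ and $e$. Its resultant $\operatorname{Res}_{d_1,\ldots,d_{N-1}}(G_1,\ldots,G_{N-1})$ is therefore well defined.

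If some $G_i$ happens to be the zero polynomial, the resultant is trivially zero, since the universal resultant has no constant term. This follows from the normalization $\operatorname{Res}(x^{d})=1$ together with multihomogeneity in each coefficient block, so this case needs no separate treatment.

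\textbf{Step 2: a nontrivial common zero.} Suppose a non-collision collinear equilibrium exists. After relabelling, order the bodies so that $a_1<\cdots<a_N$, and put $x_i=a_{i+1}-a_i>0$. The equilibrium equations \eqref{eq.of.e} restricted to the line say exactly that $F_i(x)=0$ for all $i$. Because the configuration is non-collision, every $r_{ij}>0$, so the multiplier $\prod_{j\neq i}r_{ij}^2$ is nonzero and $G_i(x)=0$ for $i=1,\ldots,N-1$. Since $x\in(\mathbb R^+)^{N-1}$, the point $x$ is nonzero, which gives a non-trivial common zero in $\mathbb C^{N-1}$. Theorem~\ref{thm:multivariate-resultant}(1) then yields $\operatorname{Res}(G_1,\ldots,G_{N-1})=0$.

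The equation $F_N=0$ is not needed, because $F_N=-(F_1+\cdots+F_{N-1})$ by Newton's third law. Likewise, noncollinear equilibria are excluded from the scope here, since the statement concerns the collinear system with a fixed ordering.

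\textbf{Main obstacle.} The only delicate point is the role of the non-trivial assumption and of the ordering. The argument is a necessary condition, so the non-trivial assumption is not logically required for the implication. Its purpose is to rule out the degenerate situation where all $\delta_{ij}$ vanish, in which case every $G_i\equiv0$ and the condition is empty. I would state this explicitly.

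The ordering matters because the $G_i$ depend on the chosen order of the bodies along the line. For a general collinear equilibrium, the conclusion should be read as the vanishing of the resultant for the polynomials attached to some permutation. I would phrase the proof with "after relabelling so that $a_1<\cdots<a_N$", consistent with the setup preceding \eqref{eq:collinear-Gi}.
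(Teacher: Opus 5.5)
Your proposal is correct and follows the paper's own argument. Clearing the denominators $\prod_{j\ne i}r_{ij}^2$ turns the collinear force balance into homogeneous polynomials $G_1,\ldots,G_{N-1}$ of degree $2N-4$ in the $N-1$ gap variables; the positive gap vector of an equilibrium is then a non-trivial common zero, and Theorem~\ref{thm:multivariate-resultant}(1) gives $\operatorname{Res}(G_1,\ldots,G_{N-1})=0$. Your separate remark about some $G_i\equiv0$ is unnecessary, and the normalization plays no role in it: the equilibrium point is still a non-trivial common zero in that case, so the same theorem applies directly.
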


\begin{remark}
Although Theorem~\ref{thm:necessary.condition.collinear.nbp} is formulated only for collinear equilibria, the same idea
can in principle be applied to general non-collision equilibria. After choosing
coordinate variables for the configuration and clearing all denominators in the
force-balance equations, the equilibrium condition can be transformed into a
system of homogeneous multivariable polynomial equations. Therefore the
existence of such an equilibrium implies the vanishing of the corresponding
multivariate resultant.

However, in the general non-collinear case, the resulting polynomials and their
resultant are usually too complicated to be useful in explicit form. We therefore
restrict the explicit resultant formulation in this section to collinear
equilibria.
\end{remark}

Theorem~\ref{thm:necessary.condition.collinear.nbp} gives a general resultant necessary condition for collinear equilibria. 
In practice, however, the resultant becomes difficult to compute and
to interpret as \(N\) increases. We therefore spell out the first two nontrivial
cases. The case \(N=3\) recovers the square-root relation obtained in
Theorem~\ref{thm:charged-three-body-equilibrium} from the viewpoint of resultants, while the case \(N=4\) already
illustrates the rapid growth of the algebraic complexity and motivates the
special reductions considered below.

{\bf Example $N=3$.}

In such a case, we have
\begin{eqnarray}
    G_1(x_1,x_2)&=&x_1^2(x_1+x_2)^2\left(\frac{m_1m_2-e_1e_2}{x_1^2}+\frac{m_1m_3-e_1e_3}{(x_1+x_2)^2}\right)
    \nonumber\\
    &=&(\delta_{12}+\delta_{13})x_1^2+2\delta_{12}x_1x_2+\delta_{12}x_2^2,
    \\
    G_2(x_1,x_2)&=&x_1^2x_2^2\left(-\frac{m_1m_2-e_1e_2}{x_1^2}+\frac{m_2m_3-e_2e_3}{x_2^2}\right)
    \nonumber\\
    &=&\delta_{23}x_1^2-\delta_{12}x_2^2.
\end{eqnarray}
Then we obtain
\begin{eqnarray}
    Res(G_1,G_2)
    &=&\left|\begin{matrix}
        \delta_{12}& 2\delta_{12}& \delta_{12}+\delta_{13}& 0\\
        0& \delta_{12}& 2\delta_{12}& \delta_{12}+\delta_{13}\\
        -\delta_{12}& 0& \delta_{23}& 0\\
        0& -\delta_{12}& 0& \delta_{23}
    \end{matrix}\right|
    \nonumber\\
    &=&\delta_{12}^2(\delta_{12}^2+\delta_{13}^2+\delta_{23}^2+2\delta_{12}\delta_{13}+2\delta_{13}\delta_{23}-2\delta_{12}\delta_{23})
    \nonumber\\
    &=&\delta_{12}^2(\delta_{12}+\delta_{13}+\delta_{23}+2\sqrt{\delta_{12}\delta_{23}})(\delta_{12}+\delta_{13}+\delta_{23}-2\sqrt{\delta_{12}\delta_{23}}).
\end{eqnarray}
Therefore, $Res(G_1,G_2)=0$ leads to three subcases:

(i) $\delta_{12}=0$.
In this subcase, $G_1$ and $G_2$ have no nonzero common solutions.
Hence, an equilibrium is impossible.
In the remaining two cases, we suppose $\delta_{12}\ne0$.

(ii)
$\delta_{12}+\delta_{13}+\delta_{23}+2\sqrt{\delta_{12}\delta_{23}}=0$.
In this subcase, $\delta_{12}$ and $\delta_{23}$ have the same sign.
If $\delta_{12}>0$, we must have $\delta_{13}<0$ and 
$$
\sqrt{-\delta_{13}}=\sqrt{\delta_{12}}+\sqrt{\delta_{23}},
$$
which coincides with \eqref{condition.of.e.1}.
If $\delta_{12}<0$, it follows that
$$
\delta_{13}=(\sqrt{-\delta_{12}}-\sqrt{-\delta_{23}})^2,
$$
and therefore $\delta_{13}>0$.
Direct computation shows that $G_1(x_1,x_2)$ and $G_2(x_1,x_2)$ have only solutions of the form $x_1=x\sqrt{-\delta_{12}},x_2=-x\sqrt{-\delta_{23}}$ for any $x\in\mathbb{R}$.
These two variables cannot both be positive.
Hence, there is no equilibrium when $\delta_{12}<0$ in this subcase.

(iii) $\delta_{12}+\delta_{13}+\delta_{23}-2\sqrt{\delta_{12}\delta_{23}}=0$.
By an argument similar to that in subcase (ii), an equilibrium exists if and only if $\delta_{12},\delta_{23}<0,\delta_{13}>0$, and the following holds:
$$
\sqrt{\delta_{13}}=\sqrt{-\delta_{12}}+\sqrt{-\delta_{23}}.
$$
This condition coincides with \eqref{condition.of.e.2}.

\medskip
{\bf Example $N=4$.}

In such a case, we have
\begin{eqnarray}
    G_1(x_1,x_2,x_3)&=&x_1^2(x_1+x_2)^2(x_1+x_2+x_3)^2\left(\frac{m_1m_2-e_1e_2}{x_1^2}+\frac{m_1m_3-e_1e_3}{(x_1+x_2)^2}+\frac{m_1m_4-e_1e_4}{(x_1+x_2+x_3)^2}\right)
    \nonumber\\
    &=&(\delta_{12}+\delta_{13}+\delta_{14})x_1^4
+(4\delta_{12}+2\delta_{13}+2\delta_{14})x_1^3x_2
+(2\delta_{12}+2\delta_{13})x_1^3x_3  \\
&&+(6\delta_{12}+\delta_{13}+\delta_{14})x_1^2x_2^2
+(6\delta_{12}+2\delta_{13})x_1^2x_2x_3
+(\delta_{12}+\delta_{13})x_1^2x_3^2 \\
&&+4\delta_{12}x_1x_2^3
+6\delta_{12}x_1x_2^2x_3
+2\delta_{12}x_1x_2x_3^2
+\delta_{12}x_2^4
+2\delta_{12}x_2^3x_3
+\delta_{12}x_2^2x_3^2,\label{G1}
    \\
    G_2(x_1,x_2,x_3)&=&x_1^2x_2^2(x_2+x_3)^2\left(-\frac{m_1m_2-e_1e_2}{x_1^2}+\frac{m_2m_3-e_2e_3}{x_2^2}+\frac{m_2m_4-e_2e_4}{(x_2+x_3)^2}\right)
    \nonumber\\
    &=&(\delta_{23}+\delta_{24})x_1^2x_2^2+2\delta_{23}x_1^2x_2x_3+\delta_{23}x_1^2x_3^2-2\delta_{12}x_2^4-2\delta_{12}x_2^3x_3-\delta_{12}x_2^2x_3^2,
    \\
    G_3(x_1,x_2,x_3)&=&(x_1+x_2)^2x_2^2x_3^2\left(-\frac{m_1m_3-e_1e_3}{(x_1+x_2)^2}-\frac{m_2m_3-e_2e_3}{x_2^2}+\frac{m_3m_4-e_3e_4}{x_3^2}\right)
    \nonumber\\
    &=&\delta_{34}x_1^2x_2^2-\delta_{23}x_1^2x_3^2+2\delta_{34}x_1x_2^3-2\delta_{23}x_1x_2x_3^2
    +\delta_{34}x_2^4-(\delta_{13}+\delta_{23})x_2^2x_3^2.\label{G3}
\end{eqnarray}

Computing the resultant of the above three ternary quartic polynomials, given by (\ref{G1})-(\ref{G3}), still exceeds the computational capacity of an ordinary personal computer, and the final result is extremely complicated — so much so that if written out in full, it would occupy several pages. In order to present it in the paper, we have made some simplifications: we assume that 
\begin{equation}
    \delta_{14}=\delta_{23}=0.
\end{equation}
Under these constraints, we have
\begin{eqnarray}
    G_1(x_1,x_2,x_3)&=&(x_1+x_2+x_3)^2\tilde{G}_1(x_1,x_2,x_3),
    \nonumber\\
    G_2(x_1,x_2,x_3)&=&x_2^2\tilde{G}_2(x_1,x_2,x_3),
    \nonumber\\
    G_3(x_1,x_2,x_3)&=&x_2^2\tilde{G}_3(x_1,x_2,x_3),\nonumber
\end{eqnarray}
where
\begin{eqnarray}
    \widetilde{G}_1(x_1,x_2,x_3)&=&(\delta_{12}+\delta_{13})x_1^2+2\delta_{12}x_1x_2
    +\delta_{12}x_2^2,
    \label{tilde.G1}\\
    \widetilde{G}_2(x_1,x_2,x_3)&=&\delta_{24}x_1^2-\delta_{12}x_2^2-2\delta_{12}x_2x_3-\delta_{12}x_3^2,
    \label{tilde.G2}\\
    \widetilde{G}_3(x_1,x_2,x_3)&=&\delta_{34}x_1^2+2\delta_{34}x_1x_2+\delta_{34}x_2^2-\delta_{13}x_3^2.\label{tilde.G3}
\end{eqnarray}
Since $(x_1,x_2,x_{3})\in(\mathbb{R}^+)^{3}$, 
by Theorem \ref{thm:resultant-properties},
it is reasonable to compute
$Res(\widetilde{G}_1,\widetilde{G}_2,\widetilde{G}_3)$.
Using Maple, we obtain
\[
\operatorname{Res}(\widetilde G_1,\widetilde G_2,\widetilde G_3)
=
\delta_{12}^{8}\delta_{13}^{8}(\delta_{12}+\delta_{13})^{4}
P_4,
\]
where
\[
\begin{aligned}
P_4={}&
\delta_{12}^{4}+4\delta_{13}\delta_{12}^{3}
-4\delta_{24}\delta_{12}^{3}+4\delta_{34}\delta_{12}^{3}                         
+6\delta_{13}^{2}\delta_{12}^{2}+6\delta_{24}^{2}\delta_{12}^{2}
+6\delta_{34}^{2}\delta_{12}^{2}-4\delta_{13}\delta_{24}\delta_{12}^{2}  
+4\delta_{13}\delta_{34}\delta_{12}^{2}\\
&-4\delta_{24}\delta_{34}\delta_{12}^{2}
+4\delta_{13}^{3}\delta_{12}-4\delta_{24}^{3}\delta_{12}                         +4\delta_{34}^{3}\delta_{12}-4\delta_{13}\delta_{24}^{2}\delta_{12}
-4\delta_{13}\delta_{34}^{2}\delta_{12}+4\delta_{24}\delta_{34}^{2}\delta_{12} 
+4\delta_{13}^{2}\delta_{24}\delta_{12}
\\
&-4\delta_{13}^{2}\delta_{34}\delta_{12}
-4\delta_{24}^{2}\delta_{34}\delta_{12}-40\delta_{13}\delta_{24}\delta_{34}\delta_{12}                      
+\delta_{13}^{4}+\delta_{24}^{4}
+\delta_{34}^{4}+4\delta_{13}\delta_{24}^{3}
-4\delta_{13}\delta_{34}^{3}+4\delta_{24}\delta_{34}^{3}                                        \\
&+6\delta_{13}^{2}\delta_{24}^{2}+6\delta_{13}^{2}\delta_{34}^{2}
+6\delta_{24}^{2}\delta_{34}^{2}-4\delta_{13}\delta_{24}\delta_{34}^{2}          +4\delta_{13}^{3}\delta_{24}-4\delta_{13}^{3}\delta_{34}
+4\delta_{24}^{3}\delta_{34}+4\delta_{13}\delta_{24}^{2}\delta_{34}
-4\delta_{13}^{2}\delta_{24}\delta_{34}
\\
=&\prod_{\epsilon,\eta=\pm1}(\delta_{12}+\delta_{13}+\delta_{24}+\delta_{34}+2\epsilon\sqrt{\delta_{12}\delta_{24}}+2\eta\sqrt{\delta_{13}\delta_{34}}).
\end{aligned}
\]

Therefore, $Res(\widetilde{G}_1,\widetilde{G}_2,\widetilde{G}_3)=0$ leads to three subcases:

(i) $\delta_{12}=0$.
In such a case $\widetilde G_1=\delta_{13}x_1^2$.
Hence, unless further degeneracies occur, there is no positive solution. 
Similarly, if \(\delta_{13}=0\), then $\widetilde G_3=\delta_{34}(x_1+x_2)^2$,
which again gives no positive solution in the non-degenerate case.

(ii) $\delta_{12}+\delta_{13}=0$ and \(\delta_{12}\neq 0\).
In such a case, we have
\[
    \widetilde G_1
    =
    \delta_{12}\bigl((x_1+x_2)^2-x_1^2\bigr)
    =
    \delta_{12}x_2(2x_1+x_2).
\]
Since \(x_1,x_2>0\), this cannot vanish. Therefore the factor
\((\delta_{12}+\delta_{13})^4\) does not produce a positive collinear
equilibrium.

(iii) $P_4=0$ and \(\delta_{12}\neq 0,\delta_{12}+\delta_{13}\ne0\).
In such a case,
if a positive solution of equations \eqref{tilde.G1}-\eqref{tilde.G3} exists, then from
\[
    \delta_{12}(x_1+x_2)^2+\delta_{13}x_1^2=0
\]
we see that \(\delta_{12}\) and \(\delta_{13}\) must have opposite signs.
Moreover,
\[
    \frac{x_1+x_2}{x_1}
    =
    \sqrt{-\frac{\delta_{13}}{\delta_{12}}}
    >1.
\]
The equation  \eqref{tilde.G2} implies that \(\delta_{24}\) and \(\delta_{12}\) have the
same sign, and the equation \eqref{tilde.G3} implies that \(\delta_{34}\) and
\(\delta_{13}\) have the same sign. Hence a positive solution can exist only in
one of the following two sign patterns:
\[
    \delta_{12}>0,\quad \delta_{24}>0,\quad
    \delta_{13}<0,\quad \delta_{34}<0,
\]
or
\[
    \delta_{12}<0,\quad \delta_{24}<0,\quad
    \delta_{13}>0,\quad \delta_{34}>0.
\]

Assume first that
\[
    \delta_{12}>0,\qquad
    \delta_{24}>0,\qquad
    \delta_{13}<0,\qquad
    \delta_{34}<0.
\]
Then the three equations \eqref{tilde.G1}-\eqref{tilde.G3} are equivalent to
\begin{equation}\label{lin.eq.of.xs}
    x_1+x_2
    =
    \frac{\sqrt{-\delta_{13}}}{\sqrt{\delta_{12}}}\,x_1,\quad
    x_2+x_3
    =
    \frac{\sqrt{\delta_{24}}}{\sqrt{\delta_{12}}}\,x_1,\quad
    x_3
    =
    \frac{\sqrt{-\delta_{34}}}{\sqrt{\delta_{12}}}\,x_1.
\end{equation}
Therefore
\[
    x_2
    =
    \frac{\sqrt{-\delta_{13}}-\sqrt{\delta_{12}}}
         {\sqrt{\delta_{12}}}\,x_1.
\]
The condition \(x_2>0\) is equivalent to
\[
    \sqrt{-\delta_{13}}>\sqrt{\delta_{12}}.
\]
Substituting the expressions for \(x_2\) and \(x_3\) into the second equation
in \eqref{lin.eq.of.xs}
gives
\[
    \sqrt{\delta_{24}}
    =
    \sqrt{-\delta_{13}}-\sqrt{\delta_{12}}
    +\sqrt{-\delta_{34}},
\]
or equivalently
\[
    \sqrt{\delta_{12}}+\sqrt{\delta_{24}}
    =
    \sqrt{-\delta_{13}}+\sqrt{-\delta_{34}}.
\]
Thus, in this sign pattern, a positive solution exists if and only if
\[
    \sqrt{\delta_{12}}+\sqrt{\delta_{24}}
    =
    \sqrt{-\delta_{13}}+\sqrt{-\delta_{34}},
    \qquad
    \sqrt{-\delta_{13}}>\sqrt{\delta_{12}}.
\]
In that case, the positive solution is unique up to scaling and is given by
\[
    (x_1,x_2,x_3)
    =
    \lambda
    \left(
        \sqrt{\delta_{12}},\,
        \sqrt{-\delta_{13}}-\sqrt{\delta_{12}},\,
        \sqrt{-\delta_{34}}
    \right),
    \qquad \lambda>0.
\]

Similarly, assume that
\[
    \delta_{12}<0,\qquad
    \delta_{24}<0,\qquad
    \delta_{13}>0,\qquad
    \delta_{34}>0.
\]
Then the three equations \eqref{tilde.G1}-\eqref{tilde.G3} imply
\[
    x_1+x_2
    =
    \frac{\sqrt{\delta_{13}}}{\sqrt{-\delta_{12}}}\,x_1,\quad
    x_2+x_3
    =
    \frac{\sqrt{-\delta_{24}}}{\sqrt{-\delta_{12}}}\,x_1,\quad
    x_3
    =
    \frac{\sqrt{\delta_{34}}}{\sqrt{-\delta_{12}}}\,x_1.
\]
Therefore a positive solution exists if and only if
\[
    \sqrt{-\delta_{12}}+\sqrt{-\delta_{24}}
    =
    \sqrt{\delta_{13}}+\sqrt{\delta_{34}},
    \qquad
    \sqrt{\delta_{13}}>\sqrt{-\delta_{12}}.
\]
In this case, the positive solution is unique up to scaling and is given by
\[
    (x_1,x_2,x_3)
    =
    \lambda
    \left(
        \sqrt{-\delta_{12}},\,
        \sqrt{\delta_{13}}-\sqrt{-\delta_{12}},\,
        \sqrt{\delta_{34}}
    \right),
    \qquad \lambda>0.
\]
We summarize this case of collinear four-body equilibrium as follows.
\begin{proposition}[A collinear four-body reduction]
Consider the charged four-body problem in the collinear order $a_1<a_2<a_3<a_4$,
and set $x_1=a_2-a_1,\;x_2=a_3-a_2,\;x_3=a_4-a_3$.
Assume that
\[
\delta_{14}=\delta_{23}=0.
\]
Then a non-collision collinear equilibrium exists if and only if one of the
following two alternatives holds.

\begin{itemize}
\item[(i)] $\delta_{12}>0,\;\delta_{24}>0,\;\delta_{13}<0,\;\delta_{34}<0$
and
\[
    \sqrt{\delta_{12}}+\sqrt{\delta_{24}}
    =
    \sqrt{-\delta_{13}}+\sqrt{-\delta_{34}},
    \qquad
    \sqrt{-\delta_{13}}>\sqrt{\delta_{12}}.
\]
In this case the positive solution is unique up to scaling and is given by
\[
    (x_1,x_2,x_3)
    =
    \lambda
    \left(
        \sqrt{\delta_{12}},
        \sqrt{-\delta_{13}}-\sqrt{\delta_{12}},
        \sqrt{-\delta_{34}}
    \right),
    \qquad
    \lambda>0.
\]

\item[(ii)] $\delta_{12}<0,\;\delta_{24}<0,\;\delta_{13}>0,\;\delta_{34}>0$
and
\[
    \sqrt{-\delta_{12}}+\sqrt{-\delta_{24}}
    =
    \sqrt{\delta_{13}}+\sqrt{\delta_{34}},
    \qquad
    \sqrt{\delta_{13}}>\sqrt{-\delta_{12}}.
\]
In this case the positive solution is unique up to scaling and is given by
\[
    (x_1,x_2,x_3)
    =
    \lambda
    \left(
        \sqrt{-\delta_{12}},
        \sqrt{\delta_{13}}-\sqrt{-\delta_{12}},
        \sqrt{\delta_{34}}
    \right),
    \qquad
    \lambda>0.
\]
\end{itemize}
\end{proposition}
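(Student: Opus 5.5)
The plan is to verify the proposition directly from the factored force-balance equations, most of which the discussion preceding the statement already sets up. With $\delta_{14}=\delta_{23}=0$, the collinear equilibrium conditions $G_1=G_2=G_3=0$ factor as $G_1=(x_1+x_2+x_3)^2\widetilde G_1$, $G_2=x_2^2\widetilde G_2$ and $G_3=x_2^2\widetilde G_3$. For positive $x_i$ the prefactors never vanish, so a non-collision collinear equilibrium with ordering $a_1<a_2<a_3<a_4$ is equivalent to a positive solution of $\widetilde G_1=\widetilde G_2=\widetilde G_3=0$. The fourth force equation is automatic, since $F_1+\cdots+F_4\equiv0$. Rewritten, these are
\[
\delta_{12}(x_1+x_2)^2=-\delta_{13}x_1^2,\qquad \delta_{24}x_1^2=\delta_{12}(x_2+x_3)^2,\qquad \delta_{34}(x_1+x_2)^2=\delta_{13}x_3^2 .
\]

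\textbf{Step 1: signs.} I first rule out degenerate values. If $\delta_{12}=0$, the first equation forces $\delta_{13}=0$, since $x_1\neq0$. The second then forces $\delta_{24}=0$, and the third forces $\delta_{34}=0$. Together with $\delta_{14}=\delta_{23}=0$, every coefficient vanishes, contradicting the non-trivial assumption. So $\delta_{12}\neq0$. The first equation then gives $\delta_{13}\neq0$ with sign opposite to $\delta_{12}$. The second gives $\delta_{24}$ with the same sign as $\delta_{12}$, and $\delta_{24}\ne0$. The third gives $\delta_{34}$ with the same sign as $\delta_{13}$, and $\delta_{34}\neq0$. Hence only the two sign patterns (i) and (ii) of the statement can occur.

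\textbf{Step 2: linearize.} Within each pattern every ratio under a square root is positive and all $x_i>0$, so I take positive square roots. In pattern (i) this gives the linear system \eqref{lin.eq.of.xs}. Solving it yields $x_2=(\sqrt{-\delta_{13}}-\sqrt{\delta_{12}})x_1/\sqrt{\delta_{12}}$ and $x_3=\sqrt{-\delta_{34}}\,x_1/\sqrt{\delta_{12}}$. The remaining equation becomes the compatibility condition $\sqrt{\delta_{12}}+\sqrt{\delta_{24}}=\sqrt{-\delta_{13}}+\sqrt{-\delta_{34}}$. Positivity of $x_2$ is exactly $\sqrt{-\delta_{13}}>\sqrt{\delta_{12}}$, and positivity of $x_1,x_3$ is automatic. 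This proves necessity and gives the stated one-parameter family with $x_1=\lambda\sqrt{\delta_{12}}$.

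\textbf{Step 3: converse and pattern (ii).} Conversely, if the conditions of (i) hold, I define $x$ by the formula in the statement. It is positive, it satisfies the linear system, and squaring shows it satisfies the $\widetilde G_i$ equations and hence the $G_i$. It is therefore a non-collision equilibrium, and uniqueness up to scaling follows because the linear system determines $x_2,x_3$ from $x_1$. Pattern (ii) is identical after replacing each $\delta_{ij}$ by $-\delta_{ij}$, because the equations are invariant under that global sign change.

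The main point needing care is that the resultant factorization is not needed at all. The only subtle step is the reduction to positive square roots, which is legitimate precisely because the sign analysis of Step 1 comes first. I would also note briefly that the excluded factor $(\delta_{12}+\delta_{13})$ is already covered by Step 1: it would give $\widetilde G_1=\delta_{12}x_2(2x_1+x_2)\neq0$.
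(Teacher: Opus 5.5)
Your proof is correct, and its core coincides with the paper's subcase (iii). That core is the sign analysis, the passage to positive square roots giving \eqref{lin.eq.of.xs}, the compatibility condition, and the positivity of $x_2$. The difference is in how the argument is organized. The paper first computes with Maple
\[
\operatorname{Res}(\widetilde G_1,\widetilde G_2,\widetilde G_3)=\delta_{12}^{8}\delta_{13}^{8}(\delta_{12}+\delta_{13})^{4}P_4
\]
and then treats each factor separately. You skip the resultant and do the sign analysis directly on the three quadratic relations.

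Your route buys two things:
\begin{itemize}
\item It is more elementary and self-contained, since it does not depend on a computer-algebra factorization that cannot be checked by hand.
\item It handles the degenerate case more cleanly. Where the paper only says that $\delta_{12}=0$ gives no positive solution ``unless further degeneracies occur'', you show that the degeneracy cascades to $\delta_{ij}=0$ for every pair, which the standing non-trivial assumption forbids.
\end{itemize}

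The paper's route buys the link to its general resultant criterion, Theorem~\ref{thm:necessary.condition.collinear.nbp}. The compatibility conditions of (i) and (ii) appear as the factors of $P_4$ with $(\epsilon,\eta)=(1,-1)$ and $(-1,1)$, respectively. This illustrates how positivity selects particular real branches of the resultant locus.

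One small correction to your closing remark. The factor $\delta_{12}+\delta_{13}=0$ is not excluded by Step 1, because it is compatible with the opposite-sign requirement derived there. It is excluded in Step 2 by the strict inequality $\sqrt{-\delta_{13}}>\sqrt{\delta_{12}}$ (resp.\ $\sqrt{\delta_{13}}>\sqrt{-\delta_{12}}$), i.e.\ by $x_2>0$. That is exactly what your identity $\widetilde G_1=\delta_{12}x_2(2x_1+x_2)$ expresses.
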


\subsection{The inverse problem of collinear equilibrium}

In \cite{Mou}, Moulton also
considered the inverse problem: given a collinear configuration, find the mass vectors, if any, for which
it is a central configuration. This becomes a system of linear algebraic equations and 
the possible mass vectors are determined by the Pfaffians of the associated matrices.

Inspired by Moulton's inverse problem for collinear central configurations, we
now consider the corresponding inverse problem for collinear equilibria in the
charged \(N\)-body problem. Namely, for a prescribed collinear configuration, we
ask whether one can choose masses and charges so that the given configuration
becomes a static equilibrium.

Recall $\lambda_i=\frac{e_i}{m_i},\; 1\leq i\leq N$,
and let
\[
    M=\operatorname{diag}(m_1,\ldots,m_N),
    \qquad
    \Lambda=\operatorname{diag}(\lambda_1,\ldots,\lambda_N).
\]
For a fixed collinear configuration $a=(a_1,\ldots,a_N)$
with distinct coordinates $a_i\neq a_j$ for $i\ne j$,
define the skew-symmetric matrix \(A=(A_{ij})\) by
\[
    A_{ij}=
    \begin{cases}
    \dfrac{1}{|a_i-a_j|^2}, & i<j,\\[1.2ex]
    -\dfrac{1}{|a_i-a_j|^2}, & i>j,\\[1.2ex]
    0, & i=j,
    \end{cases}
\]
that is, 
\begin{equation}\label{A}
A = \begin{pmatrix}
0 & \frac{1}{|a_1-a_2|^2} & \frac{1}{|a_1-a_3|^2} & \cdots & \frac{1}{|a_1-a_N|^2} \\
-\frac{1}{|a_1-a_2|^2} & 0 & \frac{1}{|a_2-a_3|^2} & \cdots & \frac{1}{|a_2-a_N|^2} \\
-\frac{1}{|a_1-a_3|^2} & -\frac{1}{|a_2-a_3|^2} & 0 & \ddots & \vdots \\
\vdots & \vdots & \vdots & \ddots & \frac{1}{|a_{N-1}-a_N|^2} \\
-\frac{1}{|a_1-a_N|^2} & -\frac{1}{|a_2-a_N|^2} & \cdots & -\frac{1}{|a_{N-1}-a_N|^2} & 0
\end{pmatrix}.
\end{equation}
Then the equilibrium equation \eqref{eq.of.e} can be written as
\begin{equation*}
    M(\Lambda A\Lambda-A)
    \begin{pmatrix}
    m_1\\
    \vdots\\
    m_N
    \end{pmatrix}
    =0.
\end{equation*}
Since \(M\) is invertible, this is equivalent to
\begin{equation}\label{eq.of.collinear.equilibrium}
    (\Lambda A\Lambda-A)m=0,
\end{equation}
where $m=(m_1,\ldots,m_N)^T$.
For later use, define
\[
F(\lambda;m)=(\Lambda A\Lambda-A)m,
\]
and
\begin{equation}\label{F}
    F_i(\lambda;m)
    =
    \sum_{j\neq i}A_{ij}m_j(\lambda_i\lambda_j-1),
\end{equation}
for $1\leq i\leq N$.
Then the equation \eqref{eq.of.collinear.equilibrium} is equivalent to
\[
    F(\lambda;m)=0,
\]
where $\lambda=(\lambda_1,\ldots,\lambda_N)$.
Let
\begin{equation}\label{J}
    J(a,m)
    =
    \left.
    \frac{\partial F}{\partial \lambda}
    \right|_{\lambda=(1,\ldots,1)}
\end{equation}
be the Jacobian matrix of \(F\) at the trivial solution $\lambda_1=\cdots=\lambda_N=1$.
A direct calculation gives
\begin{equation}\label{J_il}
    J_{i\ell}(a,m)
    =
    \begin{cases}
    \displaystyle
    \sum_{j\neq i}A_{ij}m_j, & \ell=i,\\[2ex]
    A_{i\ell}m_\ell, & \ell\neq i.
    \end{cases}
\end{equation}

The following criterion is
then a direct rank condition for the inverse problem.

\begin{lemma}
\label{lem:inverse-rank-condition}
Fix positive masses $m=(m_1,\ldots,m_N)\in(\mathbb R^+)^N$
and a collinear configuration
\[
    a=(a_1,\ldots,a_N),
\]
with distinct coordinates $a_i\neq a_j$ for $i\ne j$.
If
\begin{equation}\label{rank.condition}
    \operatorname{rank}J(a,m)=N-1,
\end{equation}
then there exist real charges $e_1,\ldots,e_N\in\mathbb R$
such that the prescribed collinear configuration \(a\) is an equilibrium of the
charged \(N\)-body problem.
\end{lemma}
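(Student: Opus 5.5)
The plan is to produce a \emph{non-trivial} solution of \eqref{eq.of.collinear.equilibrium} by bifurcating from the trivial solution \(\lambda=(1,\ldots,1)\) with the implicit function theorem. The trivial solution already gives an equilibrium with all \(\delta_{ij}=0\), so the content of the lemma is the non-trivial case. Once \(\lambda\) is found, the charges are \(e_i=\lambda_i m_i\).

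\textbf{Step 1: a conservation identity.} First I will check that \(F(\lambda;m)\) always lies in the hyperplane \(H=\{w\in\mathbb R^N:\ m^Tw=0\}\). Indeed,
\[
    m^TF(\lambda;m)=(\Lambda m)^TA(\Lambda m)-m^TAm=0,
\]
because \(A\) is skew-symmetric. This is the algebraic form of the fact that the total force vanishes. Thus \(F(\cdot;m)\) is a smooth (polynomial) map \(\mathbb R^N\to H\), and \(\dim H=N-1\). Differentiating the identity shows that the image of \(J(a,m)\) is also contained in \(H\).

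\textbf{Step 2: a curve of solutions.} By the rank condition \eqref{rank.condition}, \(J(a,m)\) maps \(\mathbb R^N\) onto \(H\). Hence \(F(\cdot;m)\colon\mathbb R^N\to H\) is a submersion at \(\lambda^0=(1,\ldots,1)\). By the implicit function theorem (or the regular value theorem applied to \(F\) viewed as a map into \(H\cong\mathbb R^{N-1}\)), the set \(\{F=0\}\) near \(\lambda^0\) is a smooth one-dimensional curve
\[
    \lambda(t)=\lambda^0+tv+O(t^2),
\]
where \(v\neq0\) spans \(\ker J(a,m)\). For small \(t\neq0\) we get \(\lambda(t)\neq\lambda^0\) and \(F(\lambda(t);m)=0\). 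Then \(e_i=\lambda_i(t)m_i\) makes \(a\) an equilibrium.

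\textbf{Step 3: non-triviality.} This is the step that needs care, and I expect it to be the main obstacle. We must show that at least two of the coefficients \(1-\lambda_i\lambda_j\) are nonzero. I will argue in two parts.

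\begin{enumerate}
    \item Suppose all \(\lambda_i\lambda_j=1\). Since \(N\ge3\), the products \(\lambda_i\lambda_j=\lambda_i\lambda_k=\lambda_j\lambda_k=1\) give \(\lambda_i^2=1\) for every \(i\), and all \(\lambda_i\) share one sign. Hence \(\lambda=\pm\lambda^0\). This is excluded for small \(t\neq0\), because \(\lambda(t)\) is close to \(\lambda^0\) but different from it. So at least one coefficient \(1-\lambda_i\lambda_j\) is nonzero.
    \item Suppose exactly one coefficient is nonzero, say for the pair \((i,j)\). Then the \(i\)-th equation reads
    \[
        F_i=A_{ij}m_j(\lambda_i\lambda_j-1)=0.
    \]
    Since \(A_{ij}\neq0\) and \(m_j>0\), this forces \(\lambda_i\lambda_j=1\), a contradiction.
\end{enumerate}

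Therefore at least two \(\lambda_{ij}\) are nonzero, and the equilibrium is non-trivial.
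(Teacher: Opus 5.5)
Your proposal is correct and follows the paper's proof. It uses the same ingredients: the identity $m^TF\equiv0$, the implicit function theorem at $\lambda_0=(1,\ldots,1)$, and the observation that $\lambda_i\lambda_j=1$ for all $i\ne j$ cannot hold near $\lambda_0$ except at $\lambda_0$; treating $F$ as a map into $H=m^\perp$ is equivalent to the paper's deletion of one component $F_k$ followed by its recovery from $\sum_i m_iF_i=0$. Your Step 3(2), which rules out exactly one nonzero $\lambda_{ij}$, is a small extra the paper leaves implicit (it only shows $\Lambda A\Lambda\neq A$), and it correctly upgrades the conclusion to the paper's Non-trivial Assumption.
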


\begin{proof}
The equation \eqref{eq.of.collinear.equilibrium}
is equivalent to
\[
    F(\lambda;m)=0.
\]
The point $\lambda_0=(1,\ldots,1)$
is always a solution. It corresponds to the trivial case $\Lambda A\Lambda=A$,
or equivalently, $\delta_{ij}=m_im_j-e_ie_j=0$ for all $i\ne j$.

We show that, under the rank assumption \eqref{rank.condition}, this trivial solution is accompanied
by nearby non-trivial solutions. Since \(A\) is skew-symmetric, one has
\[
    m^T F(\lambda;m)=0
\]
for all \(\lambda\). Hence the \(N\) equations \(F_i=0\) have at most
\(N-1\) independent equations near \(\lambda_0\).

By rank assumption \eqref{rank.condition},
after deleting one suitable component of \(F\), say \(F_k\), the
remaining \(N-1\) equations have full rank at \(\lambda_0\). By the implicit
function theorem, their common zero set near \(\lambda_0\) contains a
one-dimensional curve of real solutions passing through \(\lambda_0\).

Moreover, every point on this curve also satisfies the deleted equation
\(F_k=0\). Indeed, since $\sum_{i=1}^N m_iF_i(\lambda;m)=0$,
if \(F_i(\lambda;m)=0\) for all \(i\neq k\), then, since \(m_k>0\), it follows
that
\[
    F_k(\lambda;m)=0.
\]
Thus the original equation $F(\lambda;m)=0$
has real solutions arbitrarily close to \(\lambda_0\) and different from
\(\lambda_0\).

For such a nearby solution, we have
\[
    \Lambda A\Lambda\neq A.
\]
Indeed, if \(\Lambda A\Lambda=A\), then, since \(A_{ij}\neq0\),
we have $\lambda_i\lambda_j=1$ for all $i\neq j$.
For \(N\geq3\), this implies $\lambda_1=\cdots=\lambda_N=1$
in a sufficiently small neighbourhood of \(\lambda_0\). Hence every nearby
solution different from \(\lambda_0\) is non-trivial.

Finally, define $e_i=m_i\lambda_i,\; 1\leq i\leq N$.
Then $\delta_{ij}=m_im_j-e_ie_j
    =
    m_im_j(1-\lambda_i\lambda_j)$,
and the equation \eqref{eq.of.collinear.equilibrium}
is precisely the equilibrium equation for the charged \(N\)-body problem.
Therefore, the prescribed collinear configuration is realized as a non-trivial
equilibrium.
\end{proof}

\begin{remark}
The rank condition in Lemma~\ref{lem:inverse-rank-condition} is only a
sufficient condition. If it fails, non-trivial solutions of \eqref{eq.of.collinear.equilibrium}
may still exist. 
The lemma only provides a simple condition which guarantees
their existence near the trivial solution $\lambda_1=\cdots=\lambda_N=1$.
\end{remark}

The preceding lemma shows that, for fixed positive masses, the prescribed
collinear configuration can be realized once the matrix \(J(a,m)\) has the
maximal possible rank \(N-1\). Thus it remains to see whether this rank condition
can be fulfilled by choosing the masses appropriately.

The following theorem gives an affirmative answer. It shows that every
collinear configuration of distinct points can be realized as a non-trivial
equilibrium of a suitable charged \(N\)-body problem. Moreover, for the fixed
configuration, the exceptional choices of masses are contained in a proper
algebraic subset of \((\mathbb R^+)^N\).

\begin{theorem}[Realization of prescribed collinear configurations]
\label{thm:realization-collinear}
For $N\ge3$, let
\[
    a_1<a_2<\cdots<a_N,
\]
be any prescribed collinear configuration of \(N\) distinct points. Then one
can choose positive masses $(m_1,\ldots,m_N)\in(\mathbb{R}^+)^N$
and real charges $(e_1,\ldots,e_N)\in(\mathbb R)^N$
such that the prescribed configuration is a non-trivial equilibrium of the
charged \(N\)-body problem.
\end{theorem}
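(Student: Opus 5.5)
The plan is to deduce the theorem from Lemma~\ref{lem:inverse-rank-condition}. It suffices to exhibit, for the fixed configuration $a_1<\cdots<a_N$, one positive mass vector $m$ with $\operatorname{rank}J(a,m)=N-1$. By \eqref{J_il} we can write
\[
J(a,m)=\operatorname{diag}(Am)+A\,\operatorname{diag}(m).
\]
Since $A$ is skew-symmetric, $m^TJ(a,m)=0$, so the rank is always at most $N-1$. The task is therefore to find a nonvanishing $(N-1)\times(N-1)$ minor.

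The key observation is that the entries of $J(a,m)$ are linear in $m$. Consequently the minor $P(m)$ obtained by deleting the first row and first column of $J(a,m)$ is a polynomial in $(m_1,\ldots,m_N)$. If $P$ is not the zero polynomial, then it cannot vanish identically on the nonempty open set $(\mathbb R^+)^N$. Hence $P(m)\neq0$ for all positive $m$ outside a proper algebraic subset, which is the genericity statement mentioned before the theorem.

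To show $P\not\equiv0$, I would evaluate it at the non-physical point $m=(1,0,\ldots,0)$; this is allowed because $P$ is just a polynomial. At this point $(Am)_i=A_{i1}=-1/|a_i-a_1|^2$ for $i\ge2$. Moreover, $A\,\operatorname{diag}(m)$ has only its first column nonzero. After deleting row $1$ and column $1$, the remaining block is diagonal:
\[
\operatorname{diag}\bigl(A_{21},\ldots,A_{N1}\bigr),
\qquad
P(1,0,\ldots,0)=\prod_{i=2}^N\Bigl(-\frac{1}{|a_i-a_1|^2}\Bigr)\neq0 .
\]
Thus $P\not\equiv0$, and we may pick a positive $m$ with $\operatorname{rank}J(a,m)=N-1$.

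For such an $m$, Lemma~\ref{lem:inverse-rank-condition} yields $\lambda$ near $(1,\ldots,1)$ with $\Lambda A\Lambda\neq A$ and $F(\lambda;m)=0$. Setting $e_i=m_i\lambda_i$ makes $a$ an equilibrium. It remains to check that this equilibrium satisfies the non-trivial assumption, that is, at least two $\delta_{ij}$ are nonzero. Suppose only one were nonzero, say $\delta_{ij}$. Then the force equation at particle $i$ reduces to $A_{ij}m_j(\lambda_i\lambda_j-1)=0$, which forces $\delta_{ij}=0$, a contradiction. So at least two $\delta_{ij}$ are nonzero.

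The main obstacle is only the rank verification. The degenerate choice $m=e_1$, used via the polynomial-density argument, is the step I would check carefully. The restriction $N\ge3$ enters only through the lemma's argument that nearby solutions are non-trivial.
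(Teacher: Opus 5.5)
Your proposal is correct and follows the paper's proof. Both reduce to the rank condition of Lemma~\ref{lem:inverse-rank-condition} and show that a principal $(N-1)\times(N-1)$ minor of $J(a,m)$ is a nonzero polynomial in $m$ by isolating a single mass. The paper deletes the last row and column and reads off the coefficient of $m_N^{N-1}$, which by homogeneity is the same as your evaluation at a unit vector, mirrored to $m=(0,\ldots,0,1)$. Your extra check that exactly one nonzero $\delta_{ij}$ is impossible is a small but worthwhile addition, since the paper only records $\Lambda A\Lambda\neq A$.
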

\begin{proof}
Let \(A=(A_{ij})\) be the skew-symmetric matrix associated with the prescribed
configuration, as defined by \eqref{A}. Recall that the equilibrium equation can be written as
\[
    F(\lambda;m)=0.
\]
where $F(\lambda;m)$ is given by \eqref{F}.
We now show that, for a generic choice of positive masses, this trivial
solution is contained in a non-trivial local branch of solutions. 
As a similar argument in the proof of Lemma \ref{lem:inverse-rank-condition},
one has
\[
    m^T F(\lambda;m)=0
\]
for all \(\lambda\). Hence
\[
    \operatorname{rank}J(a,m)\leq N-1,
\]
where $J(a,m)$ is given by \eqref{J}.
It remains to see that the rank is equal to \(N-1\) for a generic choice of
\(m\). 

Consider the \((N-1)\times (N-1)\) minor obtained by deleting the last
row and the last column:
\[
    D_a(m)
    =
    \det\left(J_{i\ell}\right)_{1\leq i,\ell\leq N-1},
\]
where $J_{il}$ is given by \eqref{J_il}.
This is a polynomial in \(m_1,\ldots,m_N\). As a polynomial in \(m_N\), its
highest-degree term comes from the product of the diagonal terms
\[
    A_{1N}m_N,\ A_{2N}m_N,\ \ldots,\ A_{N-1,N}m_N.
\]
Therefore the coefficient of \(m_N^{N-1}\) in \(D_a(m)\) is
\[
    \prod_{i=1}^{N-1} A_{iN}
    =
    \prod_{i=1}^{N-1}\frac{1}{|a_i-a_N|^2},
\]
which is nonzero because the points \(a_1,\ldots,a_N\) are distinct. Hence
\(D_a(m)\) is not the zero polynomial.

Let
\[
    \Sigma_a
    =
    \{m\in(\mathbb R^+)^N\mid D_a(m)=0\}.
\]
Then \(\Sigma_a\) is contained in a proper algebraic subset of
\((\mathbb R^+)^N\). For every $m\in(\mathbb R^+)^N\setminus\Sigma_a$,
we have
\[
    \operatorname{rank}J(a,m)=N-1.
\]

By Lemma~\ref{lem:inverse-rank-condition}, the rank condition implies that
the equation \eqref{eq.of.collinear.equilibrium}
has non-trivial real solutions arbitrarily close to $\lambda_1=\cdots=\lambda_N=1$.
Choose one such solution
\[
    \lambda=(\lambda_1,\ldots,\lambda_N).
\]
Then, by setting $e_i=m_i\lambda_i,\; 1\leq i\leq N$,
we obtain real charges for which the prescribed collinear configuration
$(a_1,a_2,\cdots,a_N)$
is an equilibrium of the charged \(N\)-body problem.

Finally, the solution is non-trivial because $\Lambda A\Lambda\neq A$.
This completes the proof.
\end{proof}

\subsection{Special Case: \( N = 4 \)}

We now apply the rank criterion to the case \(N=4\). Let
\[
    a_1<a_2<a_3<a_4
\]
be a prescribed collinear configuration. 
The skew-symmetric matrix \(A\) is given by
\[
    A_{ij}=\frac{1}{(a_j-a_i)^2},
    \qquad
    A_{ji}=-\frac{1}{(a_j-a_i)^2}.
\]
for $1\le i<j\le4$.

We consider the Jacobian matrix of \(F\) with respect to \(\lambda\) at
\(\lambda_0=(1,1,1,1)\):
\[
    J(a,m)
    =
    \left.
    \frac{\partial F}{\partial\lambda}
    \right|_{\lambda=\lambda_0}.
\]
A direct computation gives
{\footnotesize
\begin{eqnarray*}
&&J(a,m)\\
&&=
\begin{pmatrix}
 A_{12}m_2+A_{13}m_3+A_{14}m_4
 & A_{12}m_2
 & A_{13}m_3
 & A_{14}m_4
\\[0.6ex]
 -A_{12}m_1
 & -A_{12}m_1+A_{23}m_3+A_{24}m_4
 & A_{23}m_3
 & A_{24}m_4
\\[0.6ex]
 -A_{13}m_1
 & -A_{23}m_2
 & -A_{13}m_1-A_{23}m_2+A_{34}m_4
 & A_{34}m_4
\\[0.6ex]
 -A_{14}m_1
 & -A_{24}m_2
 & -A_{34}m_3
 & -A_{14}m_1-A_{24}m_2-A_{34}m_3
\end{pmatrix}.
\end{eqnarray*}
}

The following proposition is the \(N=4\) version of the local rank criterion.

\begin{proposition}
\label{prop:N4-rank-criterion}
Let $a_1<a_2<a_3<a_4$
be a prescribed collinear configuration. If $m=(m_1,m_2,m_3,m_4)\in(\mathbb R^+)^4$,
define
\begin{eqnarray}\label{D}
\mathcal D(a,m)&=&
-\frac{m_1^2}
{(a_1-a_2)^2(a_1-a_3)^2(a_1-a_4)^2}
+\frac{m_2^2}
{(a_2-a_1)^2(a_2-a_3)^2(a_2-a_4)^2}
\nonumber\\
&&-\frac{m_3^2}
{(a_3-a_1)^2(a_3-a_2)^2(a_3-a_4)^2}
+\frac{m_4^2}
{(a_4-a_1)^2(a_4-a_2)^2(a_4-a_3)^2}.
\end{eqnarray}

If $\mathcal D(a,m)\neq 0$,
then there exist real charges $(e_1,e_2,e_3,e_4)\in\mathbb R^4$
such that the prescribed collinear configuration $(a_1,a_2,a_3,a_4)$
is a non-trivial equilibrium of the charged four-body problem.

In particular, for every prescribed collinear configuration of four distinct
points, one can choose positive masses and real charges so that the
configuration is realized as a non-trivial equilibrium. More precisely, for
the fixed configuration \(a\), all positive masses outside the proper algebraic
set
\[
    \{m\in(\mathbb R^+)^4\mid \mathcal D(a,m)=0\}
\]
satisfy the above sufficient condition.
\end{proposition}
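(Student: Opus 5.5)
The plan is to reduce Proposition~\ref{prop:N4-rank-criterion} to Lemma~\ref{lem:inverse-rank-condition}. It suffices to show that $\mathcal D(a,m)\neq0$ forces $\operatorname{rank}J(a,m)=3$. The realization statement then follows exactly as in Theorem~\ref{thm:realization-collinear}: take $e_i=m_i\lambda_i$ for a nearby non-trivial solution $\lambda$ of \eqref{eq.of.collinear.equilibrium}.

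\textbf{Structure of $J$.} Using \eqref{J_il}, I would write
\[
J(a,m)=\operatorname{diag}(Am)+A\,\operatorname{diag}(m).
\]
Since $m^TF(\lambda;m)\equiv0$, differentiating in $\lambda$ at $\lambda_0$ gives $m^TJ=0$. So $\operatorname{rank}J\le3$ and $\det J=0$. For a $4\times4$ matrix of rank at most $3$, the adjugate satisfies $J\operatorname{adj}(J)=\operatorname{adj}(J)J=0$. Hence every column of $\operatorname{adj}(J)$ lies in the right kernel, and every row lies in the left kernel, which is spanned by $m$ whenever the rank is $3$. Consequently, whenever the rank equals $3$, the cofactor in position $(i,\ell)$ has the form $c\,u_\ell m_i$ up to the cofactor sign, for a right-kernel vector $u$. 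Conversely, the rank is exactly $3$ if and only if at least one $3\times3$ cofactor is nonzero. Which indices carry $u$ and which carry $m$ depends on transposition conventions that I would fix carefully.

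\textbf{Key computation.} I would compute one convenient cofactor explicitly, for instance the minor $D_a(m)$ obtained by deleting the last row and column, as in the proof of Theorem~\ref{thm:realization-collinear}. This is a cubic polynomial in $m$ whose entries are built from the $A_{ij}=(a_j-a_i)^{-2}$. The expected outcome is a factorization
\[
D_a(m)=\kappa(a)\,m_4\,\Pi(a)\,\mathcal D(a,m),
\]
with $\kappa(a)\neq0$ and $\Pi(a)$ a product of squared differences. This is consistent with the adjugate picture: the factor $m_4$ is the left-kernel component, and $\mathcal D$ is the quadratic scalar, matching the degree count $3=1+2$.

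To verify the identity, I would expand in powers of $m_4$, using the leading coefficient $A_{14}A_{24}A_{34}$ already identified in Theorem~\ref{thm:realization-collinear}. I would then check that the $m_4^0$ coefficient vanishes. It equals the determinant of the $3\times3$ block built only from $m_1,m_2,m_3$, which is again of "weighted Laplacian with skew weights" type and has left kernel $(m_1,m_2,m_3)$. The remaining quadratic would then be matched against $\mathcal D$ using the partial-fraction identities for $\prod_{j\ne i}(a_i-a_j)^{-2}$. If this particular minor turns out to factor less cleanly, I would instead use the adjugate proportionality to choose whichever cofactor is cleanest, since all cofactors share the same scalar.

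\textbf{Conclusion and main obstacle.} Since $m_4>0$ and the other factors are nonzero for distinct $a_i$, $\mathcal D(a,m)\neq0$ gives $D_a(m)\neq0$. Hence $\operatorname{rank}J=3$, and Lemma~\ref{lem:inverse-rank-condition} applies. The set $\{\mathcal D=0\}$ is a proper algebraic subset of $(\mathbb R^+)^4$, because $\mathcal D$ contains the term $m_4^2\prod_{j<4}(a_4-a_j)^{-2}$ and so is not identically zero; for example, it is nonzero for $m_4$ large. The main obstacle is the explicit symbolic factorization of the cubic minor into $m_4$ times $\mathcal D$, which is a sizable but mechanical computation. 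I expect the needed cancellations to come from the Pfaffian-type identity $A_{12}A_{34}-A_{13}A_{24}+A_{14}A_{23}$ of the $4\times4$ skew matrix $A$.
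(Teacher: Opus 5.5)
Your reduction to the rank lemma, the identity $m^TJ=0$, and the adjugate bookkeeping are all fine. The gap is in the key computation: the minor $D_a(m)$ is not of the form $\kappa(a)\,m_4\,\Pi(a)\,\mathcal D(a,m)$, and no other single cofactor is a nonvanishing multiple of $\mathcal D$ either.

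Take $a=(0,1,2,3)$ and $m=(1,1,1,1)$. Then $\mathcal D(a,m)=-\tfrac1{36}+\tfrac14-\tfrac14+\tfrac1{36}=0$, yet
\[
D_a(m)=\det\begin{pmatrix}49/36&1&1/4\\-1&1/4&1\\-1/4&-1&-1/4\end{pmatrix}=\tfrac{25}{24}\neq0 .
\]
The other principal $3\times3$ minors of $J$ are $-\tfrac{25}{24}$, $\tfrac{125}{72}$ and $-\tfrac{125}{72}$. Since $\operatorname{adj}J=c\,u\,\mathbf 1^T$ here, every entry of $\operatorname{adj}J$ is nonzero while $\mathcal D=0$.

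Symbolically, $D_a(m)=m_4\,q(m)$, where $q$ has the same $m_4^2$-coefficient $A_{14}A_{24}A_{34}$ as $\mathcal D$. However, $q$ also has an $m_1m_4$-coefficient $-A_{14}(A_{13}A_{24}+A_{12}A_{34})\neq0$, whereas $\mathcal D$ is a diagonal quadratic form in $m$.

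Your heuristic breaks for a specific reason. In $\operatorname{adj}J=c\,u\,m^T$, the quadratic factor of the $(i,\ell)$ entry is $c\,u_i$, which contains a component of the right-kernel vector $u$. It is not a scalar common to all cofactors, so switching to a ``cleaner'' cofactor cannot rescue the argument.

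What is proportional to $\mathcal D$ is a weighted trace. Let $C_{ii}(J)$ denote the principal minor of $J$ obtained by deleting row and column $i$. Then $\sum_i C_{ii}(J)/m_i=c\,\mathbf 1^Tu$, which is $1/(m_1m_2m_3m_4)$ times the sum of the principal $3\times3$ minors of $MJ$.

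To evaluate that sum, write $MJ=\operatorname{diag}(S\mathbf 1)+S$ with $S$ skew and $S_{ij}=c_{ij}=m_im_jA_{ij}$ for $i<j$, and set $d=S\mathbf 1$. Each principal minor on $\{i,j,k\}$ equals $d_id_jd_k+d_iS_{jk}^2+d_jS_{ik}^2+d_kS_{ij}^2$. Summing over the four triples, and using $\sum_i d_i=0$ and $\sum_{i\neq j}S_{ij}^3=0$, gives
\[
2\sum_i\prod_{j\neq i}S_{ij}=-2m_1m_2m_3m_4\,\mathcal D .
\]
Hence $\mathcal D\neq0$ forces some principal $3\times3$ minor of $J$ to be nonzero, so $\operatorname{rank}J=3$.

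This is precisely the paper's argument. That sum is $-Q$ in $\det(\mu I-MJ)=\mu(\mu^3+P\mu+Q)$, and $Q\neq0$ makes $0$ a simple eigenvalue. The example above also shows that $\mathcal D\neq0$ is sufficient but not necessary for rank $3$.
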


\begin{proof}
By Lemma~\ref{lem:inverse-rank-condition} and Theorem \ref{thm:realization-collinear}, 
it is enough to prove that
\[
    \operatorname{rank}J(a,m)=3.
\]
Since \(M=\operatorname{diag}(m_1,m_2,m_3,m_4)\) is invertible, this is
equivalent to
\[
    \operatorname{rank}MJ(a,m)=3.
\]

For \(1\leq i<j\leq4\), set
\begin{equation}\label{c_ij}
    c_{ij}:=m_im_jA_{ij}=\frac{m_im_j}{(a_j-a_i)^2}.
\end{equation}
Then \(c_{ij}>0\), and a direct computation gives
\[
MJ(a,m)=
\begin{pmatrix}
c_{12}+c_{13}+c_{14} & c_{12} & c_{13} & c_{14}\\
-c_{12} & -c_{12}+c_{23}+c_{24} & c_{23} & c_{24}\\
-c_{13} & -c_{23} & -c_{13}-c_{23}+c_{34} & c_{34}\\
-c_{14} & -c_{24} & -c_{34} & -c_{14}-c_{24}-c_{34}
\end{pmatrix}.
\]
The off-diagonal part of this matrix is skew-symmetric, and its diagonal
entries sum to zero. Moreover,
\[
    \mathbf 1^T MJ(a,m)=0,
\]
because \(m^T J(a,m)=0\). Hence \(0\) is always an eigenvalue of \(MJ(a,m)\).

Let
\[
    \chi(\mu)=\det(\mu I-MJ(a,m))
\]
be the characteristic polynomial of \(MJ(a,m)\). A direct computation gives
\[
    \chi(\mu)=\mu\bigl(\mu^3+P\mu+Q\bigr),
\]
where the explicit expression of \(P\) is not needed here, and
\begin{equation}\label{Q}
    Q
    =
    2\left(
    -c_{12}c_{13}c_{14}
    +c_{12}c_{23}c_{24}
    -c_{13}c_{23}c_{34}
    +c_{14}c_{24}c_{34}
    \right).
\end{equation}
Substituting \eqref{c_ij} into \eqref{Q}, we obtain
\[
    Q
    =
    2m_1m_2m_3m_4\,\mathcal D(a,m),
\]
where $\mathcal D(a,m)$ is given by \eqref{D}.
Therefore the assumption $\mathcal D(a,m)\neq0$
implies $Q\neq0$.
Hence \(0\) is a simple root of \(\chi(\mu)\). Consequently,
\[
    \dim\ker(MJ(a,m))=1,
\]
and therefore
\[
    \operatorname{rank}MJ(a,m)=3.
\]
Since \(M\) is invertible, this gives $\operatorname{rank}J(a,m)=3$.
\end{proof}

\section{Equilibrium of the Regular $n$-gon Configuration}

We now turn from collinear configurations to a highly symmetric class of
non-collinear configurations. More precisely, we consider the case where the
particles are placed at the vertices of a regular \(n\)-gon. This setting is
natural because the geometry of the configuration is completely determined by
the cyclic symmetry, while the masses and charges may still vary from vertex to
vertex. Thus the equilibrium equations reduce to algebraic conditions on the
effective interaction coefficients $\delta_{ij}$ for $1\le i\ne j\le n$.

We divide the discussion into two cases. We first consider the regular
\(n\)-gon formed only by the \(n\) particles on the vertices. Then we add one
more particle at the geometric center and study how the equilibrium equations
are modified.

\subsection{Regular polygon without a central particle}

Before writing down the full system of balance equations, we record a simple
but useful observation. If one particle has zero charge-to-mass ratio, then the
interactions acting on it are purely attractive. Since the other vertices of a
convex polygon all lie in the same open half-plane determined by the tangent
line at that vertex, such attractive forces cannot cancel each other. Therefore
each charge-to-mass ratio must be nonzero.
\begin{lemma}\label{Lm:rank.condition.for.n-gon}
If the equilibrium configuration forms a convex $n$-gon, then $\lambda_i= \frac{e_i}{m_i} \neq 0$ for all $i$.
\end{lemma}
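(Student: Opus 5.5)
The plan is to argue by contradiction, using the observation made just before the statement. Suppose the configuration $a=(a_1,\ldots,a_n)$ is an equilibrium forming a convex $n$-gon and that $\lambda_i=e_i/m_i=0$ for some index $i$, so $e_i=0$. Then for every $j\neq i$,
\[
    \delta_{ij}=m_im_j-e_ie_j=m_im_j>0 .
\]
Hence every interaction acting on the $i$-th particle is strictly attractive, and the force-balance equation for the $i$-th particle reads
\[
    \sum_{j\neq i}\frac{m_im_j}{|a_j-a_i|^3}(a_j-a_i)=0,
\]
with all coefficients strictly positive.

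The key geometric step uses convexity. Since $a_i$ is a vertex of the convex polygon, there is a supporting line $L$ through $a_i$ such that all other vertices lie in one closed half-plane bounded by $L$. Because the polygon is strictly convex, with no three vertices collinear (as in the regular $n$-gon setting of this section), they in fact lie in the open half-plane. Equivalently, there is a unit vector $\nu$, the inward normal to $L$, with $\langle a_j-a_i,\nu\rangle>0$ for all $j\neq i$.

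Pairing the balance equation with $\nu$ then gives
\[
    0=\sum_{j\neq i}\frac{m_im_j}{|a_j-a_i|^3}\langle a_j-a_i,\nu\rangle>0,
\]
which is a contradiction. Therefore $\lambda_i\neq0$ for every $i$.

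The only delicate point is the strict separation. If the convex polygon were allowed to have collinear consecutive vertices (a degenerate convex polygon), some $\langle a_j-a_i,\nu\rangle$ could vanish. This can be handled in two ways. First, at least one other vertex lies strictly off $L$, since the configuration is non-collinear, and the remaining terms are nonnegative, so the sum is still strictly positive. Second, one may choose $\nu$ in the interior of the normal cone at the vertex $a_i$. Either way the sum is positive, so no extra hypothesis is needed beyond non-collinearity, which holds automatically for an $n$-gon.
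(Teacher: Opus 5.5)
Your proof is correct and is the same argument the paper gives in the remark preceding the lemma: if $e_i=0$ then every $\delta_{ij}=m_im_j$ is positive, and the other vertices lie in an open half-plane bounded by a supporting line at $a_i$, so these attractive forces cannot cancel. For the degenerate case, your first fix is the one that works in general: all terms are nonnegative and at least one is strictly positive by non-collinearity. The normal-cone alternative only applies when $a_i$ is an extreme point, so you do not need it.
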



Now we place a regular $n$-gon on the complex plane, with its geometric center at the origin and one vertex at $1$. The other vertices lie at $\omega, \omega^2, \dots, \omega^{n-1}$, where $\omega = e^{2\pi i/n}$.

For the particle at position $1$, its force balance equations are:
\begin{align}
      &(m_1m_2 - e_1e_2)a_2 + (m_1m_3 - e_1e_3)a_3 + \dots + (m_1m_n - e_1e_n)a_n = 0, \\
&(m_1m_2 - e_1e_2)b_2 + (m_1m_3 - e_1e_3)b_3 + \dots + (m_1m_n - e_1e_n)b_n = 0,
\end{align}

where
\begin{equation}
    a_i = \operatorname{Re}\left(\frac{1-\omega^{i-1}}{|1-\omega^{i-1}|^3}\right),\quad
b_i = \operatorname{Im}\left(\frac{1-\omega^{i-1}}{|1-\omega^{i-1}|^3}\right),\quad
i=2,3,\dots,n.\label{a_i.b_i}
\end{equation}

These represent the force components along the real and imaginary axes, respectively.

By rotating particle $2$ to particle $1$, we obtain the force balance equations for particle $2$:
\begin{align}
    (m_2m_3 - e_2e_3)a_2 + (m_2m_4 - e_2e_4)a_3 + \dots + (m_2m_n - e_2e_n)a_{n-1} + (m_2m_1 - e_2e_1)a_n = 0, \\
(m_2m_3 - e_2e_3)b_2 + (m_2m_4 - e_2e_4)b_3 + \dots + (m_2m_n - e_2e_n)b_{n-1} + (m_2m_1 - e_2e_1)b_n = 0.
\end{align}

The same procedure yields the balance equations for all other particles. We now consolidate all equations into a matrix form.

\begin{figure}
    \centering
    \includegraphics[width=0.5\linewidth]{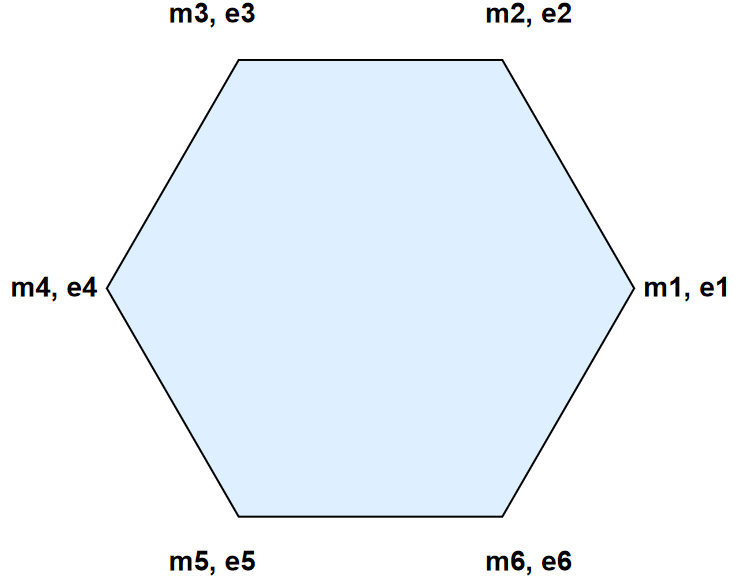}
    \caption{n=6}
    \label{fig:placeholder}
\end{figure}

The full system of equations can be written as:
\begin{equation}
    \begin{pmatrix}
m_1m_2 - e_1e_2 & m_1m_3 - e_1e_3 & \dots & m_1m_n - e_1e_n \\
m_2m_3 - e_2e_3 & m_2m_4 - e_2e_4 & \dots & m_2m_1 - e_2e_1 \\
\vdots & \vdots & \ddots & \vdots \\
m_nm_1 - e_ne_1 & m_nm_2 - e_ne_2 & \dots & m_nm_{n-1} - e_ne_{n-1}
\end{pmatrix}
\begin{pmatrix}
a_2 & b_2 \\
a_3 & b_3 \\
\vdots & \vdots \\
a_n & b_n
\end{pmatrix}
= 0.
\end{equation}

Setting $a_1 = b_1 = 0$, this becomes:
\begin{equation}
    \begin{pmatrix}
m_1^2 - e_1^2 & m_1m_2 - e_1e_2 & \dots & m_1m_n - e_1e_n \\
m_2^2 - e_2^2 & m_2m_3 - e_2e_3 & \dots & m_2m_1 - e_2e_1 \\
\vdots & \vdots & \ddots & \vdots \\
m_n^2 - e_n^2 & m_nm_1 - e_ne_1 & \dots & m_nm_{n-1} - e_ne_{n-1}
\end{pmatrix}
\begin{pmatrix}
a_1 & b_1 \\
a_2 & b_2 \\
\vdots & \vdots \\
a_n & b_n
\end{pmatrix}
= 0.
\end{equation}

Let $\lambda_i = \dfrac{e_i}{m_i}$. We can rewrite the above as:
\begin{equation}
    \begin{pmatrix}
m_1 & & 0 \\
& m_2 & \\
0 & & \vdots \\
& & m_n
\end{pmatrix}
\begin{pmatrix}
m_1 - \lambda_1e_1 & m_2 - \lambda_1e_2 & \dots & m_n - \lambda_1e_n \\
m_2 - \lambda_2e_2 & m_3 - \lambda_2e_3 & \dots & m_1 - \lambda_2e_1 \\
\vdots & \vdots & \ddots & \vdots \\
m_n - \lambda_ne_n & m_1 - \lambda_ne_1 & \dots & m_{n-1} - \lambda_ne_{n-1}
\end{pmatrix}
\begin{pmatrix}
a_1 & b_1 \\
a_2 & b_2 \\
\vdots & \vdots \\
a_n & b_n
\end{pmatrix}
= 0.
\end{equation}

Since $\operatorname{diag}(m_1,m_2,\dots,m_n)$ is non-singular, we obtain the matrix equation for $\mathbf{m} = (m_1,m_2,\dots,m_n)^T$:
\begin{equation}
    \left[
\begin{pmatrix}
m_1 & m_2 & \dots & m_n \\
m_2 & m_3 & \dots & m_1 \\
\vdots & \vdots & \ddots & \vdots \\
m_n & m_1 & \dots & m_{n-1}
\end{pmatrix}
- \Lambda
\begin{pmatrix}
m_1 & m_2 & \dots & m_n \\
m_2 & m_3 & \dots & m_1 \\
\vdots & \vdots & \ddots & \vdots \\
m_n & m_1 & \dots & m_{n-1}
\end{pmatrix}
\Lambda
\right]
\begin{pmatrix}
a_1 & b_1 \\
a_2 & b_2 \\
\vdots & \vdots \\
a_n & b_n
\end{pmatrix}
= 0,
\end{equation}

where $\Lambda = \operatorname{diag}(\lambda_1,\lambda_2,\dots,\lambda_n)$. This splits into two separate matrix equations:
\begin{equation}
    \left[
\begin{pmatrix}
a_1 & a_2 & \dots & a_n \\
a_n & a_1 & \dots & a_{n-1} \\
\vdots & \vdots & \ddots & \vdots \\
a_2 & a_3 & \dots & a_1
\end{pmatrix}
- \Lambda
\begin{pmatrix}
a_1 & a_2 & \dots & a_n \\
a_n & a_1 & \dots & a_{n-1} \\
\vdots & \vdots & \ddots & \vdots \\
a_2 & a_3 & \dots & a_1
\end{pmatrix}
\Lambda
\right]
\mathbf{m}
= 0,
\end{equation}

\begin{equation}
    \left[
\begin{pmatrix}
b_1 & b_2 & \dots & b_n \\
b_n & b_1 & \dots & b_{n-1} \\
\vdots & \vdots & \ddots & \vdots \\
b_2 & b_3 & \dots & b_1
\end{pmatrix}
- \Lambda
\begin{pmatrix}
b_1 & b_2 & \dots & b_n \\
b_n & b_1 & \dots & b_{n-1} \\
\vdots & \vdots & \ddots & \vdots \\
b_2 & b_3 & \dots & b_1
\end{pmatrix}
\Lambda
\right]
\mathbf{m}
= 0.
\end{equation}

Define the symmetric circulant matrix $A$ and skew-symmetric circulant matrix $B$:
\begin{equation}
    A =
\begin{pmatrix}
a_1 & a_2 & \dots & a_n \\
a_n & a_1 & \dots & a_{n-1} \\
\vdots & \vdots & \ddots & \vdots \\
a_2 & a_3 & \dots & a_1
\end{pmatrix},\quad
B =
\begin{pmatrix}
b_1 & b_2 & \dots & b_n \\
b_n & b_1 & \dots & b_{n-1} \\
\vdots & \vdots & \ddots & \vdots \\
b_2 & b_3 & \dots & b_1
\end{pmatrix}.\label{eq:1}
\end{equation}

Then the equations become:
\begin{equation}
    (A - \Lambda A \Lambda)\mathbf{m} = 0,\quad
(B - \Lambda B \Lambda)\mathbf{m} = 0.
\end{equation}

\begin{proposition}\label{prop:equilibrium.eq.of.n-gon}

    The equilibrium conditions for the regular n-gon configuration are
    \[
(A - \Lambda A \Lambda)\mathbf{m} = 0,\quad
(B - \Lambda B \Lambda)\mathbf{m} = 0,
\]
where $\Lambda = \operatorname{diag}(\lambda_1,\lambda_2,\dots,\lambda_n)$ and $A,B$ are given by \eqref{eq:1}.
\end{proposition}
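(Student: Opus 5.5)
The plan is to derive the two matrix identities directly from the force balance at each vertex, using the cyclic symmetry of the regular $n$-gon. Place the vertices at $q_k=\omega^{k-1}$, $k=1,\dots,n$, with $\omega=e^{2\pi i/n}$. The equilibrium condition $\partial U/\partial q_k=0$ reads
\[
\sum_{j\neq k}\delta_{kj}\frac{q_k-q_j}{|q_k-q_j|^3}=0 .
\]
Since $q_k-q_j=\omega^{k-1}(1-\omega^{j-k})$ and $|\omega^{k-1}|=1$, multiplying by $\omega^{-(k-1)}$ (a rotation, which does not affect vanishing) gives
\[
\sum_{j\neq k}\delta_{kj}\,\frac{1-\omega^{j-k}}{|1-\omega^{j-k}|^3}=0 .
\]
Writing the complex coefficient as $a_{s}+ib_{s}$ with $s\equiv j-k+1 \pmod n$, as in \eqref{a_i.b_i}, and separating real and imaginary parts gives two real scalar equations per vertex. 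It is important that we rotate the frame rather than only the index: the components are taken in the rotated frame attached to vertex $k$. Vanishing in every frame is equivalent to vanishing in the fixed frame, so the resulting system is equivalent to \eqref{eq.of.e}.

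Next we rewrite the sums in matrix form. Set $a_1=b_1=0$, which is harmless since these terms would multiply $\delta_{kk}$, an undefined quantity that we may formally set to $m_k^2-e_k^2$. Then the equation at vertex $k$ becomes
\[
\sum_{j=1}^n (m_km_j-e_ke_j)\,a_{j-k+1}=0,
\]
with indices taken mod $n$, and similarly with $b$. Substituting $e_k=\lambda_km_k$ gives $m_km_j-e_ke_j=m_k m_j(1-\lambda_k\lambda_j)$. Dividing by $m_k>0$, the equation becomes
\[
\sum_j C_{kj}\,(1-\lambda_k\lambda_j)\,m_j=0, \qquad C_{kj}=a_{j-k+1}.
\]
This $C$ is exactly the circulant matrix $A$ of \eqref{eq:1}, whose row $k$ is the cyclic shift $(a_{2-k},\dots)$, so that its $(k,j)$ entry is $a_{j-k+1}$. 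The left-hand side is then the $k$-th component of $(A-\Lambda A\Lambda)\mathbf m$, since $(\Lambda A\Lambda)_{kj}=\lambda_kA_{kj}\lambda_j$. The same argument with $b$ gives $(B-\Lambda B\Lambda)\mathbf m=0$. All steps are reversible, which proves the equivalence.

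Finally we verify the symmetry claims, using $a_{s}=a_{n+2-s}$ and $b_s=-b_{n+2-s}$. These follow from $\omega^{-(s-1)}=\overline{\omega^{s-1}}$: conjugation preserves the modulus and the real part and negates the imaginary part. Hence $A$ is symmetric and $B$ is skew-symmetric.

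The main obstacle is the index bookkeeping: matching the rotated per-vertex equations to the circulant pattern, with the correct direction of shift. A secondary point is justifying the inclusion of the fictitious diagonal terms, which is harmless precisely because $a_1=b_1=0$.
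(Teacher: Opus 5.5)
Your proof is correct and follows essentially the same route as the paper. You rotate the force balance at vertex $k$ back to vertex $1$, split it into real (radial) and imaginary (tangential) parts, adjoin the diagonal terms via $a_1=b_1=0$, factor out $m_k$, and recognize the circulant matrices $A$ and $B$. Your explicit identification $A_{kj}=a_{j-k+1}$ (indices mod $n$), together with your conjugation argument for the symmetry of $A$ and the skew-symmetry of $B$, is actually cleaner than the paper's intermediate step: there the matrix with entries $m_{k+j-1}$ is sandwiched between two copies of $\Lambda$, which, read literally, pairs $\lambda_j$ with the wrong mass.
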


\begin{lemma}\label{lem:null.dm.of.B}
Let \(B\) be the skew-symmetric circulant matrix defined in \eqref{eq:1}, namely
the matrix associated with the tangential components of the force equations
for the regular \(n\)-gon. Then \(0\) is an eigenvalue of \(B\), and the
corresponding eigenspace is as follows.
\begin{itemize}
    \item If \(n\) is odd, then
    \[
        \ker B=\operatorname{span}\{\mathbf 1\},
    \]
    where $\mathbf 1=(1,1,\ldots,1)^T$.
    In particular, \(0\) is a simple eigenvalue of \(B\).

    \item If \(n\) is even, then
    \[
        \ker B=\operatorname{span}\{\mathbf 1,\mathbf v\},
    \]
    where
    \[
        \mathbf 1=(1,1,\ldots,1)^T,
        \qquad
        \mathbf v=(1,-1,1,-1,\ldots,1,-1)^T.
    \]
    In particular, the multiplicity of the zero eigenvalue of \(B\) is \(2\).
\end{itemize}
\end{lemma}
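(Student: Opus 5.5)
Since $B$ is circulant, it is diagonalized by the Fourier vectors $\mathbf f_j=(1,\omega^{j},\omega^{2j},\ldots,\omega^{(n-1)j})^T$, $j=0,\ldots,n-1$. I would therefore compute its eigenvalues explicitly and show that only the expected ones vanish.

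From \eqref{a_i.b_i}, the first row of $B$ is $(b_1,\ldots,b_n)$ with $b_1=0$. For $k=1,\ldots,n-1$, set $\theta_k=\pi k/n$. Since $|1-\omega^k|=2\sin\theta_k$, we get
\[
b_{k+1}=-\frac{\sin(2\theta_k)}{8\sin^3\theta_k}=-\frac{\cos\theta_k}{4\sin^2\theta_k}.
\]
These satisfy $b_{n-k+1}=-b_{k+1}$, which is the skew-symmetry. Up to the indexing convention of the circulant, which only permutes $j\mapsto -j$, the eigenvalue on $\mathbf f_j$ is
\[
\mu_j=\sum_{k=1}^{n-1}b_{k+1}\omega^{jk}
= i\sum_{k=1}^{n-1}b_{k+1}\sin(2j\theta_k)
=-\frac{i}{4}\,S_j,
\qquad
S_j:=\sum_{k=1}^{n-1}\frac{\cos\theta_k\,\sin(2j\theta_k)}{\sin^2\theta_k}.
\]
The cosine part cancels by the antisymmetry $k\leftrightarrow n-k$.

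The zero eigenvalues then come in two families.
\begin{itemize}
\item $S_0=0$ trivially, so $\mathbf 1=\mathbf f_0\in\ker B$.
\item If $n$ is even, then for $j=n/2$ we have $\sin(2j\theta_k)=\sin(\pi k)=0$, so $S_{n/2}=0$. The eigenvector is $\mathbf f_{n/2}=((-1)^k)_k=\mathbf v$.
\end{itemize}
Since $S_{n-j}=-S_j$, it remains to show that $S_j\neq0$ for $1\le j<n/2$. The kernel of the (normal) matrix $B$ is then exactly the span of the Fourier vectors with $\mu_j=0$. This gives $\operatorname{span}\{\mathbf 1\}$ for $n$ odd and $\operatorname{span}\{\mathbf 1,\mathbf v\}$ for $n$ even, with the corresponding multiplicities.

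\textbf{Main obstacle.} The key step is proving $S_j>0$ for $1\le j<n/2$. I would show that $S_j$ is strictly increasing in $j$ on $0\le j<n/2$ by studying the difference. Using $\sin(2j\theta)-\sin(2(j-1)\theta)=2\cos((2j-1)\theta)\sin\theta$, we get
\[
S_j-S_{j-1}=2\sum_{k=1}^{n-1}\cot\theta_k\cos\bigl((2j-1)\theta_k\bigr).
\]
The terms for $k$ and $n-k$ coincide, since both factors change sign. So this is twice a sum over $1\le k<n/2$ (plus a vanishing middle term when $n$ is even). I would evaluate it with the classical finite sums
\[
\sum_{k=1}^{n-1}\cot\frac{\pi k}{n}\,\sin\frac{2\pi mk}{n}=n-2m,\qquad 0<m<n.
\]
To reduce to these, I would write $\cot\theta\cos((2j-1)\theta)$ through product-to-sum identities, or equivalently expand $\cos((2j-1)\theta)/\sin\theta$ as a telescoping sine sum. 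The goal is to show the difference is a positive quantity, of the order of $n-2j+1$.

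If this bookkeeping becomes unwieldy, I would fall back on a direct positivity argument. Pairing $k$ with $n-k$ gives
\[
S_j=2\sum_{1\le k<n/2}\frac{\cos\theta_k\sin(2j\theta_k)}{\sin^2\theta_k}.
\]
I would then compare with the integral $\int_0^{\pi/2}\cos\theta\sin(2j\theta)\sin^{-2}\theta\,d\theta$, whose singular part near $\theta=0$ is dominated by the positive term $k=1$. Either way, once $S_j\neq0$ is established for $1\le j<n/2$, the lemma follows.
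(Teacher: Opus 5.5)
\textbf{Gap in the key step.} Your reduction is correct and agrees with the paper. This covers the Fourier diagonalization of the circulant $B$, the formula $\mu_j=-\tfrac{i}{4}S_j$, the antisymmetry $S_{n-j}=-S_j$, the zeros at $j=0$ and $j=n/2$ with eigenvectors $\mathbf 1$ and $\mathbf v$, and the reduction to $S_j\neq0$ for $1\le j<n/2$. The gap is in the key step: the monotonicity you propose to prove is false.

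For $n=5$ one has $S_1=\sum_{k=1}^{4}2\cos^2\theta_k/\sin\theta_k\approx 4.86$ and $S_2=2\bigl(\cot\frac{\pi}{5}-\cot\frac{2\pi}{5}\bigr)\approx 2.10$. So $S_2-S_1=-2\cot\frac{\pi}{5}<0$, and likewise $S_2<S_1$ for $n=6$. The maximum of $S_j$ on $0\le j\le n/2$ lies in the interior.

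Your own first difference shows why. Expanding $\cos((2j-1)\theta)=\cos\theta-2\sin\theta\sum_{l=1}^{j-1}\sin(2l\theta)$ gives
\[
S_j-S_{j-1}=2\Bigl(\sum_{k=1}^{n-1}\csc\theta_k-2\sum_{l=1}^{j-1}\cot\frac{(2l-1)\pi}{2n}-\cot\frac{(2j-1)\pi}{2n}\Bigr).
\]
This contains $\sum_k\csc\theta_k$, which has order $n\log n$ and no closed form. So it cannot be reduced to the identity $\sum_k\cot\theta_k\sin(2m\theta_k)=n-2m$, which only handles even multiples of $\theta_k$. It also changes sign, as the $n=5$ case shows. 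No bookkeeping will turn it into a positive quantity of order $n-2j+1$.

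The fallback does not close the gap either. The comparison integral diverges at $\theta=0$. You also give no error bound for $j$ comparable to $n$, where $\sin(2j\theta)$ oscillates on the scale of the grid.

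\textbf{The missing idea: take one more difference.} Using $\sum_{k=1}^{n-1}\sin(r\theta_k)=\cot\frac{r\pi}{2n}$ for odd $r$, the second difference sums in closed form:
\[
S_{j+1}-2S_j+S_{j-1}=-4\sum_{k=1}^{n-1}\cos\theta_k\sin(2j\theta_k)=-2\Bigl(\cot\frac{(2j+1)\pi}{2n}+\cot\frac{(2j-1)\pi}{2n}\Bigr)<0
\]
whenever $j\ge1$ and $2j+1\le n$. Hence $(S_j)$ is strictly concave, and positivity follows from the endpoints.
\begin{itemize}
\item For even $n$, you have $S_0=S_{n/2}=0$.
\item For odd $n$, you have $S_0=0$ and $S_{(n-1)/2}=2\sum_{m=1}^{(n-1)/2}(-1)^{m+1}\cot\frac{m\pi}{n}>0$, an alternating sum of strictly decreasing positive terms.
\end{itemize}
In both cases strict concavity forces $S_j>0$ for $1\le j<n/2$. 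This is exactly the paper's argument, with $T_k=S_k/2$.
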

\begin{proof}
    The proof of this lemma will be deferred to the Appendix.
\end{proof}

Next, we consider the case of equal masses, where the conclusions will be further simplified.

\begin{theorem}\label{thm:nonexistence.of.n-gon}
    Suppose $n$ particles with equal masses form a regular $n$-gon equilibrium, then their electric charges must be the same, and the absolute value of each charge equals the mass of the particle, i.e.,
    \[
|e_1| = |e_2| = \dots = |e_n|=
m_1 = m_2 = \dots = m_n.
\]
In other words, under the non-trivial assumption, $n$ equal masses cannot form a regular $n$-gon equilibrium.
\end{theorem}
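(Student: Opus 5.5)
The plan is to use the equal-mass assumption to turn both matrix equations of Proposition~\ref{prop:equilibrium.eq.of.n-gon} into conditions on the vector $\lambda=(\lambda_1,\ldots,\lambda_n)^T$. Write $\mathbf m=m\mathbf 1$ with $m>0$ and $\Lambda\mathbf 1=\lambda$. Because $B$ is a skew-symmetric circulant, its rows sum to zero, so $B\mathbf 1=0$. The tangential equation $(B-\Lambda B\Lambda)\mathbf 1=0$ then reduces to $\Lambda B\lambda=0$. By Lemma~\ref{Lm:rank.condition.for.n-gon} every $\lambda_i\neq0$, so $\Lambda$ is invertible and $B\lambda=0$. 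Lemma~\ref{lem:null.dm.of.B} now gives $\lambda=t\mathbf 1$ when $n$ is odd, and $\lambda=s\mathbf 1+t\mathbf v$ when $n$ is even.

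Next we insert this into the radial equation $(A-\Lambda A\Lambda)\mathbf 1=0$. We need the entries of $A$ explicitly. From \eqref{a_i.b_i} and $|1-\omega^k|=2\sin(\pi k/n)$ we get
\[
a_{k+1}=\frac{1-\cos(2\pi k/n)}{|1-\omega^k|^3}=\frac{1}{2|1-\omega^k|}=\frac{1}{4\sin(\pi k/n)}>0 .
\]
Since $A$ is circulant, $A\mathbf 1=\alpha\mathbf 1$ with $\alpha=\sum_j a_j>0$. For even $n$ we also have $A\mathbf v=\beta\mathbf v$ with $\beta=\sum_{k=1}^{n-1}(-1)^k a_{k+1}$.

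In the odd case the radial equation reads $\alpha(1-t^2)\mathbf 1=0$, so $t=\pm1$. Hence $\lambda_i=\pm1$ for all $i$ with a common sign, which means $|e_i|=m$ and all $\delta_{ij}=0$.

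In the even case the radial equation splits into two scalar equations, one for odd and one for even indices:
\[
\alpha=(s+t)(s\alpha+t\beta),\qquad \alpha=(s-t)(s\alpha-t\beta).
\]
Subtracting them gives $2st(\alpha+\beta)=0$. Here $\alpha+\beta=2\sum_{k\ \text{even}}a_{k+1}>0$, so $st=0$.

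If $t=0$ we are back in the previous situation and $s=\pm1$. If $s=0$, then $\lambda_i=\pm t$ alternates and the equation becomes $\alpha=t^2\beta$. This forces $\beta>0$, so the remaining task is to show $\beta<0$.

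This sign estimate is the step I expect to be the main obstacle. Set $c_k=1/(4\sin(\pi k/n))$, which is strictly decreasing for $1\le k\le n/2$. Pair $k$ with $n-k$; they have the same parity since $n$ is even. This gives
\[
\beta=2\sum_{k=1}^{n/2-1}(-1)^k c_k+(-1)^{n/2}c_{n/2}.
\]
I would then group consecutive terms as $-(c_1-c_2)-(c_3-c_4)-\cdots$, where each bracket is negative. The boundary term needs separate care depending on the parity of $n/2$. When it carries a $+$ sign, absorb it into the last pair as $-2c_{n/2-1}+c_{n/2}<0$. The small cases $n=4,6$ can be checked directly; for $n=4$, $\beta=-2c_1+c_2<0$. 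Once $\beta<0$, the case $s=0$ is excluded. In every case $\lambda_1=\cdots=\lambda_n=\pm1$, all $\delta_{ij}$ vanish, and this contradicts the non-trivial assumption.
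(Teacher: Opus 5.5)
Your proof is correct. In outline it follows the paper: $B\mathbf 1=0$ and the invertibility of $\Lambda$ (Lemma~\ref{Lm:rank.condition.for.n-gon}) give $B\Lambda\mathbf 1=0$, Lemma~\ref{lem:null.dm.of.B} identifies $\Lambda\mathbf 1$, and the radial equation finishes. The odd case is identical to the paper's.

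The even case is where you diverge, and your version is the more complete one. The paper writes $\Lambda\mathbf 1=k_1\mathbf 1+k_2\mathbf v$ with $k_2\neq0$, divides the two row equations, and concludes $k_2^2=k_1^2-1$. But clearing denominators in that ratio gives $2k_1(k_1^2-k_2^2-1)=0$, so the branch $k_1=0$ is silently lost.

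That lost branch is in fact the only live one. Subtracting the two row equations gives $k_1k_2(a+\hat a)=0$, so $k_2\neq0$ forces $k_1=0$. There both equations reduce to $a=k_2^2\hat a$, which is your $\alpha=t^2\beta$. Excluding this requires the sign of $\hat a=\beta$, and the paper never determines it.

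Your subtraction $2st(\alpha+\beta)=0$ isolates exactly this branch. Your pairing estimate then closes it. It uses that $c_k=1/(4\sin(\pi k/n))$ is strictly decreasing on $1\le k\le n/2$. The boundary term $(-1)^{n/2}c_{n/2}$ is either negative outright, or it is absorbed as $-2c_{n/2-1}+c_{n/2}<0$. This proves $\hat a<0$ for every even $n\ge4$, so the separate checks for $n=4,6$ are already covered by the general grouping.

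So the step you flagged as the main obstacle is not a detour. It is precisely the ingredient needed to make the even case rigorous.
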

\begin{proof} Since the masses are equal, we may write
\[
    m=(m_0,\ldots,m_0)^T=m_0\mathbf 1,
\]
for some $m_0>0$.
The equilibrium equations are homogeneous in the mass vector \(m\), and hence
we may divide by \(m_0\) and work with
\[
    m=\mathbf 1=(1,\ldots,1)^T.
\]
By Lemma~\ref{lem:null.dm.of.B}, the kernel of the skew-symmetric circulant matrix \(B\) depends
on the parity of \(n\). We therefore distinguish two cases.

{\bf Case 1: $n$ is odd.}
When $n$ is odd, $B$ has exactly one zero eigenvalue, with eigenvector ${\mathbf{1}} = (1,1,\dots,1)^T$.
From $(B - \Lambda B \Lambda){\mathbf{m}} = 0$, substituting ${\mathbf{m}} = {\mathbf{1}}$ gives:
\begin{equation}
    B{\mathbf{1}} - \Lambda B \Lambda {\mathbf{1}} = 0.
\end{equation}

Since $B{\mathbf{1}} = 0$, we have $\Lambda B \Lambda {\mathbf{1}} = 0$. As $\Lambda$ is non-singular, $B \Lambda{\mathbf{1}} = 0$, so $\Lambda {\mathbf{1}} = (\lambda_1,\lambda_2,\dots,\lambda_n)^T$ is an eigenvector of $B$ corresponding to the zero eigenvalue. Thus, $\lambda_1 = \lambda_2 = \dots = \lambda_n = \lambda$.

Substituting into $(A - \Lambda A \Lambda)\mathbf{m} = 0$:
\begin{equation}
    (1 - \lambda^2)A\mathbf{m} = 0.
\end{equation}

Since $A$ has non-negative entries and $\mathbf{m} = (1,1,\dots,1)^T > 0$, we must have $1 - \lambda^2 = 0$, so $\lambda = \pm 1$.

{\bf Case 2: $n$ is even.}
When $n$ is even, $B$ has two zero eigenvalues, with eigenvectors ${\mathbf{1}} = (1,1,\dots,1)^T$ and ${v}=(1,-1,1,\dots,-1)^T$.
Similarly, we find $B\Lambda\mathbf{1} = 0$, so $\Lambda\mathbf{1}$ is a zero eigenvector of $B$. Let
\begin{equation}
\Lambda\mathbf{1}  = k_1 \mathbf{1} + k_2 {v} =
\begin{pmatrix}
k_1 + k_2 \\
k_1 - k_2 \\
k_1 + k_2 \\
\vdots \\
k_1 - k_2
\end{pmatrix} =
\begin{pmatrix}
\lambda_1 \\
\lambda_2 \\
\lambda_3 \\
\vdots \\
\lambda_n
\end{pmatrix},
\quad \text{assuming } k_2 \neq 0.
\end{equation}

From \((A - \Lambda A \Lambda)\mathbf{1} = 0\):
\begin{align*}
0 &= \left( \sum_{i=1}^n a_i \right) \mathbf{1} - \Lambda A \left( k_1 \mathbf{1} + k_2 {v} \right) \\
&= \left( \sum_{i=1}^n a_i \right) \mathbf{1} - \left( k_1 \sum_{i=1}^n a_i \right) \Lambda \mathbf{1} - k_2 \Lambda A {v} \\
&= \left( \sum_{i=1}^n a_i \right)\mathbf{1}  - \left( k_1 \sum_{i=1}^n a_i \right) \left( k_1\mathbf{1} + k_2 {v} \right) - k_2 \Lambda A {v} \\
&= \left( \sum_{i=1}^n a_i \right) \left( 1 - k_1^2 \right) \mathbf{1}- k_2 \left[ k_1 \left( \sum_{i=1}^n a_i \right) {v} + \left( \sum_{i=1}^n (-1)^{i-1} a_i \right) \Lambda {v} \right].
\end{align*}

Define
 \begin{equation}\label{a.ahat}
     a = \sum_{i=1}^n a_i, \quad \hat{a} = \sum_{i=1}^n (-1)^{i-1} a_i.
 \end{equation}

Then
\begin{equation}
    a(1 - k_1^2) \mathbf{1} - k_2 \left[ k_1 a v + \hat{a} \Lambda {v} \right] = 0,
\end{equation}

where
\[
k_1 a {v} + \hat{a} \Lambda {v} =
\begin{pmatrix}
k_1 a \\
-k_1 a \\
\vdots \\
-k_1 a
\end{pmatrix} +
\begin{pmatrix}
\lambda_1 \hat{a} \\
-\lambda_2 \hat{a} \\
\vdots \\
-\lambda_n \hat{a}
\end{pmatrix}.
\]

Considering the first two rows:
\[
\begin{cases}
k_1 a + \lambda_1 \hat{a} = \dfrac{a(1 - k_1^2)}{k_2} \\
-k_1 a - \lambda_2 \hat{a} = \dfrac{a(1 - k_1^2)}{k_2}
\end{cases}
\implies
\begin{cases}
a(1 - k_1 k_2 - k_1^2) = k_2 (k_1 + k_2) \hat{a} \\
a(1 + k_1 k_2 - k_1^2) = -k_2 (k_1 - k_2) \hat{a}.
\end{cases}
\]

Since \(a_i > 0\), we have \(a > 0\).

\begin{enumerate}
\item If \(\hat{a} = 0\), then \(k_1 k_2 = 0\). Also, \(1 - k_1 k_2 - k_1^2 = 0 \implies 1 = 0\), a contradiction.
\item If \(\hat{a} \neq 0\), then
\[
\frac{1 - k_1 k_2 - k_1^2}{1 + k_1 k_2 - k_1^2} = \frac{k_1 + k_2}{k_2 - k_1}
\implies k_2^2 = k_1^2 - 1.
\]
\end{enumerate}

Substituting back, we get
\[
\begin{cases}
k_1 a + (k_1 + k_2) \hat{a} = -a k_2 \\
-k_1 a - (k_1 - k_2) \hat{a} = -a k_2
\end{cases}
\implies 2 k_1 a + 2 k_1 \hat{a} = 0
\implies k_1 (a + \hat{a}) = 0.
\]

But
\[
a + \hat{a} = \sum_{i=1}^n a_i + \sum_{i=1}^n (-1)^{i-1} a_i > 0,
\]
so \(k_1 = 0\), which implies \(k_2^2 = 0 - 1 = -1\), a contradiction.

In conclusion, \(k_2 = 0\), so \(\Lambda\mathbf{1} = k_1 \mathbf{1}\).
By the same argument as Case 1, we obtain \(\lambda = \pm 1\).
 \end{proof}

Before stating the four-body case, we point out a difference between the
general regular \(N\)-gon problem and the special case \(N=4\). Theorem~\ref{thm:nonexistence.of.n-gon} was proved under the equal-mass assumption. 
However, when \(N=4\), the
configuration is a square and hence is a concyclic quadrilateral. Therefore
the conclusion is stronger in this case: no condition on the masses is needed.
Indeed, Corollary~\ref{cor:degenerate-auxiliary-triangle} already shows that a non-trivial four-body equilibrium
cannot be concyclic. Thus the following result is an immediate consequence of
the four-body geometric obstruction obtained in Section~\ref{sec:4}.

\begin{theorem}\label{thm:square-equilibrium}
Suppose that four particles form a square equilibrium. Then
\[
    \delta_{ij}=0,\qquad 1\leq i<j\leq4.
\]
Equivalently, under the non-trivial assumption, no choice of four positive
masses and real charges can produce a square equilibrium.
\end{theorem}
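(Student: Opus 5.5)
The plan is to reduce to Corollary~\ref{cor:degenerate-auxiliary-triangle} and then handle the case it leaves open by hand. To make the argument independent of the auxiliary-triangle construction, I also intend to cross-check it with a direct force-balance computation on the square.

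\emph{Step 1 (reduction).} A square is a concyclic quadrilateral. So if four particles form a square equilibrium, Corollary~\ref{cor:degenerate-auxiliary-triangle} shows that the non-trivial assumption must fail: at most one of the six coefficients $\delta_{ij}$ is nonzero. Suppose exactly one is nonzero, say $\delta_{ij}\neq 0$. The balance equation at $a_i$ reads $\sum_{l\neq i}\frac{\delta_{il}}{|a_l-a_i|^3}(a_l-a_i)=0$. Here only the $l=j$ term survives, and it is a nonzero vector because $a_j\neq a_i$. This is a contradiction, so $\delta_{ij}=0$ for all $i<j$.

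\emph{Step 2 (direct check).} Label the square counterclockwise with side $s$ and diagonal $\sqrt2 s$, and assume not all $\delta_{ij}$ vanish. By Lemma~\ref{lem:no-three-collinear-four-body} and Proposition~\ref{prop:four-body-sign-patterns}, the sign pattern must be $\delta_{12},\delta_{23},\delta_{34},\delta_{14}>0$ and $\delta_{13},\delta_{24}<0$. I would then redo the cone argument with explicit coefficients. At $a_1$ we have $a_3-a_1=(a_2-a_1)+(a_4-a_1)$, so $\alpha=\beta=1$, which gives $c_{12}=c_{14}=-c_{13}$, that is, $\delta_{12}/s^3=\delta_{14}/s^3=-\delta_{13}/(2\sqrt2 s^3)$. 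Repeating this at every vertex shows that all four sides share a common value $\delta>0$ and both diagonals equal $-2\sqrt2\,\delta$. Hence
\[
a^3=b^3=|\delta_{14}\delta_{23}|=\delta^2,\qquad c^3=|\delta_{13}\delta_{24}|=8\delta^2 .
\]
(An aside I am less sure of: since $2^{1/3}<2$, this already gives $c>a+b$, so the triangle inequality of Theorem~\ref{thm:charged-four-body-equilibrium}(i) fails.)

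\emph{Step 3 (contradiction).} Writing $\delta_{ij}=m_im_j(1-\lambda_i\lambda_j)$, the computation in the proof of Corollary~\ref{cor:degenerate-auxiliary-triangle} gives $a^3+b^3-c^3>0$ for this sign pattern. This uses only the sign pattern, not any degeneracy, so it applies directly here. But from Step 2 we have $a^3+b^3-c^3=2\delta^2-8\delta^2<0$, a contradiction. Therefore all $\delta_{ij}$ vanish.

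The main obstacle is confirming that the inequality $a^3+b^3>c^3$ in the corollary's proof really depends only on the convex sign pattern, and not on its degeneracy hypothesis. Reading that computation, it relies only on $\lambda_1\lambda_3>1$, $\lambda_2\lambda_4>1$, and the alternating signs of the $\lambda_i$, all of which follow from the sign pattern. The remaining steps are routine.
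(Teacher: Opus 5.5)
Your proof is correct. Step~1 is the paper's own argument: the paper derives this theorem directly from Corollary~\ref{cor:degenerate-auxiliary-triangle}, since a square is concyclic. Your extra remark handles the case of exactly one nonzero $\delta_{ij}$, which leaves an unbalanced force at $a_i$. That remark is what carries you from ``the non-trivial assumption fails'' to ``all $\delta_{ij}=0$'', a step the paper leaves implicit.

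Steps~2--3 are a genuinely different and more elementary route. The paper goes through Lemma~\ref{lem:geometric-characterization} and the auxiliary triangle, and needs the link between concyclicity and degeneracy of that triangle. Instead, you use the explicit cone coefficients $\alpha=\beta=1$ of the square to fix all six coefficients up to one scale. You then contradict the inequality $|\delta_{14}\delta_{23}|+|\delta_{12}\delta_{34}|>|\delta_{13}\delta_{24}|$. Your stated obstacle is resolved in your favour: that computation uses only $\lambda_1\lambda_3>1$, $\lambda_2\lambda_4>1$, the four side conditions and $\delta_{ij}=m_im_j(1-\lambda_i\lambda_j)$, and never the degeneracy hypothesis. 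The paper's route gives the stronger statement that no concyclic quadrilateral is possible. Yours is self-contained for the square and does not depend on the appendix lemma.

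The one thing to fix is your aside, which is wrong. From $a^3=b^3=\delta^2$ and $c^3=8\delta^2$ you get $a=b=\delta^{2/3}$ and $c=8^{1/3}\delta^{2/3}=2\delta^{2/3}=a+b$. So the triangle inequality of Theorem~\ref{thm:charged-four-body-equilibrium}(i) holds with equality rather than failing. This is exactly the degenerate auxiliary triangle that Ptolemy's equality predicts for a concyclic quadrilateral, so part (i) alone gives no contradiction. Delete the aside. It is not used in Step~3, where the contradiction between $a^3+b^3-c^3=-6\delta^2<0$ and the sign-pattern inequality $a^3+b^3-c^3>0$ is valid.
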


\subsection{Regular polygon with one additional central particle}

If we consider a system of $(n+1)$ particles, where a particle with mass $m_0$ and charge $e_0$ is located at the center of a regular $n$-gon, then equations of the equilibrium become:
\begin{eqnarray}
    (m_1 m_2 - e_1 e_2) a_2 + (m_1 m_3 - e_1 e_3) a_3 + \dots + (m_1 m_n - e_1 e_n) a_n& + (m_0 m_1 - e_0 e_1) =& 0, 
\\
    (m_1 m_2 - e_1 e_2) b_2 + (m_1 m_3 - e_1 e_3) b_3 + \dots + (m_1 m_n - e_1 e_n) b_n &\qquad\qquad\qquad\qquad=& 0.
\end{eqnarray}
The full system of equations can be written as
\begin{equation}
\begin{pmatrix}
m_1 m_2 - e_1 e_2 & m_1 m_3 - e_1 e_3 & \dots & m_1 m_n - e_1 e_n \\
m_2 m_3 - e_2 e_3 & m_2 m_4 - e_2 e_4 & \dots & m_2 m_1 - e_2 e_1 \\
\vdots & \vdots & \ddots & \vdots \\
m_n m_1 - e_n e_1 & m_n m_2 - e_n e_2 & \dots & m_n m_{n-1} - e_n e_{n-1}
\end{pmatrix}
\begin{pmatrix}
a_2 & b_2 \\
a_3 & b_3 \\
\vdots & \vdots \\
a_n & b_n
\end{pmatrix}
=
\begin{pmatrix}
e_0 e_1 - m_0 m_1 & 0 \\
e_0 e_2 - m_0 m_2 & 0 \\
\vdots & \vdots \\
e_0 e_n - m_0 m_n & 0
\end{pmatrix}
    \end{equation}
which also can be transformed into:
\begin{equation}\label{eq.of.n+1-gon}
    (A - \Lambda A \Lambda) m = \left( e_0 \lambda_1 - m_0,\, e_0 \lambda_2 - m_0,\, \dots,\, e_0 \lambda_n - m_0 \right)^T, \quad
(B - \Lambda B \Lambda) m = 0,
\end{equation}
where \(A\) and \(B\), given by \eqref{eq:1}.

We now return to the centered regular \(n\)-gon with equal outer masses. 
Compared with the regular
\(n\)-gon without a central particle, the only new contribution comes from the
force exerted by the central particle, and this contribution is purely radial.
Hence the tangential equation is unchanged and still forces the charge-to-mass
ratios of the particles on the vertices to be highly constrained. After this
constraint is identified, the full equilibrium condition reduces to one scalar
equation involving the common charge-to-mass ratio and the mass and charge of
the central particle. The following proposition records this reduction.

\begin{theorem}[Centered regular \(n\)-gon with equal outer masses]
\label{thm:centered-regular-ngon}
Consider a regular \(n\)-gon with one additional particle at its center. 
Assume that the \(n\) particles on the vertices have the same mass
\[m_*=m_1=\cdots=m_n>0.\] 
Let $\lambda_i=\frac{e_i}{m_i}$ for $1\leq i\leq n$,
and let \(m_0>0\) and \(e_0\in\mathbb R\) be the mass and charge of the central
particle. Suppose also that
\[
    \lambda_i\neq0,\qquad 1\leq i\leq n.
\]
Then the centered regular \(n\)-gon is an equilibrium if and only if
\[
    \lambda_1=\lambda_2=\cdots=\lambda_n=:\lambda
\]
and
\begin{equation}\label{existence.condition}
    m^*(1-\lambda^2)\sum_{i=1}^n a_i=e_0\lambda-m_0.
\end{equation}
Here the constants \(a_i\) are those defined in \eqref{a_i.b_i}.
\end{theorem}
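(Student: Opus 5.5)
We will work directly from the system \eqref{eq.of.n+1-gon} and follow the pattern of the proof of Theorem~\ref{thm:nonexistence.of.n-gon}. Since the outer masses are equal, $m=m_*\mathbf 1$, and both equations are homogeneous of degree one in $m$. The central particle contributes only a radial force on each vertex. Hence the tangential equation $(B-\Lambda B\Lambda)m=0$ is exactly the same as in the uncentered case.

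\textbf{Tangential equation.} From $B\mathbf 1=0$ we get $\Lambda B\Lambda\mathbf 1=0$. The hypothesis $\lambda_i\neq0$ makes $\Lambda$ invertible, so $B\Lambda\mathbf 1=0$. By Lemma~\ref{lem:null.dm.of.B}, when $n$ is odd this forces $\Lambda\mathbf 1\in\operatorname{span}\{\mathbf 1\}$, so all $\lambda_i$ are equal. When $n$ is even we only get $\Lambda\mathbf 1=k_1\mathbf 1+k_2\mathbf v$, and the radial equation must be used to rule out $k_2\neq0$.

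\textbf{Radial equation and the even case.} Since $A$ is circulant, $A\mathbf 1=a\mathbf 1$ with $a=\sum a_i$, and $A\mathbf v=\hat a\mathbf v$ with $\hat a$ as in \eqref{a.ahat}. The $i$-th component of $(A-\Lambda A\Lambda)m_*\mathbf 1$ is then
\[
m_*\bigl(a-\lambda_i(k_1a+k_2\hat a\,v_i)\bigr).
\]
Setting this equal to $e_0\lambda_i-m_0$ with $\lambda_i=k_1\pm k_2$ gives two scalar equations, one for odd indices and one for even indices. Their difference is linear in $\lambda_1-\lambda_2=2k_2$. We expect it to reduce to
\[
2k_2\bigl(-m_*(k_1a+\ldots)-e_0\bigr)=0
\]
plus a $\hat a$ term. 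This is the \emph{main obstacle}. Unlike Theorem~\ref{thm:nonexistence.of.n-gon}, the free parameters $e_0,m_0$ can absorb the difference, so $k_2\neq0$ might not be excluded by the radial equation alone. If the difference of the odd and even equations does not force $k_2=0$, we will use the fact that $\lambda_1^2$ and $\lambda_2^2$ appear quadratically, namely through $\lambda_i(k_1a\pm k_2\hat a)$. Combining the two equations, we hope to get $e_0$ determined in two incompatible ways, or a condition like $k_1(a+\hat a)=0$ as before, using $a+\hat a>0$. If even this fails, the honest route is to check the paper's intended reading. Presumably the tangential argument of Theorem~\ref{thm:nonexistence.of.n-gon} is meant to give equality of the $\lambda_i$, and we would note the gap.

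\textbf{Conclusion.} Once $\lambda_i\equiv\lambda$, the radial equation reads $m_*(1-\lambda^2)a\mathbf 1=(e_0\lambda-m_0)\mathbf 1$, which is exactly \eqref{existence.condition}. For the converse, if all $\lambda_i=\lambda$ and \eqref{existence.condition} holds, then $(B-\Lambda B\Lambda)m_*\mathbf 1=(1-\lambda^2)m_*B\mathbf 1=0$ and the radial equation holds componentwise. The central particle's own balance is automatic by symmetry, since all outer particles exert the same $\delta_{0i}=m_0m_*-e_0e_i$ at symmetric positions. Hence the configuration is an equilibrium.
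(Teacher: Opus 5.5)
Your odd-$n$ argument and your converse match the paper's. The even case, however, is a genuine gap, and your suspicion about it is correct: it cannot be closed, because the ``only if'' direction is false for even $n$.

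With $\lambda_{\rm odd}=k_1+k_2$, $\lambda_{\rm even}=k_1-k_2$, $A\mathbf 1=a\mathbf 1$ and $A\mathbf v=\hat a\mathbf v$, the radial system reduces to two equations:
\[
m_*\bigl[a(1-k_1^2)\mp k_1k_2(a+\hat a)-k_2^2\hat a\bigr]=e_0(k_1\pm k_2)-m_0
\]
(upper signs for odd vertices). Your guessed form of their difference is right: it is exactly $2k_2\bigl(m_*k_1(a+\hat a)+e_0\bigr)=0$. For $k_2\neq0$ this only fixes $e_0=-m_*k_1(a+\hat a)$. The half-sum then gives $m_0=-m_*\bigl(a+\lambda_1\lambda_2\hat a\bigr)$. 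Neither of your fallbacks occurs. $e_0$ is determined once and consistently, and no relation like $k_1(a+\hat a)=0$ appears, because $e_0$ absorbs it.

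Since $\hat a<0$, any nonzero $\lambda_1\neq\lambda_2$ with $\lambda_1\lambda_2>a/|\hat a|$ gives $m_0>0$. For $n=4$, $a=\tfrac14+\tfrac1{\sqrt2}$ and $\hat a=\tfrac14-\tfrac1{\sqrt2}$; in fact $\hat a<0$ for every even $n\ge4$. In this alternating family the tangential equations hold because $\Lambda\mathbf 1\in\ker B$. The central particle is balanced because $\delta_{0i}$ depends only on the parity of $i$.

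\textbf{Explicit counterexample.} Take the unit square with vertices $1,i,-1,-i$, all outer masses $1$, charges $e_1=e_3=1$, $e_2=e_4=3$, and a central particle with $m_0=\sqrt2-1$, $e_0=-1$.
\begin{itemize}
\item At vertex $1$: the two neighbours ($\delta=-2$) push outward with total radial force $\sqrt2$; the opposite vertex has $\delta_{13}=0$; the center ($\delta_{01}=\sqrt2$) pulls inward with $\sqrt2$.
\item At vertex $2$: the neighbours push out with $\sqrt2$; the opposite vertex ($\delta_{24}=-8$, distance $2$) pushes out with $2$; the center ($\delta_{02}=2+\sqrt2$) pulls in with $2+\sqrt2$.
\item Tangential components cancel by symmetry, and so do the forces on the center.
\end{itemize}
This is a non-trivial equilibrium with $\lambda_1=1\neq3=\lambda_2$.

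The paper's proof reaches the contradiction $m^*a=-m_0$ only through a sign slip in the second component. Expanding $(k_1-k_2)(k_1a-k_2\hat a)$ produces $+k_2(k_1-k_2)\hat a$ on the left, not $-k_2(k_1-k_2)\hat a$. The uncentered computation in Theorem~\ref{thm:nonexistence.of.n-gon} has the correct sign.

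So the theorem holds as stated only for odd $n$. For even $n\ge4$ the correct characterization must also include the alternating family with $\lambda_{\rm odd}\neq\lambda_{\rm even}$, $e_0=-\tfrac12 m_*(\lambda_1+\lambda_2)(a+\hat a)$ and $m_0=-m_*(a+\lambda_1\lambda_2\hat a)>0$. Do not try to force $k_2=0$.
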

\begin{proof}
Substituting $\boldsymbol{m} = m^*\boldsymbol{1}$ into the equilibrium equations \eqref{eq.of.n+1-gon}, 
we obtain
\[
    (A-\Lambda A\Lambda)m^*\mathbf 1
    =
    (e_0\lambda_1-m_0,\ldots,e_0\lambda_n-m_0)^T,
\]
and
\[
    (B-\Lambda B\Lambda)m^*\mathbf 1=0.
\]
Since $B\mathbf 1=0$,
the second equation gives
\[
    \Lambda B\Lambda\mathbf 1=0.
\]
Because \(\Lambda\) is invertible, this implies
\[
    B\Lambda\mathbf 1=0.
\]
Thus
\begin{equation}\label{Lambda*1}
    \Lambda\mathbf 1=(\lambda_1,\ldots,\lambda_n)^T\in\ker B.
\end{equation}

We now distinguish the parity of \(n\).

\emph{Case I: \(n\) is odd.}
By Lemma~\ref{lem:null.dm.of.B}, we have $\ker B=\operatorname{span}\{\mathbf 1\}$.
Therefore, by \eqref{Lambda*1}, we must have
\[
    \Lambda\mathbf 1=\lambda\mathbf 1,
\]
or equivalently, $\lambda_1=\cdots=\lambda_n=:\lambda$.

\emph{Case II: \(n\) is even.}
By Lemma~\ref{lem:null.dm.of.B},
\[
    \ker B=\operatorname{span}\{\mathbf 1,v\},
\]
where $v=(1,-1,\ldots,1,-1)^T$.
Hence, by \eqref{Lambda*1}, we must have
\begin{equation}\label{k1*1+k2*v}
    \Lambda\mathbf 1=k_1\mathbf 1+k_2v.
\end{equation}
We claim that \(k_2=0\). Suppose otherwise that \(k_2\neq0\). 
Substituting \eqref{k1*1+k2*v}
into the first equilibrium equation of \eqref{eq.of.n+1-gon}
yields a system whose first two components give
\[
\begin{cases}
m^*[a(1-k_1^2)-k_1k_2a-k_2(k_1+k_2)\widehat a]
=
e_0(k_1+k_2)-m_0,\\[0.6ex]
m^*[a(1-k_1^2)+k_1k_2a-k_2(k_1-k_2)\widehat a]
=
e_0(k_1-k_2)-m_0.
\end{cases}
\]
where $a$ and $\widehat{a}$ are given by \eqref{a.ahat}.
Equivalently,
\[
\begin{cases}
k_1k_2a+k_2^2\widehat a=-k_2e_0,\\[0.6ex]
m^*[a(1-k_1^2)-k_1k_2\widehat a]=k_1e_0-m_0.
\end{cases}
\]
From the first equation we get $k_2\widehat a=-k_1a-e_0$.
Substituting this into the second equation yields
\[
m^*a=-m_0,
\]
which is impossible because $m^*a=m^*\sum_{i=1}^n a_i>0$ and $m_0>0$.
Therefore \(k_2=0\), and hence
\[
    \Lambda\mathbf 1=k_1\mathbf 1.
\]
Thus, also in the even case, $\lambda_1=\cdots=\lambda_n=:\lambda$.

It remains to substitute this common value into the first equilibrium equation of \eqref{eq.of.n+1-gon}.
Since
\[
    A\mathbf 1=\left(\sum_{i=1}^n a_i\right)\mathbf 1,
\]
we obtain
\[
    (A-\Lambda A\Lambda)m^*\mathbf 1
    =
    m^*(1-\lambda^2)A\mathbf 1
    =
    m^*(1-\lambda^2)
    \left(\sum_{i=1}^n a_i\right)\mathbf 1.
\]
On the other hand, the right-hand side is
\[
    (e_0\lambda-m_0)\mathbf 1.
\]
Therefore the equilibrium condition is exactly \eqref{existence.condition}.

Conversely, if $\lambda_1=\cdots=\lambda_n=\lambda$
and the scalar condition above holds, then
\[
    (B-\Lambda B\Lambda)\mathbf 1
    =
    (1-\lambda^2)B\mathbf 1=0,
\]
and
\[
    (A-\Lambda A\Lambda)m^*\mathbf 1
    =
    m^*(1-\lambda^2)A\mathbf 1
    =
    (e_0\lambda-m_0)\mathbf 1.
\]
Hence both equilibrium equations are satisfied. This proves the theorem.
\end{proof}

\textbf{Example:} Consider a regular hexagon (i.e., \( n = 6 \)) with the following conditions:
    \begin{itemize}
        \item The electric charges on the 6 particles at the vertices are \( e_1 = e_2 = \cdots = e_6 = 2 \) (in arbitrary units).
        \item The masses of the particles are \( m_1 = m_2 = \cdots = m_6 = 1 \).
        \item The charge of the central particle is \( e_0 = -\frac{\sqrt{3}}{2} \), and its mass is \( m_0 = \frac{15}{4} \).
    \end{itemize}
    In this system, the electric and gravitational forces will balance each other in such a way that the system can reach equilibrium.

\section{Appendix}

{\bf Lemma 4.4} (Geometric characterization of four-body equilibrium).
{\it 
Let \(A,B,C,D\) be a non-collinear static equilibrium of the charged four-body
problem. With the notation introduced in Section \ref{subsec:4.2}, the following proportional relation
holds:
\[
\begin{aligned}
&\frac{AB}{R_A R_B}:
\frac{BC}{R_B R_C}:
\frac{CD}{R_C R_D}:
\frac{DA}{R_D R_A}:
\frac{AC}{R_A R_C}:
\frac{BD}{R_B R_D}  \\
&=
\Delta_{12}^{2/3}\Delta_{34}^{-1/3}:
\Delta_{23}^{2/3}\Delta_{14}^{-1/3}:
\Delta_{34}^{2/3}\Delta_{12}^{-1/3}:
\Delta_{14}^{2/3}\Delta_{23}^{-1/3}:
\Delta_{13}^{2/3}\Delta_{24}^{-1/3}:
\Delta_{24}^{2/3}\Delta_{13}^{-1/3}.
\end{aligned}
\]
}
\begin{proof}
For \(i\neq j\), set
\[
    r_{ij}=|a_i-a_j|,
    \qquad
    \omega_{ij}=\frac{\delta_{ij}}{r_{ij}^3}.
\]
The equilibrium equations are
\[
    \sum_{j\neq i}\omega_{ij}(a_j-a_i)=0,
    \qquad i=A,B,C,D.
\]
Thus \((\omega_{ij})\) is a self-stress of the complete graph on the four
vertices \(A,B,C,D\).

Since the four points are non-collinear and no three of them are collinear, the
space of affine dependencies among \(A,B,C,D\) is one-dimensional. Let
\((\eta_A,\eta_B,\eta_C,\eta_D)\) be a nonzero affine dependence, namely
\[
    \eta_A+\eta_B+\eta_C+\eta_D=0,
\]
and
\[
    \eta_A A+\eta_B B+\eta_C C+\eta_D D=0.
\]
Then every self-stress of the complete graph \(K_4\) is of the form
\[
    \omega_{ij}=\tau\,\eta_i\eta_j
\]
for some constant \(\tau\neq 0\). Therefore
\[
    |\delta_{ij}|
    =
    |\omega_{ij}|\,r_{ij}^3
    =
    |\tau|\,|\eta_i\eta_j|\,r_{ij}^3.
\]

Moreover, the coefficients \(|\eta_i|\) are proportional to the areas of the
opposite triangles. Denote
\[
    S_A=S_{\Delta BCD},\qquad
    S_B=S_{\Delta ACD},\qquad
    S_C=S_{\Delta ABD},\qquad
    S_D=S_{\Delta ABC},
\]
where \(S_{\Delta XYZ}\) is the area of triangle \(\Delta XYZ\). Absorbing the proportionality
constant into a new constant \(K>0\), we obtain
\[
    \Delta_{ij}=K S_iS_j r_{ij}^3,
    \qquad
    \Delta_{ij}=|\delta_{ij}|.
\]

Now use the circumradius formula
\[
    S_{\Delta XYZ}=\frac{XY\cdot YZ\cdot ZX}{4R_{XYZ}}.
\]
For example,
\[
    S_A=\frac{BC\cdot CD\cdot BD}{4R_A},
    \qquad
    S_B=\frac{AC\cdot CD\cdot DA}{4R_B}.
\]
Hence
\[
    \Delta_{12}
    =
    K S_A S_B\, AB^3
    =
    K'
    \frac{
    AB^3\cdot BC\cdot CD^2\cdot DA\cdot AC\cdot BD
    }{
    R_A R_B
    },
\]
where \(K'>0\) is a constant independent of the pair \(ij\). Similarly,
\[
    \Delta_{34}
    =
    K'
    \frac{
    CD^3\cdot AB^2\cdot BC\cdot DA\cdot AC\cdot BD
    }{
    R_C R_D
    }.
\]
A direct comparison gives
\[
    \left(\frac{AB}{R_A R_B}\right)^3
    =
    C
    \frac{\Delta_{12}^2}{\Delta_{34}},
\]
where the constant
\[
    C=
    \frac{1}{
    K'\,AB\cdot BC\cdot CD\cdot DA\cdot AC\cdot BD
    \cdot R_A R_B R_C R_D
    }
\]
is the same for all six pairs.

Repeating the same computation for the other five pairs yields
\[
    \left(\frac{BC}{R_B R_C}\right)^3
    =
    C
    \frac{\Delta_{23}^2}{\Delta_{14}},\quad
    \left(\frac{CD}{R_C R_D}\right)^3
    =
    C
    \frac{\Delta_{34}^2}{\Delta_{12}},\quad
    \left(\frac{DA}{R_D R_A}\right)^3
    =
    C
    \frac{\Delta_{14}^2}{\Delta_{23}},
\]
and
\[
    \left(\frac{AC}{R_A R_C}\right)^3
    =
    C
    \frac{\Delta_{13}^2}{\Delta_{24}},\quad
    \left(\frac{BD}{R_B R_D}\right)^3
    =
    C
    \frac{\Delta_{24}^2}{\Delta_{13}}.
\]
Taking positive cube roots gives the desired proportional relation.
\end{proof}
\medskip

\noindent{\bf Lemma 6.3}.
{\it 
Let \(B\) be the skew-symmetric circulant matrix defined in \eqref{eq:1}, namely
the matrix associated with the tangential components of the force equations
for the regular \(n\)-gon. Then \(0\) is an eigenvalue of \(B\), and the
corresponding eigenspace is as follows.
\begin{itemize}
    \item If \(n\) is odd, then
    \[
        \ker B=\operatorname{span}\{\mathbf 1\},
    \]
    where $\mathbf 1=(1,1,\ldots,1)^T$.
    In particular, \(0\) is a simple eigenvalue of \(B\).

    \item If \(n\) is even, then
    \[
        \ker B=\operatorname{span}\{\mathbf 1,\mathbf v\},
    \]
    where
    \[
        \mathbf 1=(1,1,\ldots,1)^T,
        \qquad
        \mathbf v=(1,-1,1,-1,\ldots,1,-1)^T.
    \]
    In particular, the multiplicity of the zero eigenvalue of \(B\) is \(2\).
\end{itemize}
}
\begin{proof} Recall
\[
B = \begin{pmatrix}
b_1 & b_2 & \cdots & b_n \\
b_n & b_1 & \cdots & b_{n-1} \\
\vdots & \vdots & & \vdots \\
b_2 & b_3 & \cdots & b_1
\end{pmatrix},
\]
where \(b_1 = 0\) and $b_k$ are given by \eqref{a_i.b_i} for \(k \geq 2\).

By \eqref{a_i.b_i} and direct computation, for \(k \geq 2\), we have
\begin{equation}
\begin{aligned}
b_k &=\operatorname{Im}\left( \frac{1 - \omega^{k-1}}{\left|1 - \omega^{k-1}\right|^3} \right)\\
&= \operatorname{Im}\left( \frac{1 - \cos\left(\frac{k-1}{n}2\pi\right) - i\sin\left(\frac{k-1}{n}2\pi\right)}{\sqrt{\left(1 - \cos\left(\frac{k-1}{n}2\pi\right)\right)^2 + \left(\sin\left(\frac{k-1}{n}2\pi\right)\right)^2}^3} \right)
\\
&= \frac{-\sin\left(\frac{k-1}{n}2\pi\right)}{\sqrt{2 - 2\cos\left(\frac{k-1}{n}2\pi\right)}^3}\\
&= \frac{-\sin\left(\frac{k-1}{n}2\pi\right)}{8\sin^3\left(\frac{k-1}{n}\pi\right)}
= \frac{-\cos\left(\frac{k-1}{n}\pi\right)}{4\sin^2\left(\frac{k-1}{n}\pi\right)}.
\end{aligned}
\end{equation}
where \(\omega = e^{2\pi i / n}\) is the primitive \(n\)-th root of unity.
We then divide the proof into the cases where
\(n\) is odd and where \(n\) is even.

{\bf Case 1: $n$ is odd.}
All eigenvalues of \(B\) are given by (\(k = 0,1,2,\dots,n-1\)):
\begin{equation}
    \lambda_k = b_1 + b_2\omega^k + b_3\omega^{2k} + \cdots + b_n\omega^{(n-1)k}.
\end{equation}

Grouping symmetric terms:
\[
\lambda_k = \left(b_2\omega^k + b_n\omega^{(n-1)k}\right) + \left(b_3\omega^{2k} + b_{n-1}\omega^{(n-2)k}\right) + \cdots + \left(b_{\frac{n+1}{2}}\omega^{\frac{n-1}{2}k} + b_{\frac{n+3}{2}}\omega^{\frac{n+1}{2}k}\right).
\]
Using \(\omega^{m k} - \omega^{(n-m)k} = 2i\sin\left(\frac{2mk\pi}{n}\right)\), we get:
\begin{equation}
    \lambda_k = 2i\sin\left(\frac{2k\pi}{n}\right)b_2 + 2i\sin\left(\frac{4k\pi}{n}\right)b_3 + \cdots + 2i\sin\left(\frac{(n-1)k\pi}{n}\right)b_{\frac{n+1}{2}}.
\end{equation}

 For $k=0$, we immediatly have  \(\lambda_0 = 0 \). 
 We now show \(\lambda_k \neq 0\) for \(k \neq 0\).
This is equivalent to proving:
\begin{equation}
    \sum_{m=1}^{\frac{n-1}{2}} \sin\left(\frac{m}{n} \cdot 2k\pi\right) \frac{\cos\left(\frac{m}{n}\pi\right)}{\sin^2\left(\frac{m}{n}\pi\right)} \neq 0 \quad (k = 1,2,\dots,n-1).
\end{equation}

Now, let's define
\begin{equation*}
    T_k = \sum_{m=1}^{\frac{n-1}{2}} \sin\left(\frac{m}{n} \cdot 2k\pi\right) \frac{\cos\left(\frac{m}{n}\pi\right)}{\sin^2\left(\frac{m}{n}\pi\right)}.
\end{equation*}
In Lemma~\ref{lem:Tk-nonzero} below, we will prove that, for any odd integer $n$,
\[
T_k\neq0,\quad k=1,2,\dots,\frac{n-1}{2}.
\]
Then by the symmetry relation $T_{n-k}=-T_k$, we further have
\[
T_k\neq0,\quad k=\frac{n+1}{2},\frac{n+3}{2},\dots,n-1.
\]
In conclusion, the matrix $B$ possesses exactly one zero eigenvalue.

{\bf Case 2: $n$ is even.}
The eigenvalues of \(B\) are given by (\(k = 0,1,2,\dots,n-1\)):
\[
\lambda_k = b_1 + b_2\omega^k + b_3\omega^{2k} + \cdots + b_n\omega^{(n-1)k}.
\]
Grouping symmetric terms:
\[
\lambda_k = \left(b_2\omega^k + b_n\omega^{(n-1)k}\right) + \left(b_3\omega^{2k} + b_{n-1}\omega^{(n-2)k}\right) + \cdots + \left(b_{\frac{n}{2}}\omega^{\left(\frac{n}{2}-1\right)k} + b_{\frac{n}{2}+2}\omega^{\left(\frac{n}{2}+1\right)k}\right) + b_{\frac{n}{2}+1}\omega^{\frac{n}{2}k}.
\]
Using \(\omega^{m k} - \omega^{(n-m)k} = 2i\sin\left(\frac{2mk\pi}{n}\right)\), \(\omega^{\frac{n}{2}k} = (-1)^k = \cos(k\pi) + i\sin(k\pi)\)
and $b_i=-b_{n+2-i}$ for $2\le i\le n$, 
we get:
\begin{equation}
\lambda_k = 2i\sin\left(\frac{2k\pi}{n}\right)b_2 + 2i\sin\left(\frac{4k\pi}{n}\right)b_3 + \cdots + 2i\sin\left(\frac{(n-2)k\pi}{n}\right)b_{\frac{n}{2}} + \left(\cos(k\pi) + i\sin(k\pi)\right)b_{\frac{n}{2}+1}.
\end{equation}
Note \(b_{\frac{n}{2}+1} = 0\).

For \(k=0\) or \(k=\frac{n}{2}\), we have \(\lambda_0 = \lambda_{\frac{n}{2}} = 0\).
For \(k\neq 0,\frac n2\), Lemma~\ref{lem:Tk-nonzero} gives \(T_k\neq0\).
Hence the corresponding eigenvalue \(\lambda_k\) is nonzero.
Hence, when \(n\) is even, \(B\) has exactly two zero eigenvalues.
\end{proof}

\begin{lemma}\label{lem:Tk-nonzero}
Let \(n\geq 3\), and set
\[
    h=\left\lfloor\frac{n-1}{2}\right\rfloor,
    \qquad
    \theta_m=\frac{m\pi}{n},\quad 1\leq m\leq h.
\]
For \(0\leq k\leq n-1\), define
\begin{equation}\label{T_k}
    T_k
    =
    \sum_{m=1}^{h}
    \frac{
    \sin(2k\theta_m)\cos\theta_m
    }{
    \sin^2\theta_m
    }.
\end{equation}
Then we have
    \[
        T_{n-k}=-T_k,\qquad \forall 1\le k\le n-1,
    \]
and the following statements hold.

\begin{enumerate}
    \item If \(n\) is odd, then
    \[
        T_k>0,\qquad 1\leq k\leq \frac{n-1}{2}.
    \]
    In particular, \(T_k\neq0\) for \(1\leq k\leq n-1\).

    \item If \(n\) is even, then
    \[
        T_0=T_{n/2}=0,
    \]
    and
    \[
        T_k>0,\qquad 1\leq k\leq \frac n2-1.
    \]
    Consequently,
    \[
        T_k\neq0,\qquad k\neq0,\frac n2 .
    \]
\end{enumerate}
\end{lemma}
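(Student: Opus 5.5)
The plan has two parts. The symmetry relation and the vanishing statements come first and are immediate. Positivity of $T_k$ is the real content, and I would obtain it from a trigonometric rewriting that isolates a large positive $1/\sin$ sum.

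\textbf{Step 1 (symmetry and vanishing).} Since $2(n-k)\theta_m=2m\pi-2k\theta_m$, we have $\sin(2(n-k)\theta_m)=-\sin(2k\theta_m)$ termwise. Hence $T_{n-k}=-T_k$. Trivially $T_0=0$. For $n$ even, $2\cdot\frac n2\theta_m=m\pi$, so every summand of $T_{n/2}$ vanishes. It therefore suffices to prove $T_k>0$ for $1\le k\le h$. For $n$ even this range is $1\le k\le \frac n2-1$, and $h=\frac n2-1$ in that case.

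\textbf{Step 2 (symmetrize the sum).} The function $f(\theta)=\sin(2k\theta)\cos\theta/\sin^2\theta$ satisfies $f(\pi-\theta)=f(\theta)$. For $n$ even the extra term at $\theta=\pi/2$ is $0$. Therefore $T_k=\frac12\sum_{m=1}^{n-1}f(\theta_m)$ in both parities. This lets me use full-period discrete sums, such as $\sum_{m=1}^{n-1}\cos(2j\theta_m)=-1$ for $n\nmid j$, together with the standard closed forms for $\sum_m\sin(p\theta_m)$.

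\textbf{Step 3 (telescoping rewrite).} I write $\frac{\sin 2k\theta}{\sin\theta}=2\sum_{j=1}^{k}\cos((2j-1)\theta)$. I then use $\cos((2j-1)\theta)\cos\theta=\tfrac12[\cos 2j\theta+\cos(2j-2)\theta]$ and $\cos 2j\theta=1-2\sin^2 j\theta$. Together these give
\[
\frac{\sin 2k\theta\cos\theta}{\sin^2\theta}
=\frac{2k}{\sin\theta}-2\sum_{j=1}^{k}\frac{\sin^2 j\theta+\sin^2 (j-1)\theta}{\sin\theta}.
\]
Each quotient $\sin^2 j\theta/\sin\theta=\sin j\theta\sum_{i=1}^{j}\sin((2i-1)\theta)$ is a trigonometric polynomial with no singularity. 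Summed over $m=1,\dots,n-1$, it reduces by the product-to-sum formulas to explicit sums of cotangents of the form $\cot(p\pi/2n)$ with odd $p$. The singular part contributes $k\sum_{m=1}^{n-1}\csc\theta_m$, which is of size $\frac{2kn}{\pi}\log n$.

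\textbf{Step 4 (comparison, the main obstacle).} The remaining task is to show that the positive term $k\sum_m\csc\theta_m$ strictly dominates the subtracted polynomial sums for every $1\le k\le h$. For small $k$ this is clear, because the polynomial sums are $O(kn\cdot\text{const})$ with no logarithm. For $k$ comparable to $n/2$ the estimate is delicate, and this is where I expect the real work. My first attempt would be Abel summation in $m$ with $g(\theta)=\cos\theta/\sin^2\theta$, which is positive, decreasing and convex on $(0,\pi/2]$. Concretely, I would write $T_k=\sum_m \Delta^2 g(\theta_m)\,K_m+\text{boundary terms}$, where $K_m$ are the Fejér-type double partial sums of $\sin(2kj\pi/n)$. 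These double sums are nonnegative, since $\sum_{j\le m}\sum_{i\le j}\sin(2ki\pi/n)\ge0$ is the classical Fejér–Jackson positivity. Convexity gives $\Delta^2 g\ge0$, so this route would yield $T_k\ge0$ with strictness from the first term. The checks I anticipate needing are the signs of the boundary terms at $m=h$, and the fact that the relevant angle $2k\pi/n$ lies in $(0,2\pi)$, so that the Fejér–Jackson inequality applies after rescaling.
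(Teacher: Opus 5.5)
\paragraph{Where the proposal falls short.}
Your Step 1 and the reduction to $1\le k\le h$ are fine. The positivity of $T_k$, however, is the entire content of the lemma, and it is not established. Step 4 is only a plan, and as stated the plan fails in one case. With $g_m=g(\theta_m)$, two Abel summations give
\[
T_k=\sum_{m=1}^{h-2}\Delta^2 g_m\,K_m+(g_{h-1}-g_h)K_{h-1}+g_h S_h,\qquad \alpha=\frac{2k\pi}{n},\quad S_h=\sum_{j=1}^{h}\sin(j\alpha),
\]
where $K_m=\sum_{j\le m}\sum_{i\le j}\sin(i\alpha)=\frac{(m+1)\sin\alpha-\sin((m+1)\alpha)}{4\sin^2(\alpha/2)}$.

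The $K_m$ are indeed nonnegative. But this is the conjugate Fej\'er kernel, i.e.\ $N\sin\alpha\ge\sin N\alpha$, not Fej\'er--Jackson. It holds only for $\alpha\in(0,\pi)$; on $(\pi,2\pi)$ one has $K_m\le 0$. So the condition to check is $2k\pi/n<\pi$, not $2k\pi/n<2\pi$. Fortunately $k\le h$ does guarantee it.

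The genuine obstruction is the last boundary term. For $n$ odd and $k$ even one finds $S_h=-\tfrac12\tan\frac{k\pi}{2n}<0$. For example, with $n=5$, $k=2$ we get $S_2=\sin\frac{4\pi}{5}+\sin\frac{8\pi}{5}<0$. So $g_hS_h<0$, and the sign argument proves nothing in this case. For $n$ even, $S_h\in\{0,\cot\frac{k\pi}{n}\}$, and your route does close. The odd case can be rescued by an additional quantitative estimate of $(g_{h-1}-g_h)K_{h-1}$ against $g_h|S_h|$ using these closed forms, but the proposal does not contain it.

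The size comparison in Steps 3--4 cannot replace that estimate. The subtracted sums do carry logarithms, since $\cot\frac{(2i-1)\pi}{2n}\approx\frac{2n}{(2i-1)\pi}$. Moreover, for $k\approx n/2$ both sides are of order $n^2\log n$, while $T_h$ is only of order $n$. A minor slip: the identity is $\sin^2 j\theta/\sin\theta=\sum_{i=1}^{j}\sin((2i-1)\theta)$, without the extra factor $\sin j\theta$.

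\paragraph{The missing idea.}
Use concavity in $k$ rather than convexity in $m$; your Step 3 already contains it. Sum your identity exactly, using $\sum_{m=1}^{n-1}\sin(p\theta_m)=\cot\frac{p\pi}{2n}$ for odd $p$. Writing $S=\sum_{m=1}^{n-1}\csc\theta_m$, $c_i=\cot\frac{(2i-1)\pi}{2n}$ and $C_j=c_1+\cdots+c_j$, you get
\[
T_k=kS-2\sum_{j=1}^{k}C_j+C_k,
\qquad\text{hence}\qquad
T_{k+1}-2T_k+T_{k-1}=-(c_k+c_{k+1})<0\quad(1\le k\le h-1).
\]
This is precisely the paper's proof. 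The paper computes the second difference in $k$ directly, where the factor $-4\sin^2\theta_m$ cancels the singular weight, and evaluates the resulting sine sums as these cotangents.

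So $T_0,\dots,T_h$ is strictly concave with $T_0=0$, and it remains only to check $T_h>0$. For $n$ odd, $\sin(2h\theta_m)=(-1)^{m+1}\sin\theta_m$; for $n$ even, $\sin(2h\theta_m)=(-1)^{m+1}\sin 2\theta_m$. Thus $T_h$ is an alternating sum of a positive decreasing sequence, namely $\cot\theta_m$ or $2\cos^2\theta_m/\sin\theta_m$ respectively, and is therefore positive.
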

\begin{proof}
First, since $\sin(2(n-k)\theta_m)=\sin(2m\pi-2k\theta_m)=-\sin(2k\theta_m)$,
from the definition \eqref{T_k}, we have
\[
    T_{n-k}=-T_k.
\]

We now prove the positivity in the range $1\leq k\leq h$.
For convenience, set \(T_0=0\). For \(1\leq k\leq h-1\), we compute the
second difference of \(T_k\). Using
\[
    \sin((2k+2)\theta)-2\sin(2k\theta)+\sin((2k-2)\theta)
    =
    -4\sin(2k\theta)\sin^2\theta,
\]
we obtain
\[
\begin{aligned}
    T_{k+1}-2T_k+T_{k-1}
    &=
    -4\sum_{m=1}^h \sin(2k\theta_m)\cos\theta_m        \\
    &=
    -2\sum_{m=1}^h
    \left[
        \sin((2k+1)\theta_m)+\sin((2k-1)\theta_m)
    \right].
\end{aligned}
\]
We now distinguish two cases.

If \(n\) is odd, then \(h=(n-1)/2\). By the finite sine-sum formula,
\[
    \sum_{m=1}^h \sin(mx)
    =
    \frac{\sin\left(\frac{hx}{2}\right)
          \sin\left(\frac{(h+1)x}{2}\right)}
         {\sin\left(\frac{x}{2}\right)}.
\]
Taking \(x=r\pi/n\) and \(h=(n-1)/2\), we get
\[
\begin{aligned}
    \sum_{m=1}^h \sin(r\theta_m)
    &=
    \frac{
    \sin\left(\frac{(n-1)r\pi}{4n}\right)
    \sin\left(\frac{(n+1)r\pi}{4n}\right)}
    {\sin\left(\frac{r\pi}{2n}\right)}  \\
    &=
    \frac{
    \frac12\left[
    \cos\left(\frac{r\pi}{2n}\right)
    -
    \cos\left(\frac{r\pi}{2}\right)
    \right]}
    {\sin\left(\frac{r\pi}{2n}\right)}.
\end{aligned}
\]
For every odd integer \(r\) with
\(1\leq r\leq n-2\), \(\cos(r\pi/2)=0\). 
Therefore, since \(0<r\pi/(2n)<\pi/2\), we have
\[
    \sum_{m=1}^h \sin(r\theta_m)
    =
    \frac12\cot\left(\frac{r\pi}{2n}\right)>0,
\]
Hence
\[
    T_{k+1}-2T_k+T_{k-1}<0,
    \qquad 1\leq k\leq h-1.
\]

If \(n\) is even, then \(h=n/2-1\). For every odd integer \(r\), 
as a similar argument in the oddness case of $n$, 
one has
\[
\begin{aligned}
    \sum_{m=1}^h \sin(r\theta_m)
    &=
    \frac12\cot\left(\frac{r\pi}{2n}\right)
    -
    \frac12(-1)^{(r-1)/2}.
\end{aligned}
\]
Therefore, for \(1\leq k\leq h-1\),
\[
\begin{aligned}
&\sum_{m=1}^h
\left[
    \sin((2k+1)\theta_m)+\sin((2k-1)\theta_m)
\right]                                                    \\
&\qquad =
\frac12\cot\left(\frac{(2k+1)\pi}{2n}\right)
+
\frac12\cot\left(\frac{(2k-1)\pi}{2n}\right)>0.
\end{aligned}
\]
Thus again
\[
    T_{k+1}-2T_k+T_{k-1}<0,
    \qquad 1\leq k\leq h-1.
\]
In both cases, the sequence
\[
    T_0,T_1,\ldots,T_h
\]
is strictly concave.

It remains to check the endpoint \(T_h\). If \(n\) is odd, then
\(h=(n-1)/2\), and hence
\[
    \sin(2h\theta_m)
    =
    \sin\left(\frac{(n-1)m\pi}{n}\right)
    =
    (-1)^{m+1}\sin\theta_m.
\]
Therefore
\[
    T_h
    =
    \sum_{m=1}^h (-1)^{m+1}\cot\theta_m.
\]
Since \(\cot\theta\) is strictly decreasing on \((0,\pi/2)\), this alternating
sum is positive. Hence \(T_h>0\).

If \(n\) is even, then \(h=n/2-1\), and
\[
    \sin(2h\theta_m)
    =
    \sin\left(\frac{(n-2)m\pi}{n}\right)
    =
    (-1)^{m+1}\sin(2\theta_m).
\]
Thus
\[
    T_h
    =
    2\sum_{m=1}^h
    (-1)^{m+1}
    \frac{\cos^2\theta_m}{\sin\theta_m}.
\]
The function
\[
    f(\theta)=\frac{\cos^2\theta}{\sin\theta}
\]
is positive and strictly decreasing on \((0,\pi/2)\). Therefore the above
alternating sum is again positive, and so \(T_h>0\).

Since \(T_0=0\), \(T_h>0\), and the sequence \(T_0,T_1,\ldots,T_h\) is
strictly concave, every intermediate term lies strictly above the chord joining
\((0,T_0)\) and \((h,T_h)\). Hence
\[
    T_k>0,\qquad 1\leq k\leq h.
\]

If \(n\) is even, then also
\[
    T_{n/2}=0,
\]
because \(\sin(n\theta_m)=\sin(m\pi)=0\). Therefore, for even \(n\),
\[
    T_k\neq0,\qquad k\neq0,\frac n2.
\]
This completes the proof.
\end{proof}

\end{document}